\def\squeezed{1} % for "squeezed" version , set \=squeze = 1
\definecolor{myblue}{rgb}{0,0,0.5}
\newcommand{\zo}{\{0,1\}}
\newcommand{\eps}{\epsilon}
\newcommand{\dans}{\to}
\newcommand{\ie}{\emph{i.e.} }
\newcommand{\Exp}{\mathbb{E}}
\newcommand{\IC}{\mathsf{IC}}
\newcommand{\err}{\mathsf{err}}
\newcommand{\srec}{\mathsf{srec}}
\newcommand{\prt}{\mathsf{prt}}
\newcommand{\rect}{\mathsf{rec}}
\newcommand{\bprt}{\bar{\mathsf{prt}}}
\newcommand{\calU}{\mathcal{U}}
\newcommand{\Bad}{B}
\newcommand{\oB}{\overline{\Bad}}
\newcommand{\Al}{\mathsf{A}}
\newcommand{\Bo}{\mathsf{B}}
\newcommand{\calA}{\mathcal{A}}
\newcommand{\calB}{\mathcal{B}}
\newcommand{\calX}{\mathcal{X}}
\newcommand{\calY}{\mathcal{Y}}
\newcommand{\rel}{\mathcal{R}}
\newcommand{\setin}{\mathcal{I}}
\newcommand{\setout}{\mathcal{Z}}
\newcommand{\VSP}{\mathrm{VSP}}
\newcommand{\tVSP}{\widetilde{\mathrm{VSP}}}
\newcommand{\GHD}{\mathrm{GHD}}
\newcommand{\ORT}{\mathrm{ORT}}
\newcommand{\comment}[1]{}
\newcommand{\alea}{r_\pi}
\newcommand{\alean}{r_{\pi'}}
\titlespacing{\section}{0pt}{4pt}{4pt}
\titlespacing{\subsection}{0pt}{4pt}{4pt}
\titlespacing{\subsubsection}{0pt}{4pt}{4pt}
\author{Iordanis Kerenidis   \thanks{CNRS, LIAFA, Universit\'e Paris 7 and CQT, NUS Singapore.  \url{jkeren@liafa.univ-paris-diderot.fr}} \and 
	Sophie Laplante \thanks{LRI, Universit\'e Paris-Sud 11. \url{laplante@lri.fr}} \and 
	Virginie Lerays  \thanks{LRI, Universit\'e Paris-Sud 11. \url{virginie.lerays@lri.fr}} 
  \and J\'er\'emie Roland \thanks{Universit\'e Libre de Bruxelles, QuIC, Ecole Polytechnique de Bruxelles. \url{jroland@ulb.ac.be}}
  \and David Xiao \thanks{CNRS, LIAFA, Universit\'e Paris 7.  \url{dxiao@liafa.univ-paris-diderot.fr}}}
\begin{document}

\title{Lower bounds on information complexity \\via zero-communication protocols \\
and applications
}

\maketitle

\thispagestyle{empty}
\setcounter{page}{0}

\begin{abstract}
  We show that almost all known lower bound methods for communication
  complexity are also lower bounds for the information complexity.  In particular, we
  define a relaxed version of the \emph{partition bound} of Jain and
  Klauck \cite{JK10} and prove that it lower bounds the information
  complexity of any function.  Our relaxed partition bound subsumes
  all norm based methods (e.g. the $\gamma_2$ method) and
  rectangle-based methods (e.g. the rectangle/corruption bound, the
  smooth rectangle bound, and the discrepancy bound), except the
  partition bound.

  Our result uses a new connection between rectangles
  and \emph{zero-communication} protocols where the players can either
  output a value or abort.  We prove the following compression lemma:
  given a protocol for a function $f$ with information complexity $I$,
  one can construct a zero-communication protocol that has non-abort
  probability at least $2^{-O(I)}$ and that computes $f$ correctly
  with high probability conditioned on  not aborting. Then, we show how such a
  zero-communication protocol relates to the relaxed partition
  bound.
  % captures exactly the advantage on computing $f$ (over a random
  % guess) of such a zero-communication protocol.

  We use our main theorem to resolve three of the open questions
  raised by Braverman \cite{Bra11b}.  First, we show that the
  information complexity of the Vector in Subspace Problem \cite{KR11}
  is $\Omega(n^{1/3})$, which, in turn, implies that there exists an
  exponential separation between quantum communication complexity and
  classical information complexity. Moreover, we provide an
  $\Omega(n)$ lower bound on the information complexity of the Gap
  Hamming Distance Problem.

%We use the linear programming formulation of Jain and Klauck (CCC 2010) for the partition bound, and the compression techniques from Braverman and Weinstein (ECCC 2011) to obtain zero-communication protocols where the players are allowed to abort.  These protocols, which we call labeling games, allow us to find a feasible solution to the relaxed partition bound, for which the objective function is bounded above by the information cost.
\end{abstract}
\newpage
%%%%%%%%%%%%%%%%%%%%%%%%%%%%%%%%%%%%%%%%%%%%%%%%%%%%%%%
%% INTRODUCTION
%%%%%%%%%%%%%%%%%%%%%%%%%%%%%%%%%%%%%%%%%%%%%%%%%%%%%%%

\section{Introduction}

Information complexity is a way of measuring the amount of information Alice and Bob must reveal to each other in order to solve a distributed problem. The importance of this notion has been made apparent in recent years through a flurry of results that relate the information complexity of a function and its communication complexity.  One of the main applications of information complexity is to prove direct sum theorems in communication complexity, namely to show that computing $k$ copies of a function costs $k$ times the communication of computing a single copy.
Chakrabarti, Shi, Wirth and Yao \cite{CSWY01} used information complexity to prove a direct sum theorem for simultaneous messages protocols (their notion is now usually called the \emph{external} information complexity, whereas in this paper we work exclusively with what is often called the \emph{internal} information complexity). Bar-Yossef et al.~\cite{BYJKS04}, used the information cost in order to prove a linear lower bound on the two-way randomized communication complexity of Disjointness. More recently, information-theoretic techniques enabled the proof of the first non-trivial direct sum result for general two-way randomized communication complexity: the randomized communication complexity of $k$ copies of a function $f$ is at least $\sqrt{k}$ times the randomized communication complexity of~$f$ \cite{BBCR10}. Then, Braverman and Rao~\cite{BR11}, showed a tight relation between the amortized distributional communication complexity of a function and its internal information cost. Braverman \cite{Bra11b}, defined interactive information complexity, a notion which is independent of the prior distribution of the inputs and proved that it is equal to the amortized communication complexity of the function.  Braverman and Weinstein \cite{BW11} showed that the information complexity is lower bounded by discrepancy. 

%\comm{Add a few words about DST and information privacy}

%The notion of information complexity is also closely related to the notion of privacy in communication.  The information cost of a protocol measures how much information is leaked to the other player about a player's input.  Protocols that achieve this bound have maximum privacy for the users.

The main question pertaining to information complexity is its relation to communication complexity.  On the one hand, the information complexity provides a lower bound on the communication complexity of the function, since there cannot be more information leaked than the length of the messages exchanged. However, it is still open, whether the information complexity of a function can be much smaller than its communication complexity or whether the two notions are basically equivalent. In order to make progress towards this question, it is imperative to provide strong lower bounds for information complexity, and more specifically to see whether the lower bound methods for communication complexity can be compared to the model of information complexity. 

Lower bound methods in communication complexity can
be seen to fall into three main categories: the norm 
based methods, such as the $\gamma_2$ method of Linial and Shraibman~\cite{LS09}
(see Lee and Shraibman's survey for an overview~\cite{LS09b});
the rectangle based methods, such as discrepancy and the rectangle bound;
and, of course, the information theoretic methods, among which, information complexity.
Recently, Jain and Klauck \cite{JK10} introduced the smooth rectangle bound, as well as the stronger partition bound, and showed
that they subsume both $\gamma_2$ and the rectangle bound~\cite{JK10}. 

The first lower bound on information complexity was proved by Braverman \cite{Bra11b}, who showed that it is lower bounded by the logarithm of the communication complexity. Recently, Braverman and Weinstein showed that the discrepancy method lower bounds the information complexity~\cite{BW11}.
Their result follows from a compression lemma for protocols: a protocol for a function $f$ that leaks $I$ bits of information implies the existence of a protocol with communication complexity $O(I)$ and advantage on computing $f$ (over a random guess) of $2^{-O(I)}$.  

% \comm{Should we say a few words about~\cite{CKW12}?}

\subsection{Our results}

In this paper, we show that all known lower bound methods for communication complexity, with the notable exception of the partition bound, generalize to information complexity. More precisely, we introduce the \emph{relaxed partition bound} (in \autoref{def:relaxed-prt}) denoted by $\bprt^\mu_\eps(f)$, which depends on the function to be computed $f$, the input distribution $\mu$, and the error parameter $\eps$, and
such that the distributional communication complexity
$D^\mu_\eps(f) \geq \log(\bprt^\mu_\eps(f))$ for any $f$.   We prove that the information complexity of a function $f$ is bounded below by the relaxed partition bound:

\begin{theorem}
\label{thm:mainthm}
There is a positive constant $C$ such that for all functions $f:\setin\to\setout$, all $\eps,\delta \in (0, \tfrac{1}{2}]$, and all distributions $\mu$, we have $\IC_\mu(f, \eps) \geq \tfrac{\delta^2}{C} \cdot \left(\log \bprt^{\mu}_{\eps+3\delta}(f)-\log|{\setout}| \right) - \delta$.
\end{theorem}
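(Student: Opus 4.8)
The plan is to follow the two‑step route announced in the abstract: (i) a compression lemma producing, from a low‑information protocol for $f$, a zero‑communication protocol with not‑too‑small non‑abort probability, and (ii) a direct translation of such a zero‑communication protocol into a feasible solution of the linear program defining $\bprt^\mu$. Concretely, I would start from a protocol $\pi$ for $f$ of distributional error at most $\eps$ under $\mu$ whose internal information cost $I(X;\Pi\mid Y)+I(Y;\Pi\mid X)$ is within an arbitrarily small slack of $\IC_\mu(f,\eps)=:I$, apply the compression lemma to obtain a zero‑communication protocol with $\mu$‑averaged non‑abort probability $\eta\ge 2^{-O(I/\delta^2)}$ that, conditioned on not aborting, agrees with $f$ with probability $\ge 1-(\eps+3\delta)$, translate this into a feasible point of the $\bprt^\mu_{\eps+3\delta}$ program of objective value $|\setout|/\eta$, and conclude $\log\bprt^\mu_{\eps+3\delta}(f)\le\log|\setout|+O(I/\delta^2)$; rearranging, taking the slack to zero, and absorbing constants into $C$ and the additive $-\delta$ yields the theorem.

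Step (ii) is the easy direction. Consider a zero‑communication protocol in which Alice and Bob, on inputs $x$ and $y$, use shared randomness $r$ and each output either a symbol in $\setout$ or $\bot$, the protocol being declared to abort unless both output the same $z\in\setout$. For fixed $r$ and fixed $z$, the inputs of Alice on which she outputs $z$ and the inputs of Bob on which he outputs $z$ together form a combinatorial rectangle $R_{z,r}$, so the probability over $r$ that the protocol outputs $z$ on $(x,y)$ equals $\sum_R \Pr_r[R_{z,r}=R]\cdot\mathbf{1}[(x,y)\in R]$. Setting $w_{z,R}:=\Pr_r[R_{z,r}=R]$ one has $\sum_{z,R}w_{z,R}\le|\setout|$; moreover, for every $(x,y)$, $\sum_{R\ni(x,y)}w_{f(x,y),R}$ is the probability the protocol outputs the correct value on $(x,y)$, while $\sum_{z,R\ni(x,y)}w_{z,R}$ is the non‑abort probability on $(x,y)$. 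Taking $\mu$‑expectations of these two identities and rescaling every weight by $1/\eta$ produces a feasible point of the program of \autoref{def:relaxed-prt} at error $\eps+3\delta$ — its two $\mu$‑averaged constraints, on the weight placed through a typical input on correctly‑labelled rectangles and on all rectangles, become $\ge 1-(\eps+3\delta)$ and $\le 1$ respectively — of objective value $|\setout|/\eta$, so $\bprt^\mu_{\eps+3\delta}(f)\le|\setout|/\eta$. (If the compressed protocol uses only private randomness, one instead writes the probability that both players output $z$ on $(x,y)$ as a product $g_z(x)h_z(y)$ with $g_z,h_z\in[0,1]$ and uses $g_z(x)h_z(y)=\int_0^1\!\!\int_0^1\mathbf{1}[g_z(x)\ge s]\,\mathbf{1}[h_z(y)\ge t]\,ds\,dt$, which exhibits it as a nonnegative combination of rectangle indicators of total weight $1$.)

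Step (i), the compression lemma, is where essentially all the work — and the main obstacle — lies; it is in the spirit of the compression of \cite{BW11} but must produce a protocol with \emph{no} communication. I would simulate $\pi$ by a correlated‑rejection‑sampling (``guess and check'') scheme: using shared randomness the players repeatedly draw candidate transcripts from a distribution both can evaluate (the prior marginal of $\Pi$, or a suitable surrogate when $\mu$ is not a product distribution), each player independently accepts a candidate with probability proportional to the likelihood ratio of that transcript against her own conditional view of $\Pi$, and on the event that both accept — and that the accepted transcript passes a truncation test — each player outputs whatever $\pi$ outputs on it. Conditioned on both accepting, the accepted transcript is distributed as the true transcript of $\pi$ on the actual input pair up to the truncation bias, so on non‑abort the output agrees with $f$ with probability $\ge 1-\eps$ up to a further $O(\delta)$ from truncation, and the non‑abort probability is the product of the two normalizing constants. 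The delicate point is that these constants are \emph{a priori} governed by worst‑case, one‑sided log‑likelihood ratios whereas $I$ only bounds their \emph{averages}: the fix is to truncate the likelihood ratios at thresholds proportional to the local information, use the rectangle structure of transcripts to identify the relevant averaged quantities with $I(X;\Pi\mid Y)$ and $I(Y;\Pi\mid X)$, bound by Markov the probability that truncation is triggered under the true joint distribution, and pull the $\mu$‑average through the exponent using convexity of $t\mapsto 2^{-t}$. Carrying this out carefully — controlling the two players' truncations simultaneously against the true joint transcript distribution — is what forces both the $3\delta$ extra error and the $\delta^2$ (rather than $\delta$) in the exponent, and is the hard part of the proof.
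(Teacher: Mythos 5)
Your overall architecture is exactly the paper's (compress $\pi$ into a zero-communication protocol with abort, then read off a feasible point of the LP from the labelled rectangles induced by the shared randomness), and your step~(i) correctly identifies the Braverman--Weinstein-style correlated rejection sampling with likelihood ratios truncated at $\Delta=O(I/\delta)$ and a Markov argument over the divergences whose $\mu$-average is $\IC_\mu(\pi)$. But there are two concrete gaps, both at the interface between the compression lemma and the LP. First, you have misread \autoref{def:relaxed-prt}: the second constraint is \emph{pointwise} --- for every $(x,y)$ the total weight of labelled rectangles containing $(x,y)$ is at most $\eta$ --- not ``$\mu$-averaged'' as you state. Consequently the compression lemma you announce, which only guarantees a $\mu$-averaged non-abort probability $\geq 2^{-O(I/\delta^2)}$, is not strong enough to produce a feasible point. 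What is actually needed is a \emph{worst-case} upper bound $\Pr[\pi'(x,y)\neq\bot]\leq(1+\delta)\lambda$ for all $(x,y)$ together with a near-matching $\mu$-average lower bound $\geq(1-\delta)\lambda$ (these are \autoref{eq:compressbadbot} and \autoref{eq:compressavgbot}); the tension between these two is what distinguishes $\bprt$ from a mere averaged rectangle bound and is what forces the careful two-sided analysis. Your scheme does contain the raw material for the worst-case bound (each sampling trial is jointly accepted with probability at most $2^{-2\Delta}/|\calU|$ for \emph{every} input, by the truncation), but you never identify it as a requirement, and your step~(ii) as written would only certify a weaker, averaged variant of the LP.

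Second, the zero-communication coordination step is missing, and it is the main point where this proof must depart from \cite{BW11}. A single run of the sampling experiment is jointly accepted with probability about $2^{-2\Delta}/|\calU|$; the factor $1/|\calU|=2^{-\mathrm{CC}(\pi)}$ is fatal, since it makes the efficiency exponentially small in the \emph{communication} rather than the information. One must therefore repeat the experiment $T\approx|\calU|2^{\Delta}$ times, after which Alice and Bob hold different sets $\calA,\calB$ of accepted trials and must agree on a common element of $\calA\cap\calB$ \emph{without communicating}. Your plan says ``on the event that both accept'' but never specifies which accepted trial each player outputs from; in \cite{BW11} Alice transmits a hash of her index, which is forbidden here. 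The paper's fix is to draw a shared random hash $h:[T]\to\zo^{k}$ and a shared target $r$: Alice aborts unless her first accepted index hashes to $r$, and Bob uses his first accepted index hashing to $r$. This costs a further factor $2^{-k}\approx 2^{-\Delta}$ in the efficiency but removes the $1/|\calU|$, and the ensuing collision analysis (the event $B_C$) is a nontrivial part of proving both the worst-case upper bound and the closeness of the conditional output distribution to $\Pi_{x,y}$. Without some such mechanism your compressed protocol is either ill-defined or has unusably small efficiency.
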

Since we show in \autoref{claim:srec-bprt-prt} that the relaxed partition bound subsumes the norm based methods (e.g. the $\gamma_2$ method) and the rectangle-based methods (e.g. the rectangle/corruption bound, the smooth rectangle bound, and the discrepancy bound), all of these bounds are also lower bounds on the information complexity. Moreover, together with the direct sum theorem for information complexity, our main result implies a direct sum theorem on communication complexity for many notable functions (see \autoref{cor:dst}).

\paragraph{Technique.} The key idea of our result is a new connection between communication rectangles and zero-communication protocols, where the players can either output a value or abort but \emph{without communicating}. A priori, it is surprising that protocols with no communication can actually provide some insight on the communication or information complexity of a function. However, this model, which has been extensively used in quantum information for the study of non-local games and Bell inequalities, turns out to be  a very powerful tool for the study of classical communication and information complexity. The communication complexity of simulating distributions is known to be
related to the probability of not aborting in zero-communication protocols that can abort~\cite{GG99,M01,BHMR03,BHMR06}.  
More recently connections have been shown for specific lower bound methods.  It has been shown that zero-communication protocols with error give rise to the factorization norm method~\cite{DKLR11}, and the connection between the partition bound and zero-communication protocols with abort was studied in \cite{LLR12}. 
 
In a deterministic zero-communication protocol with abort, each of the two players looks at their input and decides either to abort the protocol or to output some value $z$.  The output of the protocol is $z$ if both players agree on $z$, or it aborts otherwise.
% and they wish to simulate a communication protocol without using any communication.
% They first use the shared randomness to choose a transcript, and if the transcript is consistent with their input, then they output according to the transcript; otherwise, if a player finds the transcript inconsistent, then the player aborts the protocol.
It is easy to see that for any deterministic zero-communication protocol with abort, the set of inputs where both players choose to output $z$ forms a rectangle, and so the protocol is characterized by a set of rectangles each labeled by an output.  In a randomized protocol, we have instead a distribution over labeled rectangles. 

This connection between rectangles and zero-communication protocols with abort allows us to obtain our lower bound for information complexity from a new compression lemma for protocols (\autoref{lem:compress}):  a protocol for a function $f$ that leaks $I$ bits of information implies the existence of a {\em zero}-communication protocol that has non-abort probability at least $2^{-O(I)}$ and that computes $f$ correctly with high probability when not aborting. Our main theorem follows from this new compression. 

The technical tools we use are drawn from Braverman \cite{Bra11b} and in particular Braverman and Weinstein \cite{BW11}.  We describe the difference between our compression and that of \cite{BW11}. There, they take a protocol for computing a function $f$ that has information cost $I$ and compress it to a protocol with communication $O(I)$ and advantage of computing $f$ of $2^{-O(I)}$ (\ie the error increases considerably).  % More precisely, they show that the information cost of a function is lower bounded by its distributional communication complexity with advantage $2^{-O(I)}$. 
Then, they apply the discrepancy method, which can handle such small advantage.

In our compression, we suppress the communication entirely, and, moreover, we only introduce an arbitrarily small error since the compressed protocol aborts when it does not believe it can correctly compute the output. 
%More precisely, we take a protocol for computing $f$ that has the information cost $I$, and we reduce the communication to zero, while first ensuring a non-abort probability of at least $2^{-O(I)}$ and second keeping the advantage of computing $f$ (conditioned on not aborting) roughly the same as in the original protocol. 
This compression enables us to provide much sharper lower bounds on the information complexity and in particular, the lower bound in terms of the relaxed partition bound.

\paragraph{Applications.} Our lower bound implies that for most functions for which there exists a lower bound on their communication complexity, the same bound extends to their information complexity. Specifically, we can apply our lower bound in order to resolve three of the open questions in~\cite{Bra11b}.

First, we show that there exists a function $f$, such that the quantum communication complexity of $f$ is exponentially smaller than the information complexity of $f$ (Open Problem 3 in~\cite{Bra11b}).
\begin{theorem}\label{VSP}
There exists a function $f$, s.t. for all $\epsilon \in (0, \tfrac{1}{2}), Q(f,\epsilon) = O(\log(\IC(f,\epsilon)))$.
\end{theorem}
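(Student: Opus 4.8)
The plan is to exhibit a concrete function $f$ which has small quantum communication complexity but large classical information complexity, and to derive the lower bound on the information complexity from \autoref{thm:mainthm}. The natural candidate, already flagged in the abstract, is the Vector in Subspace Problem $\VSP$ of Klartag and Regev~\cite{KR11}: there one is given (roughly) an $n$-dimensional unit vector on Alice's side and an $n/2$-dimensional subspace on Bob's side, and must decide whether the vector is close to the subspace or close to its orthogonal complement. It is known that $Q(\VSP,\eps) = O(\log n)$, essentially because Alice can send $O(\log n)$ qubits encoding the vector and Bob performs the projective measurement onto his subspace; this is the quantum upper bound we will invoke as a black box from~\cite{KR11}.

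The first step is therefore to recall or cite the $O(\log n)$ quantum protocol for $\VSP$, so that $Q(\VSP,\eps)=O(\log n)$. The second and main step is to lower bound $\IC(\VSP,\eps)$, and here I would use \autoref{thm:mainthm}: it suffices to find a distribution $\mu$ under which $\log \bprt^\mu_{\eps+3\delta}(\VSP)$ is large, say $\Omega(n^{1/3})$, since $|\setout|=2$ contributes only a constant. By \autoref{claim:srec-bprt-prt} the relaxed partition bound subsumes the smooth rectangle bound and the discrepancy bound, so it is enough to invoke a known $\Omega(n^{1/3})$ lower bound on one of those quantities for $\VSP$ — indeed the classical communication complexity of $\VSP$ is $\Omega(n^{1/3})$ (Klartag–Regev), and that bound is proved via a rectangle/corruption-type argument, hence goes through the relaxed partition bound. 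Combining, $\IC(\VSP,\eps) \geq \tfrac{\delta^2}{C}(\Omega(n^{1/3}) - O(1)) - \delta = \Omega(n^{1/3})$ for a suitable constant choice of $\delta$.

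Finally I would assemble the two bounds: with $f=\VSP$ on inputs of dimension $n$, we have $\IC(f,\eps) = \Omega(n^{1/3})$ and $Q(f,\eps) = O(\log n) = O(\log(\IC(f,\eps)^3)) = O(\log \IC(f,\eps))$, which is exactly the claimed statement $Q(f,\eps) = O(\log \IC(f,\eps))$ for all $\eps \in (0,\tfrac12)$. One should check that the quantum upper bound holds for all $\eps$ in the stated range (amplify by constantly many repetitions if the base protocol has a fixed constant error) and that the information-complexity lower bound is for the same error regime; both are routine.

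The main obstacle I expect is not in \autoref{thm:mainthm} itself — that is already proved — but in justifying that the classical lower bound for $\VSP$ really factors through the relaxed partition bound rather than through some ad hoc argument. The Klartag–Regev lower bound is a subtle geometric/measure-concentration argument, and one must verify that it can be cast as a smooth-rectangle or corruption bound (with the right parameters and the right hard distribution $\mu$), so that \autoref{claim:srec-bprt-prt} applies. If the original proof does not immediately give a smooth rectangle bound, the fallback is to re-derive an $\Omega(n^{1/3})$ smooth-rectangle (or corruption) bound for $\VSP$ directly; this is the technically heaviest part of the argument and is where I would concentrate the detailed work.
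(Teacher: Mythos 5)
Your overall architecture matches the paper's: take $f=\VSP$ (more precisely its discretization $\tVSP_{\theta,n}$), cite the $O(\log n)$ quantum protocol of Klartag--Regev, lower bound $\log\bprt^\mu_{\eps+3\delta}$ by $\Omega(n^{1/3})$, and feed this into \autoref{thm:mainthm}. However, your primary route for the key step --- ``invoke a known $\Omega(n^{1/3})$ smooth rectangle or corruption bound for $\VSP$ and apply \autoref{claim:srec-bprt-prt}'' --- would fail as stated: no such bound in the Jain--Klauck LP sense exists off the shelf. What Klartag and Regev prove is the geometric statement that for \emph{every} rectangle $R$ and \emph{both} outputs $b$, $\sigma_b(R\cap f^{-1}(b))\geq 0.8\,\sigma(R)-C\exp(-\gamma n^{1/3})$ (\autoref{lem:klartag-regev}). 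This is a two-sided, additive-error corruption-type statement for a \emph{partial} function, and it does not directly instantiate the $\srec$ or $\rect$ linear programs. You correctly flag this as the main obstacle and name the right fallback, but the fallback is where all the content lies, and you do not carry it out.

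The paper's resolution (\autoref{lem:VSP}) is to bypass $\srec$ entirely and build an explicit feasible point for the \emph{dual} of $\bprt_\eps$ (\autoref{claim:bprt-lp}): with $\mu=\tfrac{\sigma_0+\sigma_1}{2}$ and $t=\exp(\gamma n^{1/3})/C$, one sets $\alpha(x,y)=2t\mu(x,y)$ on the promise set, $\alpha(x,y)=0.8t\sigma(x,y)$ off it, and $\beta(x,y)=2t\mu(x,y)-0.8t\sigma(x,y)$ on the promise set. Two features of this construction are worth noting, because they show the step is not ``routine citation.'' First, the nonzero slack variables $\beta$ are essential to absorb the $0.8$ constant and the additive error in the Klartag--Regev inequality; these slacks are exactly what the \emph{relaxed} partition bound's dual provides, so the relaxation is doing real work here, not just inherited from $\srec$. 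Second, the partiality of $\VSP$ forces the dual terms over $R\setminus\setin$ to be handled explicitly, and one must separately argue that a feasible point for $\bprt_\eps(\tVSP_{\theta,n})$ follows from one for the continuous problem via rounding. Your plan would be complete once this dual construction (or an equivalent direct derivation) is supplied; as written, the proposal defers precisely the lemma that the theorem rests on.
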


In order to prove the above separation, we show that the proof of the lower bound on the randomized communication complexity of the Vector in Subspace Problem ($\tVSP$) \cite{KR11} provides, in fact, a lower bound on the relaxed partition bound. By our lower bound, this implies that $\IC(\tVSP_{\theta,n},1/3)=n^{1/3}$ (Open Problem 7 in~\cite{Bra11b}). Since the quantum communication complexity of $\tVSP_{\theta,n}$ is $O(\log n)$, we have the above theorem. 
Moreover, this implies an exponential separation between classical and quantum information complexity. We refrain from defining quantum information cost in this paper (see \cite{JN10} for a definition), but since the quantum information cost is always smaller than the quantum communication complexity, the separation follows trivially from the above theorem. 

In addition, we resolve the question of the information complexity of the Gap Hamming Distance Problem ($\GHD$) (Open Problem 6 in \cite{Bra11b}), since the lower bounds on the randomized communication complexity of this problem go through the rectangle/corruption bound~\cite{She11} or smooth rectangle bound \cite{CR10,Vid11}.

\begin{theorem}
\label{thm:GHD}
$\IC(\GHD_n,1/3)= \Omega(n).$
\end{theorem}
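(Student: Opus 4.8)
The plan is to derive \autoref{thm:GHD} from \autoref{thm:mainthm} together with the existing $\Omega(n)$ communication lower bounds for Gap Hamming Distance, the point being that those lower bounds in fact control a rectangle-based complexity measure. First I would note that $\GHD_n$ is a Boolean (promise) function, so $|\setout|=2$ and the term $\log|\setout|$ in \autoref{thm:mainthm} is an absolute constant that is absorbed into the final $\Omega(n)$. Next, by \autoref{claim:srec-bprt-prt} the relaxed partition bound $\bprt^\mu_\eps$ dominates both the rectangle/corruption bound and the smooth rectangle bound, for every distribution $\mu$ and error $\eps$. Hence it suffices to exhibit a distribution $\mu$ and a constant error $\eps'>1/3$ for which one of these two rectangle measures of $\GHD_n$ is $2^{\Omega(n)}$.

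The crux of the argument is to verify that the communication lower bounds of Sherstov~\cite{She11} (via the corruption/rectangle bound) and of Chakrabarti and Regev~\cite{CR10}, as sharpened by Vidick~\cite{Vid11} (via the smooth rectangle bound), are not merely assertions about $R(\GHD_n)$ but actually lower bound the corresponding rectangle measure of $\GHD_n$ by $2^{\Omega(n)}$ under an explicit hard distribution. This is essentially built into how these methods operate: a corruption bound argument first proves that, under the hard distribution $\mu$, every rectangle that is not heavily corrupted has $\mu$-mass at most $2^{-\Omega(n)}$, which is precisely a $2^{\Omega(n)}$ lower bound on the corruption bound quantity, and only afterwards does one take a logarithm to conclude the communication lower bound; the smooth rectangle bound argument has the same shape. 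One should also check that these arguments tolerate a constant error bounded away from $1/2$, in particular some $\eps'>1/3$, which is the case.

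Given $\bprt^\mu_{\eps'}(\GHD_n)\geq 2^{\Omega(n)}$ for such $\mu$ and $\eps'$, I would then apply \autoref{thm:mainthm} with $\eps:=1/3$ and $\delta$ a sufficiently small constant with $1/3+3\delta\le\eps'$, obtaining
\[
\IC(\GHD_n,1/3)\;\geq\;\IC_\mu(\GHD_n,1/3)\;\geq\;\frac{\delta^2}{C}\bigl(\log\bprt^\mu_{1/3+3\delta}(\GHD_n)-1\bigr)-\delta\;=\;\Omega(n),
\]
where the first inequality uses that the prior-free information complexity is at least $\IC_\mu$ for every $\mu$, and the chain uses that $\bprt^\mu_\eps$ is non-increasing in $\eps$ together with $1/3+3\delta\le\eps'$. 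The only real obstacle is the bookkeeping of the previous paragraph: ensuring the cited lower bounds can be read off verbatim as statements about the corruption bound, resp.\ the smooth rectangle bound, with an admissible error, and that the promise structure of $\GHD$ is compatible with the definitions of $\IC_\mu$ and $\bprt^\mu_\eps$ used here (it is, since both are stated for partial functions with inputs restricted to the promise set).
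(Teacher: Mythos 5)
Your proposal is correct and follows essentially the same route as the paper: show that the known $\Omega(n)$ communication lower bounds for $\GHD_n$ are in fact $2^{\Omega(n)}$ lower bounds on a rectangle measure subsumed by $\bprt$ (\autoref{claim:srec-bprt-prt}), then apply \autoref{thm:mainthm}. One caveat: the "bookkeeping" you defer is where all of the paper's work lies, and for the Sherstov route it is not quite as routine as you suggest, since his corruption bound is proved for the Gap Orthogonality Problem $\ORT$ under the uniform distribution, not for $\GHD$ directly; the paper must first exhibit an explicit feasible point for the dual of the Jain--Klauck rectangle LP realizing the corruption bound, and then transfer the bound to $\GHD$ by showing that the two-instance reduction squares the efficiency of a zero-communication protocol, i.e. $[\bprt_{\eps/2}(\GHD)]^2 \geq \bprt_{\eps}(\ORT)$. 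Your alternative via the Chakrabarti--Regev/Vidick smooth rectangle bound would avoid that reduction but still requires verifying that their argument yields a feasible dual solution for $\srec$ in the Jain--Klauck sense at an admissible error, which is exactly the kind of check the paper warns is necessary.
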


Regarding direct sum theorems, it was shown \cite{Bra11b} that the information complexity satisfies a direct sum theorem, namely $\IC_{\mu^k}(f^k, \eps) \geq k \cdot \IC_\mu(f, \eps)$. If in addition it holds that $D^\mu_{\eps'}(f) = O(\IC_\mu(f, \eps))$, then we can immediately deduce that $D^{\mu^k}_\eps(f) \geq \IC_{\mu^k}(f^k, \eps) \geq k \cdot \IC_\mu(f, \eps) \geq \Omega(k \cdot D^\mu_{\eps'}(f))$, \ie the direct sum theorem holds for $f$. Therefore our main result also gives the following corollary:
\begin{corollary}
\label{cor:dst}
For any $\eps, \mu$ and any $f : \setin \rightarrow \setout$, if $D^\mu_{\eps}(f) = O(\log \bprt^\mu_{\eps}(f))$, then for all $\delta > 0$ and integers $k$, it holds that $D^{\mu^k}_\eps(f) \geq \Omega\left(k \cdot \delta^2 (D^\mu_{\eps+3\delta}(f) - \log |\setout|) - k\delta) \right)$.
\end{corollary}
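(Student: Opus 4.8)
The plan is to chain together three facts, two of which are already recalled in the paragraph preceding the corollary and one of which is our \autoref{thm:mainthm}. The three facts are: (a) information complexity never exceeds communication complexity, so $D^\nu_\eps(g)\ge \IC_\nu(g,\eps)$ for every $g$ and every $\nu$; (b) the direct sum theorem for information complexity of Braverman \cite{Bra11b}, $\IC_{\mu^k}(f^k,\eps)\ge k\cdot\IC_\mu(f,\eps)$; and (c) our main theorem applied to the single copy at error $\eps$ with slack $\delta$, $\IC_\mu(f,\eps)\ge \frac{\delta^2}{C}\bigl(\log\bprt^\mu_{\eps+3\delta}(f)-\log|\setout|\bigr)-\delta$. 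The only extra ingredient is the hypothesis of the corollary, which we read as asserting that the relaxed partition bound is tight for $f$ at every error level, and in particular at error $\eps+3\delta$, so that $D^\mu_{\eps+3\delta}(f)=O\bigl(\log\bprt^\mu_{\eps+3\delta}(f)\bigr)$.

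Concretely, I would first apply (a) to the $k$-fold problem $f^k$ under $\mu^k$, then (b), then (c), to get
\[
D^{\mu^k}_\eps(f^k)\ \ge\ \IC_{\mu^k}(f^k,\eps)\ \ge\ k\cdot\IC_\mu(f,\eps)\ \ge\ \frac{k\delta^2}{C}\bigl(\log\bprt^\mu_{\eps+3\delta}(f)-\log|\setout|\bigr)-k\delta .
\]
Since $D^\mu_{\eps+3\delta}(f)\ge\log\bprt^\mu_{\eps+3\delta}(f)$ always holds and the tightness hypothesis gives the reverse inequality up to a constant factor, there is an absolute constant $c\in(0,1]$ with $\log\bprt^\mu_{\eps+3\delta}(f)\ge c\cdot D^\mu_{\eps+3\delta}(f)$; because $c\le 1$ this yields $\log\bprt^\mu_{\eps+3\delta}(f)-\log|\setout|\ge c\bigl(D^\mu_{\eps+3\delta}(f)-\log|\setout|\bigr)$. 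Substituting into the displayed chain gives $D^{\mu^k}_\eps(f^k)\ge \frac{ck\delta^2}{C}\bigl(D^\mu_{\eps+3\delta}(f)-\log|\setout|\bigr)-k\delta$, which is exactly the claimed bound once the constant $c/C$ is absorbed into the $\Omega(\cdot)$.

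The argument is essentially bookkeeping, so there is no genuinely hard step; the two places where care is needed are the following. First, one must apply the tightness hypothesis at the \emph{shifted} error parameter $\eps+3\delta$ rather than at $\eps$ --- this is immediate if the hypothesis is understood as a statement about all error levels, and otherwise one can instead invoke the robustness of $D^\mu$ and of $\bprt^\mu$ under an additive $O(\delta)$ change of the error parameter. Second, one must track the additive term $-\log|\setout|$ when passing from the multiplicative inequality $\log\bprt^\mu_{\eps+3\delta}(f)\ge c\,D^\mu_{\eps+3\delta}(f)$ to the subtractive form that appears in the statement, which is precisely where the bound $c\le 1$ is used; the stray $-k\delta$ term simply rides along unchanged through the whole chain.
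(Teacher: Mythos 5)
Your chain of inequalities is exactly the paper's argument: the paper gives no separate proof of the corollary beyond the preceding paragraph, which is precisely your steps (a) $D\ge\IC$, (b) Braverman's direct sum for information complexity, (c) \autoref{thm:mainthm}, followed by the tightness hypothesis to convert $\log\bprt$ back into $D$. You also correctly flag the two delicate points (the error-parameter shift and the $-\log|\setout|$ term), which the paper itself glosses over; on the first, the "hypothesis at all error levels" reading is the intended one and is how the paper applies the corollary to $\GHD$.

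One algebraic step is stated backwards, however. From $\log\bprt^\mu_{\eps+3\delta}(f)\ge c\,D^\mu_{\eps+3\delta}(f)$ with $c\le 1$ you claim $\log\bprt^\mu_{\eps+3\delta}(f)-\log|\setout|\ge c\bigl(D^\mu_{\eps+3\delta}(f)-\log|\setout|\bigr)$; but since $c\le 1$ we have $-\log|\setout|\le -c\log|\setout|$, so $cD-\log|\setout|\le c(D-\log|\setout|)$ and the implication fails in general (e.g.\ $c=1/2$, $D=12$, $\log|\setout|=10$). The inequality you want would follow from $c\ge 1$, which cannot hold here. This does not sink the corollary: what the chain actually gives is $D^{\mu^k}_\eps(f^k)\ge \frac{k\delta^2}{C}\,c\,D^\mu_{\eps+3\delta}(f)-\frac{k\delta^2}{C}\log|\setout|-k\delta$, i.e.\ the $\Omega$ must be read as allowing different constants on the $D$ term and on the $\log|\setout|$ term (or one notes the claimed bound is vacuous in the regime $D^\mu_{\eps+3\delta}(f)\le \tfrac1c\log|\setout|$ where your manipulation breaks). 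Since the paper's statement is loose on exactly this point and its intended applications have $\log|\setout|=O(1)$, this is a repairable slip rather than a wrong approach, but the justification "because $c\le1$" should be corrected.
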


For example, since $D^\mu_\eps(\GHD) \leq n$ holds trivially, this corollary along with the fact that $\log \bprt^\mu_\eps(\GHD) = \Omega(n)$ (\cite{She11,CR10,Vid11}, see \autoref{sec:ghd}) immediately implies a direct sum theorem for $\GHD$.

Finally, regarding the central open question of whether or not it is possible to compress communication down to the information complexity for any function, we note that our result says that if one hopes to prove a negative result and separate information complexity from communication complexity, then one must use a lower bound technique that is stronger than the relaxed partition bound.  To the best of our knowledge, the only such technique in the literature is the (standard) partition bound.  We note, however, that to the best of our knowledge there are no known problems whose communication complexity can be lower-bounded by the partition bound but not by the relaxed partition bound.

% Last, let us note, that we can strengthen our lower bound even further, to encompass an "approximate" version of the partition bound, however, it remains an open question, whether the regular partition bound extends to information complexity.

\subsection{Related work}

Definitions of information complexity with some variations extend back to the work on privacy in interactive protocols \cite{BCKO93}, and related definitions in the privacy literature appear \cite{Kla04,FJS10,ACC+12}.  Information complexity as a tool in communication complexity was first used to prove direct sum theorems in the simultaneous message model \cite{CSWY01}, and subsequently to prove direct sum theorems and to study amortized communication complexity as stated in the first paragraph of this paper \cite{BYJKS04,BBCR10,BR11,Bra11b,BW11}.  There are many other works using information complexity to prove lower bounds for specific functions or to prove direct sum theorems in restricted models of communication complexity, for example \cite{JKS03, JRS03, JRS05, HJMR07}.

In independent and concurrent work, Chakrabarti et al. proved that information complexity is lower bounded by the smooth rectangle bound under product distributions \cite{CKW12}.  While our result implies the result of \cite{CKW12} as a special case, we note that their proof uses entirely different techniques and may be of independent interest.

%%%%%%%%%%%%%%%%%%%%%%%%%%%%%%%%%%%%%%%%%%%%%%%%%%%%%%%
%% PRELIMINARIES
%%%%%%%%%%%%%%%%%%%%%%%%%%%%%%%%%%%%%%%%%%%%%%%%%%%%%%%

\section{Preliminaries}\label{sec:preliminaries}

\subsection{Notation and information theory facts}
Let $\mu$ be a probability distribution over a (finite) universe
$\calU$.  We will often treat $\mu$ as a function $\mu : 2^\calU \dans
[0, 1]$.  For $T, S \subseteq \calU$, we let $\mu(T \mid S) = \Pr_{u
  \leftarrow \mu}[ u \in T \mid S ]$.  For singletons $u \in U$, we write
interchangeably $\mu_u = \mu(u) = \mu(\{u\})$.
% When $b$ is a boolean value, we use $\overline{b}$ to mean $1-b$.
Random variables are written in uppercase and fixed values in lowercase.
We sometimes abuse notation and write a random variable in place of the
distribution of that random variable.

For two distributions $\mu, \nu$, we let $|\mu - \nu|$ denote their
statistical distance, \ie $|\mu - \nu| = \max_{T \subseteq \calU}
(\mu(T) - \nu(T))$.  We let $D(\mu \parallel \nu) = \Exp_{U \sim\mu} [ \log
\frac{\mu(U)}{ \nu(U)} ]$ be the relative entropy (\ie KL-divergence).
For two random variables $X,Y$, the mutual information is defined as
$I(X:Y)=H(X)-H(X \mid Y)=H(Y)-H(Y \mid X)$, where $H(\cdot)$ is the Shannon entropy.

A \emph{rectangle} of
% $\zo^n \times \zo^n$
$\calX\times\calY$
is a product set $A \times B$ where
% $A, B \subseteq \zo^n$.
$A\subseteq\calX$ and $B\subseteq\calY$.
We let $R$ denote a rectangle in
% $\zo^n \times \zo^n$.
$\calX\times\calY$.  We let $(x, y) \in \calX \times \calY$ denote a
fixed input, and $(X, Y)$ be random inputs sampled according to some
distribution (specified from context and usually denoted by $\mu$).

\subsection{Information complexity}
We study 2-player communication protocols for calculating a function
$f : \setin\to \setout$, where $\setin\subseteq\calX\times\calY$.  Let
$\pi$ be a randomized protocol (allowing both public and private
coins, unless otherwise specified). We denote the randomness used by
the protocol $\pi$ by $r_\pi$. Let $\pi(x, y)$ denote its output, \ie
the value in $\setout$ the two parties wish to compute.

The transcript of a protocol includes all messages exchanged, the
output of the protocol (in fact we just need that both players can compute the output of the protocol from the transcript), as well as any public coins (but no private
coins).  The complexity of $\pi$ is the maximum (over all random
coins) of the number of bits exchanged.

Let $\mu$ be a distribution over $\calX\times\calY$.
% inputs (\ie over $\zo^n \times\zo^n$). 
Define
 $\err_f(\pi; x, y) = \Pr_{\alea} [ f(x,y) \neq \pi(x, y)]$ if  $(x,y) \in \setin$ and  $0$ otherwise %
 and $\err_f(\pi; \mu) = \Exp_{(X, Y) \sim \mu} \err_f(\pi; X,
 Y) = \Pr_{\alea, (X, Y) \sim \mu} [ (X,Y)\in\setin \wedge f(X, Y) \neq \pi(X, Y)]$.

\begin{definition}
  Fix $f, \mu, \eps$.  Let $(X, Y, \Pi)$ be the tuple distributed according to
  $(X, Y)$ sampled from $\mu$ and then $\Pi$ being the transcript of
  the protocol $\pi$ applied to $X, Y$.  Then define:
  \begin{enumerate}
  \item $\IC_\mu(\pi) = I(X ; \Pi \mid Y) + I(Y; \Pi \mid X)$
  \item $\IC_\mu(f, \eps) = \inf_{\pi : \err_f(\pi; \mu) \leq \eps} \IC_\mu(\pi)$
  \item $\IC_D(f, \eps) = \max_\mu \IC_\mu(f, \eps)$
   \end{enumerate}
\end{definition}

Braverman \cite{Bra11b} also defined the non-distributional
information cost $\IC$, and all of our results extend to it trivially
by the  inequality $\IC_D \leq \IC$.  (We do not require
the reverse inequality $\IC \leq O(\IC_D)$, whose proof is non-trivial
and was given in \cite{Bra11b}).

%%%%%%%%%%%%%%%%%%%%%%%%%%%%%%%%%%%%%%%%%%%%%%%%%%%%%%%
%% COMPRESSION
%%%%%%%%%%%%%%%%%%%%%%%%%%%%%%%%%%%%%%%%%%%%%%%%%%%%%%%

\section{Zero-communication protocols and the relaxed partition bound}
\label{sec:compression}

%We prove \autoref{lem:mainlemma} in \autoref{sec:mainlemmapf} using compression ideas and the notion of a zero-communication protocol.  
%We first describe the zero-communication model and its relation to the communication rectangles. Then, 
%We first develop here the compression ideas and the associated compression protocols.

\subsection{The zero-communication model and rectangles}

Let us consider a (possibly partial) function $f$. We say that $(x,y)$ is a valid input if $(x,y)\in \setin$, that is, $(x,y)$ satisfies the promise. 
%Moreover, for $z \in\setout$ we say that $(x, y)$ is a $z$-input if $f(x, y) = z$.
In the zero-communication model with abort, the players either output a value $z\in\setout$ (they \emph{accept} the run) or output $\bot$ (they \emph{abort}).  
%In the same way that a ``good'' rectangle 
%covers a large number of $z$-inputs without covering many non-$z$-inputs,
%the labeling game should correctly label
%as many valid inputs as possible.  Invalid inputs can be labeled with any value.
\begin{definition}
  The \emph{zero-communication} model with abort is defined as
  follows:
  \begin{description*}
  \item[Inputs] Alice and Bob receive inputs $x$ and $y$ respectively.
  \item[Output] Alice outputs
% $a\in\{0,1,\bot\}$,
$a\in\setout\cup\{\bot\}$
and Bob outputs
% $b\in\{0,1,\bot\}$.
$b\in\setout\cup\{\bot\}.$ If both Alice and Bob output the same $z
\in \setout$, then the output is $z$.  Otherwise, the
output is $\bot$.
% either Alice or Bob outputs $\bot$, then the protocol outputs $\bot$. If no player outputs $\bot$, then their outputs  are equal and the protocol outputs $a=b$.
  \end{description*}
%   For a protocol $\pi$ and an input distribution $\mu$, we define the
%   $b$-advantage to be
%   $$\adv^b_f(\pi;\mu) = \Pr[f(X, Y) = \pi(X, Y) = b] - \Pr[f(X, Y) =
%   1-b \wedge \pi(X, Y) = b]$$
%   where the probability is over $(X, Y) \getsr \mu$ and the random
%   coins of $\pi$.
\end{definition}
We will study (public-coin) randomized \emph{zero-communication}
protocols for computing functions in this model.

%%% CONVERSE MAPPING
% For any distribution over $(R, z)$ where $R = A \times B$ is a
% rectangle and $z \in \setout$, one can build the following
% randomized protocol: using public random coins, Alice and Bob sample
% a rectangle $R = A \times B$ and $z$ in $\setout$ according to the
% distribution, then Alice (resp. Bob) outputs $z$ if her input is in
% $A$ (resp. in $B$).

\comment{
From this correspondence, we can see that the distributional relaxed partition bound $\bprt^\mu_\eps(f)$ precisely quantifies the quality of a strategy for the labeling game. In \autoref{def:relaxed-prt}, let $p_{R,b}$ be the probability that the strategy labels rectangle $R$ with the value $b$. }
\comment{
Using the change of variables $\eta=1/(\sum_{R,b} w_{R,b})$ and $p_{R,b}=\eta w_{R,b}$, the relaxed partition bound may be expressed as follows:
\begin{align}
\bprt_\eps^\mu(f)=\min_{\eta,p_{R,b}\geq 0} &\frac{1}{\eta} \quad \text{ subject to:}\nonumber\\
 & \sum_{(x,y)\in \setin}\mu_{x,y}\sum_{R:(x,y)\in R} p_{R,f(x,y)} + \sum_{(x,y)\notin \setin}\mu_{x,y}\sum_{b,R:(x,y)\in R} p_{R,b}\geq (1-\eps)\eta \nonumber\\
 &\forall (x,y),\quad \sum_{b,R:(x,y)\in R} p_{R,b} \leq \eta\nonumber\\
&\sum_{R,b} p_{R,b}=1\label{eq:prt-operational}
\end{align}
}
\comment{
The first constraint imposes that on average (over the fixed distribution $\mu$), inputs should be correctly labeled with probability at least $(1-\eps)\eta$ (note that non-valid inputs can be labeled with any value), while the second constraint imposes that the probability that an input is labeled with any value should be at most $\eta$. The bound is then maximized for the strategy that labels inputs with the highest possible probability.
}
% The following is immediate from this correspondance:
% \begin{proposition}
% \label{prop:correspond}
% Let $\pi$ be a zero-communication protocol for the $b$-labelling game
% and let $\dR$ be the distribution over rectangles following from the
% natural correspondance.  Then
% $$\adv^b_f(\pi; \mu) = \Pr_{R \getsr \dR, (X, Y) \getsr \mu}[f(X, Y) = b
% \wedge (X,  Y) \in R]
% - \Pr_{R \getsr \dR, (X, Y) \getsr \mu}[f(X, Y) = 1 - b \wedge (X,  Y)
% \in R]$$ 
% \end{proposition}
% 
% Intuitively, maximizing the advantage $\adv^b_f$ requires finding a
% rectangle that labels outputs $b$ on $b$-inputs with much higher
% probability than it outputs $b$ on $(1-b)$-inputs.  In doing so, the
% protocol is allowed to ignore many $b$-inputs.  Thus the game
% asymmetric for $b$ and $1-b$ which is why we insist on explicitly
% mentioning $b$ in the name.

\subsection{Relaxed partition bound}
\label{sec:relaxed-prt}

The relaxed partition bound with error $\eps$ and input distribution
$\mu$, denoted by $\bprt_\eps^\mu(f)$, is defined as follows.
\begin{definition}
\label{def:relaxed-prt}
The distributional relaxed partition bound $\bprt^\mu_\eps(f)$ is the
value of the following linear program.  (The value of $z$ ranges over
$\setout$ and $R$ over all rectangles, including the empty rectangle.)
\begin{align}
\bprt_\eps^\mu(f)& =  \min_{\eta,p_{R,z}\geq 0}  \; \frac{1}{\eta}
\quad \text{ subject to:} \nonumber \\
 & \sum_{(x,y)\in \setin}\mu_{x,y}\sum_{R:(x,y)\in R} p_{R,f(x,y)} + \sum_{(x,y)\notin \setin}\mu_{x,y}\sum_{z,R:(x,y)\in R} p_{R,z}\geq (1-\eps)\eta \label{eq:partfirst}\\
& \forall (x,y)\in\calX\times\calY,\quad \sum_{z,R:(x,y)\in R} p_{R,z}
\leq \eta \label{eq:partsecond}\\
&\sum_{R,z} p_{R,z}=1.\label{eq:prt-operational}
\end{align}
\comment{
\begin{align}
\bprt_\eps^\mu(f)=\min_{w_{R,z}\geq 0} & \sum_{R,z} w_{R,z} \quad \text{ subject to:}
\nonumber\\
 & \sum_{(x,y)\in \setin}\mu_{x,y}\sum_{R:(x,y)\in R} w_{R,f(x,y)}+\sum_{(x,y)\notin \setin}\mu_{x,y}\sum_{z,R:(x,y)\in R} w_{R,z} \geq 1-\eps \nonumber\\
 &\forall (x,y),\quad \sum_{z,R:(x,y)\in R} w_{R,z} \leq 1.
\end{align}
}
The relaxed partition bound is defined as $\bprt_\eps(f)=\max_\mu \bprt_\eps^\mu(f)$.
\end{definition}

We can identify feasible solutions to the program in
\autoref{def:relaxed-prt} as a particular type of randomized
zero-communication protocol: Alice and Bob sample $(R, z)$ according
to the distribution given by the $p_{R,z}$, and each individually sees
if their inputs are in $R$ and if so they output $z$, otherwise they
abort.  The parameter $\eta$ is the \emph{efficiency} of the protocol
\cite{LLR12}, that is, the probability that the protocol does not
abort, and ideally we want it to be as large as possible.

There is also a natural way to convert any zero-communication protocol
$\pi$ into a distribution over $(R, z)$: sample $z$ uniformly from
$\setout$, sample random coins $\alea$ for $\pi$, and let $R = A \times B$ be
such that $A$ is the set of inputs on which Alice outputs $z$ in the
protocol $\pi$ using random coins $\alea$, and similarly for $B$.
(The sampling of a random $z$ incurs a loss of $|\setout|$ in the
efficiency, which is why our bounds have a loss depending on
$|\setout|$.  See \autoref{sec:mainthmpf} for details.)

\paragraph{Relation to other bounds.} The relaxed partition bound is,
as its name implies, a relaxation of the partition bound
$\prt_\eps(f)$ \cite{JK10}.  We also show that the relaxed partition
bound is stronger than the smooth rectangle bound $\srec^z_\eps(f)$
(the proof is provided in \autoref{app:bprt}).
\begin{lemma}\label{claim:srec-bprt-prt}
For all $f,\eps$ and $z\in\setout$, we have $\srec^z_\eps(f)\leq
\bprt_\eps(f)\leq\prt_\eps(f)$. 
\end{lemma}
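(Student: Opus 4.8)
The plan is to establish the two inequalities $\srec^z_\eps(f) \leq \bprt_\eps(f)$ and $\bprt_\eps(f) \leq \prt_\eps(f)$ separately, both by direct comparison of the linear programs that define these quantities; the key observation in each case is that a feasible solution to one program can be massaged into a feasible solution to another with the same or better objective value.

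For the easier inequality $\bprt_\eps(f) \leq \prt_\eps(f)$: recall that the partition bound $\prt_\eps(f)$ is defined (in \cite{JK10}) by a linear program over weights $w_{R,z} \geq 0$ subject to the requirement that the total weight of rectangles containing any fixed $(x,y)$ and labeled $f(x,y)$ (or labeled with any $z$, if $(x,y)\notin\setin$) is at least $1-\eps$, while the total weight of \emph{all} labeled rectangles containing $(x,y)$ equals exactly $1$; the objective is $\min \sum_{R,z} w_{R,z}$. The key point is that our program in \autoref{def:relaxed-prt} is obtained from the partition program by (a) relaxing the equality $\sum_{z,R:(x,y)\in R} w_{R,z} = 1$ to the inequality $\leq \eta$ in \eqref{eq:partsecond}, and (b) rescaling: using the change of variables $\eta = 1/(\sum_{R,z} w_{R,z})$ and $p_{R,z} = \eta w_{R,z}$, so that the objective $\min \sum w_{R,z}$ becomes $\min 1/\eta$ and the normalization \eqref{eq:prt-operational} holds. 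Hence any feasible solution to the (rescaled) partition program is feasible for the relaxed partition program with the same objective value, so taking the min can only decrease, giving $\bprt^\mu_\eps(f)\leq\prt^\mu_\eps(f)$; take $\max_\mu$ on both sides. (One should double-check that for the partition bound the relevant quantity is a max over $\mu$ as well, or else use the known equivalence between the partition bound and its distributional version; this is the one spot where I would be careful about the exact definitions.)

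For $\srec^z_\eps(f) \leq \bprt_\eps(f)$: the smooth rectangle bound $\srec^z_\eps(f)$ is defined by a linear program over weights of rectangles labeled only with the single fixed output value $z$ (roughly: $\min \sum_R w_R$ subject to the weight on rectangles containing $(x,y)$ being $\approx 1$ when $f(x,y)=z$ and $\leq$ something when $f(x,y)\neq z$, up to the error slack $\eps$). The plan is: given a feasible (rescaled) solution $(\eta, \{p_{R,z}\})$ for $\bprt^\mu_\eps(f)$ for the hard distribution $\mu$, \emph{discard} all the weight on rectangles not labeled $z$, i.e. keep only $\{p_{R,z}\}_R$, and check that after appropriate renormalization this is feasible for the smooth rectangle program with objective no larger than $1/\eta$. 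The error constraint \eqref{eq:partfirst} of the relaxed partition bound, restricted to the label-$z$ part, should imply the smooth-rectangle error constraint because on inputs with $f(x,y)=z$ only the label-$z$ rectangles contribute to the "correct" mass, and on inputs with $f(x,y)\neq z$ constraint \eqref{eq:partsecond} caps the total mass (hence in particular the label-$z$ mass). I expect this step to be the main obstacle: the smooth rectangle bound as usually stated is itself a max over distributions (or equivalently a min via LP duality over a different object), so I need to be careful to pick the distribution $\mu$ that is worst-case for $\srec^z$ and run the argument with that $\mu$, and I need to track the error parameter to make sure it stays $\eps$ (not $\eps +$ something) — the relaxed partition bound's constraint is "summed against $\mu$" whereas some formulations of the smooth rectangle bound are pointwise, so I may need to invoke an averaged/distributional formulation of $\srec^z_\eps$ and the fact that it equals the pointwise one up to the same $\eps$. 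Since the full verification of the LP manipulations is routine but notation-heavy, the actual proof is deferred to \autoref{app:bprt}; here I have only sketched which feasible solutions map to which.
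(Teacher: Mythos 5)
Your high-level strategy (map feasible solutions between the three linear programs) is the paper's strategy too, and your argument for $\bprt_\eps(f)\leq\prt_\eps(f)$ is essentially fine: a feasible $w$ for the pointwise partition LP is feasible for $\bprt^\mu_\eps(f)$ for \emph{every} $\mu$ (averaging the pointwise constraints against $\mu$), so the inequality survives the $\max_\mu$; your parenthetical worry is moot because the partition bound is not distributional. The genuine gap is in the direction $\srec^z_\eps(f)\leq\bprt_\eps(f)$, and it is exactly the obstacle you flag but do not resolve: the correctness constraint \eqref{eq:partfirst} of $\bprt^\mu_\eps(f)$ holds only \emph{on average} over $\mu$, whereas the smooth rectangle LP of Jain--Klauck imposes \emph{pointwise} constraints ($\sum_{R\ni(x,y)}w'_R\geq 1-\eps$ on $f^{-1}(z)$ and $\leq\eps$ on $\setin\setminus f^{-1}(z)$). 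No restriction of an averaged feasible solution to the label-$z$ rectangles can yield these pointwise guarantees, and hoping for "an averaged formulation of $\srec^z_\eps$ that equals the pointwise one with the same $\eps$" is not something you can just invoke.

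The missing idea, which the paper supplies as \autoref{claim:bprt-lp}, is to apply LP duality to $\max_\mu\bprt^\mu_\eps(f)$ itself: this converts the averaged program into an equivalent \emph{pointwise} primal LP for $\bprt_\eps(f)$, namely $\min\sum_{R,z}w_{R,z}$ subject to $\sum_{R\ni(x,y)}w_{R,f(x,y)}\geq 1-\eps$ for all $(x,y)\in\setin$ and $\sum_{z,R\ni(x,y)}w_{R,z}\leq 1$ for all $(x,y)$. With that form in hand, setting $w'_R=w_{R,z}$ works: on $f^{-1}(z)$ the lower bound $1-\eps$ and upper bound $1$ are immediate, and on $(x,y)\in\setin\setminus f^{-1}(z)$ you get $\sum_{R\ni(x,y)}w_{R,z}\leq\sum_{z',R\ni(x,y)}w_{R,z'}-\sum_{R\ni(x,y)}w_{R,f(x,y)}\leq 1-(1-\eps)=\eps$ --- note this last subtraction step is also absent from your sketch ("constraint \eqref{eq:partsecond} caps the total mass" only gives $\leq 1$, not $\leq\eps$). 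So you need both the duality step and the subtraction argument to close the proof.
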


Since Jain and Klauck have shown in~\cite{JK10} that the smooth
rectangle bound is stronger than the rectangle/corruption bound, the
$\gamma_2$ method and the discrepancy method, this implies that the
relaxed partition bound subsumes all these bounds as well. Therefore,
our result implies that all these bounds are also lower bounds
for information complexity.

We briefly explain the difference between the relaxed partition bound
and the partition bound
%and why, in particular, we did not work directly with the partition
%bound of Jain and Klauck
(more details appear in \autoref{app:bprt}).  The partition bound
includes two types of constraints.  The first is a correctness
constraint: on every input, the output of the protocol should be
correct with probability at least $(1-\eps) \eta$.  The second is a
completeness constraint: on every input, the efficiency of the
protocol (\ie the probability it does not abort) should be
\emph{exactly} $\eta$. In the relaxed partition bound, we keep the
same correctness constraint.  Since in certain applications the
function is partial (such as the Vector in Subspace Problem
\cite{KR11}), one also has to handle the inputs where the function is
not defined.  We make this explicit in our correctness constraint.  On
the other hand, we relax the completeness constraint so that the
efficiency may lie anywhere between $(1-\eps)\eta$ and $\eta$.  This
relaxation seems to be crucial for our proof of the lower bound on
information complexity, since we are unable to achieve efficiency exactly $\eta$.

\subsection{Compression lemma}
% At a high level, our proof of the main lemma will show that any game
% with bounded information complexity can be compressed into a
% zero-communication protocol for the $b$-labelling game (for all $b \in
% \zo$) that has large advantage.  This is stated as follows:

% \begin{lemma}[Main compression lemma]
%   \label{lem:compress}
%   There exists a universal constant $C$ such that the following holds.
%   For all $f, \mu, \eps, b$ for all $\pi$ such that $\err_f(\pi; \mu)
%   \leq \eps/\piloss$, there exists a zero-communication protocol
%   $\pi'$ such that
%   $$\adv^b_f(\pi'; \mu) \geq \frac{1}{2^{C \IC_\mu(\pi) / \eps^2}} (\Pr[
%   f(X, Y) = b] - \eps)$$
% \end{lemma}

%At a high level, our proof of the compression lemma will show that any protocol $\pi$ for a function $f$ with information complexity bounded by $I$ can be compressed into a zero-communication protocol for $f$. The loss in efficiency (or probability of accepting) incurred by removing the communication is a factor of $2^{-O(I)}$.  More precisely:
\begin{lemma}[Main compression lemma]
  \label{lem:compress}
  There exists a universal constant $C$ such that for all distributions $\mu$, communication protocols $\pi$ and       
  $\delta \in (0, 1)$, there exists a zero-communication protocol $\pi'$ 
and a real number $\lambda \geq 2^{-C (\IC_\mu(\pi) /
  \delta^2 + 1/\delta)}$
such that %for any relation $\rel$
%
% for all $b,t\in\zo$
%   \begin{align*}
%  \frac{\Pr[f(X,Y) = t \wedge \pi(X, Y) = b] -
%       \eps}{2^{I}} 
% \leq   \Pr[f(X, Y) = t \wedge \pi'(X, Y) = b] & \leq \frac{\Pr[f(X, Y) =t \wedge
%       \pi(X, Y) =
%       b] + \eps}{2^{I}}   
%     \end{align*}
% where $I=C \IC_\mu(\pi)/\eps^2$.
% \begin{align*}
% \left|\Pr[f(X, Y) = t \wedge \pi(X, Y) = b]
% -\tfrac{1}{\eta} \Pr[f(X,Y) = t \wedge \pi'(X, Y) = b]\right|
% \leq \delta,
% \end{align*}
\begin{equation}
\label{eq:compressdist}
 \left|(X,Y,\pi(X,Y))-(X,Y,\pi'(X,Y)|\pi'(X,Y)\neq\bot)\right|\leq \delta
\end{equation}
(in statistical distance) and %for all inputs $x,y$, and
\begin{eqnarray}
\forall (x,y) \quad \Pr_{\alean}[\pi'(x,y)\neq\bot]\leq (1 + \delta)
\lambda \label{eq:compressbadbot} \\ 
  \Pr_{\alean,(X,Y)\sim\mu}[\pi'(X,Y)\neq\bot] \geq (1-\delta)
  \lambda.  \label{eq:compressavgbot} 
\end{eqnarray}
\end{lemma}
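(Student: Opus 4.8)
The plan is to reduce the lemma to a sampling problem that can be carried out with shared randomness only, and then to solve that problem by rejection sampling restricted to a ``typical'' set of transcripts. I will use throughout the rectangle (Markov) property of communication transcripts: for the protocol $\pi$ with a fixed value of its public coins, the transcript $\Pi$ on input $(x,y)$ has law $\Pr[\Pi=\tau\mid x,y]=a_x(\tau)\,b_y(\tau)$, with $a_x$ depending only on $x$ and $b_y$ only on $y$. Write $P_{xy}(\tau)=a_x(\tau)b_y(\tau)$ for the true transcript law and $P_{x\cdot}=\Exp_{Y\mid X=x}[P_{xY}]$, $P_{\cdot y}=\Exp_{X\mid Y=y}[P_{Xy}]$ for Alice's and Bob's ``views''. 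Unpacking the definition of internal information cost gives exactly $\Exp_{(X,Y)\sim\mu}[D(P_{XY}\parallel P_{X\cdot})]=I(Y;\Pi\mid X)\le\IC_\mu(\pi)$ and, symmetrically, $\Exp_{(X,Y)\sim\mu}[D(P_{XY}\parallel P_{\cdot Y})]=I(X;\Pi\mid Y)\le\IC_\mu(\pi)$. Since $\pi(x,y)$ is a deterministic function of $\Pi$, it suffices to construct a zero-communication protocol $\pi'$ in which, conditioned on not aborting, both players agree on a transcript $\tau$ whose conditional law given $(x,y)$ is $O(\delta)$-close to $P_{xy}$, and whose non-abort probability lies within a $(1\pm\delta)$ factor of one value $\lambda$ on every input; the output $\pi'(x,y)$ is then simply the $\pi$-output read off $\tau$.

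For the construction I would restrict attention to a typical set of transcripts, namely those on which the relevant likelihood ratios (e.g.\ $a_x(\tau)$ and $b_y(\tau)$ relative to their typical values) are bounded by $2^{O(\IC_\mu(\pi)/\delta)}$; by a Markov bound applied to the two divergence estimates above, this set carries all but an $O(\delta)$ fraction of the $P_{xy}$-mass on average over $\mu$, which can be upgraded by a further averaging step to hold for essentially every input. On this set the likelihood ratios are bounded, so using the public coins to draw shared candidate transcripts, Alice (who can evaluate $a_x$) and Bob (who can evaluate $b_y$) can each run a rejection test from their own input such that, conditioned on both accepting, they agree on a transcript $\tau$ whose conditional law given $(x,y)$ is $O(\delta)$-close to $P_{xy}$; the truncation is responsible for the $1/\delta$ term in the exponent of $\lambda$. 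A concentration argument over independent copies of the shared coins — trading off the permitted $O(\delta)$ fluctuation against the efficiency, which is where the $\delta^2$ enters — then upgrades ``the acceptance probability is the right value on average over $\mu$'' to ``it lies within a $(1\pm\delta)$ factor of a single $\lambda\ge 2^{-O(\IC_\mu(\pi)/\delta^2+1/\delta)}$ on every input'', which is exactly \eqref{eq:compressbadbot} and \eqref{eq:compressavgbot}. Finally, \eqref{eq:compressdist} follows by combining the per-input $O(\delta)$ accuracy of $\tau$ with the near-input-independence of the non-abort event given by \eqref{eq:compressbadbot}--\eqref{eq:compressavgbot}: the latter implies that conditioning on not aborting perturbs the law of $(X,Y)$ by at most $O(\delta)$, while on each input the conditional output is already $O(\delta)$-close to $\pi(x,y)$.

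I expect the main obstacle to be the combination of the ``flatness'' requirement with the fact that $\mu$ need not be a product distribution. Making the non-abort probability close to one value $\lambda$ on \emph{all} inputs, as \eqref{eq:compressbadbot}--\eqref{eq:compressavgbot} demand rather than just in expectation over $\mu$, is delicate because the bare efficiency of rejection sampling varies with the input through the discarded typical-set mass and through the mismatch between $a_x$ and $b_y$; the candidate distribution and the truncation windows have to be chosen so that the typical region of every input lines up well enough for the common-acceptance probability to vary by at most a $(1\pm\delta)$ factor, while $\lambda$ stays as large as $2^{-O(\IC_\mu(\pi)/\delta^2+1/\delta)}$. And because $\mu$ is arbitrary, Alice's and Bob's views $P_{x\cdot}$ and $P_{\cdot y}$ are not symmetric around one common public law, so one cannot reduce to sampling from a single fixed distribution and must instead exploit both of the conditional KL bounds above simultaneously; this is presumably why the concurrent work of \cite{CKW12} is restricted to product distributions. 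The remaining ingredients — the typical-set decomposition of transcript distributions and the use of shared randomness for correlated sampling — are adapted from Braverman \cite{Bra11b} and Braverman--Weinstein \cite{BW11}; the one genuinely new point is that \emph{aborting} whenever the simulation drifts out of the typical region, rather than spending communication to correct it as in \cite{BW11}, costs only $O(\delta)$ in error and removes communication entirely.
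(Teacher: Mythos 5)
Your high-level plan---decompose $\Pi_{x,y}$ as a product $p_\Al p_\Bo$ of an Alice part and a Bob part, restrict to a typical set via two Markov arguments (one over inputs, one over transcripts, which is where the $O(\IC_\mu(\pi)/\delta^2+1/\delta)$ in the exponent really comes from), run rejection sampling from publicly sampled candidate transcripts, and abort outside the typical region---is the paper's. But there is a concrete gap in how you pass from a single rejection test to the final protocol. Repetition is not there for ``concentration'': a single experiment already accepts with probability between $(1-\gamma)/(|\calU|2^{2\Delta})$ and $1/(|\calU|2^{2\Delta})$, but this depends on $|\calU|$, the set of all transcripts, which is exponential in the communication of $\pi$ and not controlled by $\IC_\mu(\pi)$. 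The point of running $T\approx|\calU|2^{\Delta}$ independent experiments is to remove the $|\calU|$ factor, and doing so creates the real difficulty your sketch never addresses: Alice's set $\calA$ of accepted runs and Bob's set $\calB$ differ, and with zero communication they must still agree on a common element of $\calA\cap\calB$ while keeping the non-abort probability bounded above by essentially the same $\lambda$ on \emph{every} input. The paper solves this with a random hash function $h:[T]\to\zo^k$ and a random target $r$: Alice answers only if her first accepted index hashes to $r$ (an event of probability exactly $2^{-k}$, independently of the input), Bob searches for his first accepted index hashing to $r$, and a collision bound controls the resulting error. Without this coordination device (or an equivalent one) the construction does not go through.

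Second, the statement you propose to establish by ``concentration over independent copies''---that the non-abort probability lies within a $(1\pm\delta)$ factor of a single $\lambda$ on \emph{every} input---is both false and not what the lemma asserts. For the $O(\delta)$-mass of inputs whose conditional divergences exceed $O(\IC_\mu(\pi)/\delta)$, the acceptance probability can be genuinely smaller, and no amount of repetition repairs this. The lemma only claims a per-input \emph{upper} bound (\autoref{eq:compressbadbot}) together with an \emph{average} lower bound (\autoref{eq:compressavgbot}). In the paper the upper bound comes for free from the structure of the sampler (each experiment outputs $u$ with probability at most $\tau(u)/(|\calU|2^{2\Delta})$ for every input, and the hash match contributes exactly $2^{-k}$), while the lower bound is proved only for typical inputs and then averaged over $\mu$. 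Your derivation of \autoref{eq:compressdist} from these two bounds is sound, but you need to supply the intersection-finding step and replace the claimed per-input two-sided bound by the correct asymmetric pair.
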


Our compression $\pi'$ extends the strategy outlined by
\cite{BW11}.  At a high level, the protocol $\pi'$ does the
following:
\begin{description*}
\item[Sample transcripts] Alice and Bob use their shared randomness to
  repeat $T$ independent executions of an experiment to sample transcripts
  (\autoref{alg:experiment}).  Alice and Bob each decide whether the
  experiment is accepted (they may differ in their opinions).
\item[Find common transcript] Let $\calA$ be the set of accepted experiments
  for Alice, and $\calB$ the set of accepted experiments for Bob.  They
  try to guess an element of $\calA \cap \calB$.  If they find one, they
  output according to the transcript from this experiment.
\end{description*}
We prove our compression lemma in Section \ref{proof}.

\subsection{ Information cost is lower bounded by the relaxed partition bound}
\label{sec:mainthmpf}

We show how our compression lemma implies the main theorem.

\begin{proof}[Proof of \autoref{thm:mainthm}]
 Let $\pi$ be a randomized communication protocol achieving $\IC_\mu(f, \eps)$ %, let $I=C \IC_\mu(\pi)/\delta^2$, 
 and let $\rel$ be the following relation that naturally arises from the function $f$
\begin{align*}
 \rel=\{(x,y,f(x,y)):(x,y)\in \setin\}\cup\{(x,y,z):(x,y)\notin \setin, z\in\setout\}.
\end{align*}

Let us now consider the zero-communication protocol $\pi'$ from
Lemma~\ref{lem:compress}.  As mentioned in \autoref{sec:relaxed-prt},
there is a natural way to identify $\pi'$ with a distribution over
labeled rectangles $(R, z)$: sample $z$ uniformly from $\setout$,
sample $\alea$ and let $R = A \times B$ where $A$ is the set of inputs
on which Alice outputs $z$, and similarly for $B$.  The sampling of
$z$ incurs a loss of $|\setout|$ in the efficiency.

We make this formal: for any fixed randomness $r$ occurring with
probability $p_r$, we define the rectangle $R(z,r)$ as the set of
$(x,y)$ such that the protocol outputs $z$, and we let
$p_{R,z}=\sum_{r:R=R(z,r)} p_r/|\setout|$.

We check the normalization constraint
\begin{align*}
 \sum_{R,z} p_{R,z}=\frac{1}{|\setout|}\sum_{R,z}\sum_{r:R=R(z,r)} p_r=\frac{1}{|\setout|}\sum_{r}p_r\sum_{R,z:R=R(z,r)} 1=\sum_r p_r =1.
\end{align*}

To see that \autoref{eq:partsecond} is satisfied, we have by
definition of $p_{R,z}$ that for any $(x,y)$:
\begin{align*}
\sum_{z,R:(x,y)\in R} p_{R,z} =\frac{1}{|\setout|}\Pr_{\alean}[\pi'(x,y)\neq\bot]\leq
\frac{(1 + \delta) \lambda}{|\setout|}. %= 2^{-I}.
\end{align*}
Finally, to see that \autoref{eq:partfirst} is satisfied, we have
\begin{eqnarray*}
\lefteqn{\sum_{(x,y)\in \setin}\mu_{x,y}\sum_{R:(x,y)\in R} p_{R,f(x,y)}
 + \sum_{(x,y)\notin \setin}\mu_{x,y}\sum_{z,R:(x,y)\in R} p_{R,z}}\\
&=&\frac{1}{|\setout|}\Pr_{\alean,(X,Y) \sim \mu}[(X,Y,\pi'(X,Y))\in \rel]\\
&=& \frac{1}{|\setout|}\Pr_{\alean,(X,Y)\sim\mu}[\pi'(X,Y)\neq\bot]\Pr_{\alean,(X,Y) \sim \mu}[(X,Y,\pi'(X,Y))\in \rel \mid \pi'(X,Y)\neq\bot]\\
&\geq& \frac{1}{|\setout|}(1-\delta)\,\lambda\,\left(\Pr_{\alean,(X,Y) \sim \mu}[(X,Y,\pi(X,Y))\in \rel]-\delta\right)\\
&\geq& \frac{1}{|\setout|}(1-\delta)\,\lambda\,\left(1-\eps-\delta\right)
\quad \geq \quad \tfrac{1}{|\setout|}\,\lambda\,\left(1-\eps-2\delta\right) 
\quad \geq \quad \tfrac{\lambda ( 1+ \delta)}{|\setout|} (1 - \eps - 3\delta)
%
%
%%& \geq \frac{\lambda}{|\setout|} \left(\Pr_{\alea,(X,Y) \sim \mu}[(X,Y,\pi(X,Y))\in \rel] - \delta \right)\\
%%& \geq \frac{\lambda}{|\setout|}  (1-(\eps+\delta)) 
\end{eqnarray*}
where for the last line we used the fact that $\pi$ has error $\eps$,
and so 
$\Pr_{\alea,(X,Y) \sim \mu}[(X,Y,\pi(X,Y))\in \rel] \geq 1-\eps$.
This satisfies the constraints in the linear
program~(\autoref{def:relaxed-prt}) for $\bprt^{\mu}_{\eps+3\delta}(f)$
with objective value $\eta= (1 + \delta) \lambda/|\setout|\geq 2^{-C
  (\IC_\mu(\pi)/\delta^2 + 1/\delta)}/|\setout|$. 
\end{proof}

By the definitions of the  information complexity and the relaxed partition bound, we have immediately
\begin{corollary}
\label{cor:maincor}
There exists a universal constant $C$ such that for all functions
$f:\setin\to\setout$, all $\eps,\delta \in (0, 1/2)$, we have
$\IC_D(f, \eps) \geq \frac{\delta^2}{C} [\log
\bprt_{\eps+3\delta}(f)-\log|{\setout}|] - \delta$.
\end{corollary}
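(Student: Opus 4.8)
The plan is to obtain \autoref{cor:maincor} directly from \autoref{thm:mainthm} by passing from the distributional statement to its worst case over the input distribution, using that by definition $\IC_D(f,\eps)=\max_\mu\IC_\mu(f,\eps)$ and $\bprt_{\eps}(f)=\max_\mu\bprt^\mu_{\eps}(f)$. In other words, all the content is in the theorem (hence ultimately in the compression \autoref{lem:compress} and the linear-program reformulation of \autoref{sec:mainthmpf}), and the corollary is just a matter of taking a supremum on both sides.

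First I would fix $f:\setin\to\setout$ and $\eps,\delta\in(0,\tfrac12)$; since $(0,\tfrac12)\subseteq(0,\tfrac12]$, \autoref{thm:mainthm} applies with these parameters and with the same universal constant $C$. Then for an arbitrary distribution $\mu$ on $\calX\times\calY$ we have
\[
\IC_D(f,\eps)\ \geq\ \IC_\mu(f,\eps)\ \geq\ \tfrac{\delta^2}{C}\bigl(\log\bprt^{\mu}_{\eps+3\delta}(f)-\log|\setout|\bigr)-\delta ,
\]
where the first inequality is the definition of $\IC_D$ as a maximum over distributions and the second is \autoref{thm:mainthm}.

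Next I would take the supremum over $\mu$ of the right-hand side. Since the map $t\mapsto \tfrac{\delta^2}{C}(\log t-\log|\setout|)-\delta$ is monotone increasing, and $\bprt_{\eps+3\delta}(f)=\sup_\mu\bprt^{\mu}_{\eps+3\delta}(f)$ (whether or not the supremum is attained), we get
\[
\sup_\mu\Bigl[\tfrac{\delta^2}{C}\bigl(\log\bprt^{\mu}_{\eps+3\delta}(f)-\log|\setout|\bigr)-\delta\Bigr]
=\tfrac{\delta^2}{C}\bigl(\log\bprt_{\eps+3\delta}(f)-\log|\setout|\bigr)-\delta .
\]
Because the left-hand side $\IC_D(f,\eps)$ of the previous display does not depend on $\mu$, combining the two displays gives exactly the statement of \autoref{cor:maincor}.

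I do not anticipate any real obstacle: the only point that deserves a sentence is that one may replace the fixed-$\mu$ bound by its worst case without producing a maximiser or invoking continuity in $\mu$, precisely because the left side is constant in $\mu$ and the right side is a monotone function of $\bprt^\mu_{\eps+3\delta}(f)$, so $\sup_\mu$ commutes with the logarithm and affine rescaling.
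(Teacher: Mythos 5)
Your proposal is correct and matches the paper's argument: the paper derives \autoref{cor:maincor} from \autoref{thm:mainthm} "by the definitions of the information complexity and the relaxed partition bound," i.e.\ exactly by taking the supremum over $\mu$ on both sides as you do. Your extra remark about monotonicity justifying the exchange of $\sup_\mu$ with the affine function of $\log\bprt^\mu_{\eps+3\delta}(f)$ is a fine (if unnecessary to spell out) elaboration of the same step.
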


\section{The zero-communication protocol}\label{proof}

The zero-communication protocol consists of two stages. First, Alice
and Bob use their shared randomness to come up with candidate
transcripts, based on the a priori information they have on the
distribution of the transcripts given by the information cost of the
protocol.  To do this, they run some sampling experiments and decide
which ones to accept. Second, they use their shared randomness in
order to choose an experiment that they have both accepted.  If anything fails in the course of the protocol, they abort by outputting~$\bot$.

\subsection{Single sampling experiment}
The single sampling experiment is described in
\autoref{alg:experiment} and appeared first in \cite{BW11} (variants
also appeared in \cite{Bra11b} and \cite{BR11}).
Roughly, \autoref{alg:experiment} takes a distribution $\tau$ and two
distributions $\nu_{\Al}, \nu_{\Bo}$ over a universe $\calU$ such that $\nu_{\Al},
\nu_{\Bo}$ are not too far from $\tau$ and tries to sample an element
of $\calU$ that is close to being distributed according to $\tau$.  

Let us informally describe the goal of this  sampling experiment
in our context.
Alice knowing $x$ and Bob $y$ want to sample transcripts according to $\Pi_{x,y}$ which is the distribution over the transcripts of the protocol $\pi$ applied to $(x,y)$.
When inputs $x,y$ are fixed, the probability of a transcript $u$ occurring
is the product of the probabilities of each bit in the transcript.
The product of the probabilities for Alice's bits is 
some function $p_\Al(u)$ which depends on $x$ and the product of the probabilities for Bob's bits is 
some function $p_\Bo(u)$ which depends on $y$ and $\Pi_{x,y}(u)=p_\Al(u) p_\Bo(u)$.
Alice can also estimate  $p_\Bo(u)$ by taking the average over $y$
of  $\Pi_y(u)$.  Call this estimate $q_\Al(u)$; similarly for 
Bob's estimate $q_\Bo(u)$.  Set $\nu_\Al=p_\Al q_\Al$ and
$\nu_\Bo=q_\Bo p_\Bo$.  

The challenge is that Alice and Bob know only $(p_{\Al}, q_{\Al})$ and
$(p_{\Bo},q_{\Bo})$ respectively and do not know $\tau$ (in our setting,
$\tau=\Pi_{x,y}$).  
They use a variant of rejection sampling, in which
Alice will overestimate  $q_{\Al}$ by a factor $2^\Delta$; likewise for Bob.  
Let us define the set of $\Delta$-bad elements with respect to $\tau,
\nu$ as follows:
$$\Bad_\Delta(\tau, \nu) = \{ u \in \calU \mid 2^\Delta \nu(u) <
\tau(u)\}.$$ %
Intuitively, $u$ is bad if $\tau$ gives much more weight to it than
$\nu$.  Observe that if $\tau = p_{\Al} p_{\Bo}, \nu_{\Al} = p_{\Al} q_{\Al}$, then $u
\notin \Bad_\Delta(\tau, \nu_{\Al})$ implies that $2^\Delta q_{\Al}(u) \geq
p_{\Bo}(u)$.

%OOOOOOOOOOOOOOOOOOOOOOOOOOOOOOOOOOOOOOOOOO
%OO SINGLE EXPERIMENT DEF
%OOOOOOOOOOOOOOOOOOOOOOOOOOOOOOOOOOOOOOOOOO
\begin{protocolf}{\label{alg:experiment} Single sampling experiment}
{\small
  Fix a finite universe $\calU$.  Let $p_{\Al}, q_{\Al}, p_{\Bo}, q_{\Bo} : \calU
  \dans [0,1]$ such that $\tau = p_{\Al} p_{\Bo}$, $\nu_{\Al} = p_{\Al} q_{\Al}$, $\nu_{\Bo} =
  p_{\Bo} q_{\Bo}$ are all probability distributions.

  Alice's input: $p_{\Al}, q_{\Al}$.  Bob's input: $p_{\Bo}, q_{\Bo}$.  Common input:
  parameter $\Delta > 0$.
  \begin{enumerate*}
  \item Using public coins, sample $u \leftarrow \calU$, $\alpha, \beta
    \leftarrow [0, 2^\Delta]$.
  \item Alice accepts the run if $\alpha \leq p_{\Al}(u)$ and $\beta \leq
    2^\Delta q_{\Al}(u)$.
  \item Bob accepts the run if $\alpha \leq 2^\Delta q_{\Bo}(u)$ and $\beta
    \leq p_{\Bo}(u)$.
  \item If both Alice and Bob accept, then we say that the experiment is accepted and the output is $u$.
    Otherwise, the output is $\bot$.
  \end{enumerate*}
}
\end{protocolf}

To prove our compression lemma, we use the following claim about the
single sampling experiment.
% , whose proof appears in ~\autoref{sec:claims-single}.
\begin{claim}
  \label{lem:single}
  Let $B = \Bad_\Delta(\tau, \nu_{\Al}) \cup \Bad_\Delta(\tau,
  \nu_{\Bo})$.  Let $\gamma = \tau(B)$. Then the following holds about
  \autoref{alg:experiment}:
  \begin{enumerate*}
  \item \label{step:alicebob} The probability that Alice accepts equals $\frac{1}{|\calU|2^\Delta}$ and the same for Bob.
 \item \label{step:nofail} The probability that the experiment is accepted is at most
    $\frac{1}{|\calU|2^{2\Delta}}$ and at least $\frac{1 -
      \gamma}{|\calU|2^{2\Delta}}$.
  \item \label{step:stat} Let $\tau'$ denote the distribution of the output of the
    experiment, conditioned on it being accepted.  Then $|\tau -
    \tau'| \leq \gamma$.
  \end{enumerate*}
\end{claim}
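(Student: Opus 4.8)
\textbf{Proof plan for Claim~\ref{lem:single}.}

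The plan is to analyze the single sampling experiment by computing probabilities directly from the uniform sampling of $u \leftarrow \calU$ and $\alpha,\beta \leftarrow [0,2^\Delta]$. For part~\ref{step:alicebob}, I would observe that Alice accepts exactly when $\alpha \le p_{\Al}(u)$ and $\beta \le 2^\Delta q_{\Al}(u)$; conditioned on a fixed $u$, these two events are independent (they involve the independent coordinates $\alpha$ and $\beta$), with probabilities $\frac{p_{\Al}(u)}{2^\Delta}$ and $\frac{2^\Delta q_{\Al}(u)}{2^\Delta} = q_{\Al}(u)$ respectively (using that $p_{\Al}(u), q_{\Al}(u) \in [0,1]$ so these are genuine probabilities). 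Averaging over the uniform choice of $u$ gives $\Pr[\text{Alice accepts}] = \frac{1}{|\calU|}\sum_u \frac{p_{\Al}(u) q_{\Al}(u)}{2^\Delta} = \frac{1}{|\calU| 2^\Delta}\sum_u \nu_{\Al}(u) = \frac{1}{|\calU| 2^\Delta}$ since $\nu_{\Al}$ is a probability distribution. The computation for Bob is symmetric.

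For part~\ref{step:nofail}, I would write the probability that the experiment is accepted, conditioned on a fixed $u$, as $\Pr[\alpha \le p_{\Al}(u) \text{ and } \alpha \le 2^\Delta q_{\Bo}(u)] \cdot \Pr[\beta \le p_{\Bo}(u) \text{ and } \beta \le 2^\Delta q_{\Al}(u)]$, again using independence of $\alpha$ and $\beta$. The first factor is $\frac{\min(p_{\Al}(u), 2^\Delta q_{\Bo}(u))}{2^\Delta}$ and the second is $\frac{\min(p_{\Bo}(u), 2^\Delta q_{\Al}(u))}{2^\Delta}$. The upper bound $\frac{1}{|\calU| 2^{2\Delta}}$ follows by bounding each $\min$ above by $p_{\Al}(u)$ and $p_{\Bo}(u)$ respectively, so the product is at most $\frac{p_{\Al}(u) p_{\Bo}(u)}{2^{2\Delta}} = \frac{\tau(u)}{2^{2\Delta}}$, and summing over $u$ with the $\frac{1}{|\calU|}$ factor. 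For the lower bound, the key observation is the one flagged in the text: if $u \notin \Bad_\Delta(\tau, \nu_{\Al})$, i.e. $2^\Delta \nu_{\Al}(u) = 2^\Delta p_{\Al}(u) q_{\Al}(u) \ge \tau(u) = p_{\Al}(u) p_{\Bo}(u)$, then (when $p_{\Al}(u) > 0$) $2^\Delta q_{\Al}(u) \ge p_{\Bo}(u)$, so $\min(p_{\Bo}(u), 2^\Delta q_{\Al}(u)) = p_{\Bo}(u)$; symmetrically $u \notin \Bad_\Delta(\tau, \nu_{\Bo})$ forces $\min(p_{\Al}(u), 2^\Delta q_{\Bo}(u)) = p_{\Al}(u)$. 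Hence for $u \notin B$ the conditional acceptance probability equals exactly $\frac{p_{\Al}(u) p_{\Bo}(u)}{2^{2\Delta}} = \frac{\tau(u)}{2^{2\Delta}}$ (the $p_{\Al}(u) = 0$ case is trivial since then both $\tau(u) = 0$ and the probability is $0$). Summing only over $u \notin B$ gives acceptance probability at least $\frac{1}{|\calU| 2^{2\Delta}} \sum_{u \notin B} \tau(u) = \frac{\tau(\oB)}{|\calU| 2^{2\Delta}} = \frac{1-\gamma}{|\calU| 2^{2\Delta}}$.

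For part~\ref{step:stat}, I would note that conditioned on acceptance, the output $u$ has probability proportional to its conditional acceptance probability, which I just showed equals $\frac{\tau(u)}{2^{2\Delta}}$ for $u \notin B$ and is at most $\frac{\tau(u)}{2^{2\Delta}}$ always; writing $Z$ for the normalizing constant (the total acceptance probability, which by part~\ref{step:nofail} satisfies $Z \ge \frac{1-\gamma}{|\calU| 2^{2\Delta}}$), we get $\tau'(u) = \frac{(\text{cond.\ accept prob of }u)}{|\calU| \cdot Z}$. For any set $T$, $\tau'(T) \le \tau'(T \setminus B) + \tau'(B)$; on $T \setminus B$ the numerator is exactly $\frac{\tau(u)}{2^{2\Delta}}$, so $\tau'(T\setminus B) \le \frac{\tau(T)}{|\calU| 2^{2\Delta} Z} \le \frac{\tau(T)}{1-\gamma}$, and I would combine this with a bound on $\tau'(B)$. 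A cleaner route: since $\tau'(u) \le \frac{\tau(u)}{|\calU| 2^{2\Delta} Z}$ for all $u$ and these sum to $1$, and equality holds off $B$, one shows $\sum_u \max(0, \tau'(u) - \tau(u)) $ is controlled by the mass $\tau$ and $\tau'$ place on $B$; since $\tau(B) = \gamma$, a short computation gives $|\tau - \tau'| \le \gamma$. The main obstacle is being careful with the normalization in part~\ref{step:stat} and with edge cases where $p_{\Al}(u)$ or $p_{\Bo}(u)$ vanishes, but the heart of the argument is the equivalence ``$u \notin \Bad_\Delta$'' $\Leftrightarrow$ ``the relevant $\min$ is not truncated,'' which makes the acceptance probability coincide exactly with $\tau$ away from $B$.
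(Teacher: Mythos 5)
Your proposal is correct and follows essentially the same route as the paper: the same direct computation for Alice's acceptance probability, the same $\min$-formula with the observation that $u\notin B$ means neither $\min$ is truncated (so the acceptance probability equals $\tau(u)/(|\calU|2^{2\Delta})$ off $B$), and for part 3 your "cleaner route" is just the pointwise form of the paper's set-splitting argument — since $\tau'(u)\geq\tau(u)$ for all $u\notin B$, the quantity $\sum_u\max(0,\tau(u)-\tau'(u))$ is supported on $B$ and hence at most $\tau(B)=\gamma$. No gaps.
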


Intuitively, this claim says that Alice accepts each single experiment
with probability $\frac{1}{|\calU|2^{\Delta}}$, and also implies that
conditioned on Alice accepting the $i$'th experiment, it is relatively
likely that Bob accepts it.  Therefore, by repeating this experiment
enough times, there is reasonable probability of Alice and Bob both
accepting the same execution of the experiment.  Conditioned on the
experiment accepting, the output of the experiment is distributed
close to the original distribution $\tau$.  In the next section, we
show how to use a hash function to select a common accepting execution
of the experiment out of many executions.

% \subsection{Claims about the single sampling experiment}
% \label{sec:claims-single}

We will use the following lemma that appears in \cite{Bra11b}.
\begin{lemma}[\cite{Bra11b}]
\label{lem:typical}
For all $\tau, \nu, \Delta, \eps$, it holds that
$\tau(\Bad_\Delta(\tau, \nu)) \leq \frac{D(\tau \parallel \nu) + 1}{
  \Delta}$. 
\end{lemma}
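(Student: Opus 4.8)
The plan is to read $\tau(\Bad_\Delta(\tau,\nu))$ as the tail probability of the log-likelihood ratio and apply Markov's inequality to a truncated version of it. Write $L(u) = \log_2\frac{\tau(u)}{\nu(u)}$, with the usual conventions ($L(u)=+\infty$ when $\nu(u)=0<\tau(u)$, in which case $D(\tau\parallel\nu)=+\infty$ and the bound is trivial; terms with $\tau(u)=0$ contribute $0$). By the definition of $\Bad_\Delta$ we have $\Bad_\Delta(\tau,\nu)=\{u:L(u)>\Delta\}$, so setting $h(u):=\max(0,L(u))\ge 0$ gives $\tau(\Bad_\Delta(\tau,\nu))=\Pr_{u\sim\tau}[h(u)>\Delta]\le \Exp_{u\sim\tau}[h(u)]/\Delta$. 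Hence it suffices to prove $\Exp_{u\sim\tau}[h(u)]\le D(\tau\parallel\nu)+1$; the parameter $\eps$ in the statement plays no role.

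For that bound I would split $\calU$ into the ``heavy'' part $G=\{u:\tau(u)\ge\nu(u)\}$, on which $h(u)=L(u)$, and its complement, on which $h(u)=0$. Thus $\Exp_\tau[h]=\sum_{u\in G}\tau(u)L(u)$, whereas $D(\tau\parallel\nu)=\sum_{u\in G}\tau(u)L(u)-\sum_{u\notin G}\tau(u)\log_2\frac{\nu(u)}{\tau(u)}$, so that $\Exp_\tau[h]=D(\tau\parallel\nu)+\sum_{u\notin G}\tau(u)\log_2\frac{\nu(u)}{\tau(u)}$. The remaining sum has nonnegative terms, and I would control it with the elementary inequality $t\ln(1/t)\le 1/e$ valid for $t\in[0,1]$: writing each term as $\frac{\nu(u)}{\ln 2}\cdot\frac{\tau(u)}{\nu(u)}\ln\frac{\nu(u)}{\tau(u)}\le \frac{\nu(u)}{e\ln 2}$ and summing over $u\notin G$ yields $\sum_{u\notin G}\tau(u)\log_2\frac{\nu(u)}{\tau(u)}\le \tfrac{1}{e\ln 2}<1$. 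Combining the two displays gives $\Exp_\tau[h]\le D(\tau\parallel\nu)+1$, and then Markov finishes the proof.

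The only point requiring care is the control of the ``negative'' side of the log-likelihood ratio, i.e.\ the mass that $\nu$ places far above $\tau$: applying Markov directly to $L$ fails since $L$ is unbounded below, and $D(\tau\parallel\nu)$ on its own does not account for this contribution. Truncating to $h$ and invoking $t\ln(1/t)\le 1/e$ is precisely what pushes that contribution into the harmless additive constant. I expect no genuine obstacle here; everything else is routine.
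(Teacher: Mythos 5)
Your proof is correct: the reduction of $\tau(\Bad_\Delta(\tau,\nu))$ to a Markov bound on the truncated log-likelihood ratio $h=\max(0,L)$, together with the pointwise estimate $t\ln(1/t)\le 1/e$ to show that the negative part of the KL sum contributes at most $\tfrac{1}{e\ln 2}<1$, gives exactly $\Exp_\tau[h]\le D(\tau\parallel\nu)+1$ and hence the claim. The paper itself gives no proof (it imports the lemma from \cite{Bra11b}), and your argument is essentially the standard one from that reference, which bounds the same negative contribution by a constant (there via the log-sum inequality on the complement of the bad set rather than a pointwise bound), so there is nothing to flag.
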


\begin{proof}[Proof of \autoref{lem:single}]
We use the arguments first given in \cite{BW11} and prove the items in order.
\begin{enumerate}
\item \textbf{Probability Alice/Bob accepts.}  We do the
  analysis for Alice; the case for Bob is entirely symmetric.  We may
  write:
\begin{align*}
\Pr[\text{Alice accepts}] & = \sum_{u \in \calU}
\frac{1}{|\calU|} \frac{p_{\Al}(u)}{2^\Delta} q_{\Al}(u) 
 = \frac{1}{|\calU|2^\Delta} \sum_{u \in \calU} \nu_\Al(u) 
 = \frac{1}{|\calU|2^\Delta}.
\end{align*}

\item \textbf{Probability of accepting.} First consider
  $u \notin B$.  For such $u$, if $\alpha \leq p_{\Al}(u)$
  then $\alpha \leq 2^\Delta q_{\Bo}(u)$ and also if $\beta \leq
  p_{\Bo}(u)$ then $\beta \leq 2^\Delta q_{\Al}(u)$.  Therefore we
  may write
  \begin{equation}
    \label{eq:notbad}
  \Pr[\text{Experiment outputs $u$}] = \frac{1}{|\calU|}
  \frac{p_{\Al}(u) p_{\Bo}(u)}{2^{2\Delta}} = \frac{\tau(u)}{|\calU|
    2^{2\Delta}}.
  \end{equation}

  Furthermore, for any $u \in \calU$, we may write
  \begin{equation*}
%    \label{eq:allu}
    \Pr[\text{Experiment outputs }u]  = \frac{1}{|\calU|2^{2\Delta}} \cdot
    \min\left\{ p_{\Al}(u), q_{\Bo}(u) 2^\Delta \right\} \cdot \min\left\{ p_{\Bo}(u), 
      q_{\Al}(u) 2^\Delta \right\} \leq
    \frac{\tau(u)}{|\calU|2^{2\Delta}}.
  \end{equation*}
  
  For the upper bound we have:
  $$\Pr[\exists u \in \calU,\ \text{Experiment outputs $u$}] = \sum_{u
    \in \calU} \Pr[\text{Experiment outputs $u$}] \leq \sum_{u \in
    \calU} \frac{\tau(u)}{|\calU|2^{2\Delta}} =
  \frac{1}{|\calU|2^{2\Delta}}.$$ %
  For the lower bound we have
  \begin{align*}
    \sum_{u \in \calU} \Pr[\text{Experiment outputs $u$}] & \geq \sum_{u
      \notin B} \Pr[\text{Experiment outputs $u$}] \\
    & = \sum_{u \notin B} \frac{\tau(u)}{ |\calU|2^{2\Delta}} \quad =
    \quad \frac{1 - \tau(B)}{|\calU|2^{2\Delta}} \quad = \quad \frac{1
      - \gamma}{|\calU|2^{2\Delta} }.
  \end{align*}

\item \textbf{Statistical closeness of $\tau$ and $\tau'$.} Let $\eta$ denote the
  probability that the experiment is accepted.  From the
  previous point, we have that $\eta \in [ \frac{1 -
    \gamma}{|\calU|2^{2\Delta}}, \frac{1}{|\calU| 2^{2\Delta}}]$.  By
  the definition of statistical distance, it suffices to prove that:
  \begin{equation*}
%    \label{eq:stat}
    \forall S \subseteq \calU, \quad \tau(S) - \tau'(S) \leq \gamma.
  \end{equation*}

  We proceed by splitting the elements of $S$ based on whether they
  intersect $B$.
  \begin{align*}
    \tau(S) - \tau'(S) & = \tau(S \cap B) - \tau'(S \cap B) + \tau(S \cap
    \oB)
    - \tau'(S \cap \oB) \nonumber\\
    & \leq \gamma + \tau(S \cap \oB) - \tau'(S \cap
    \oB). %\label{eq:aosibz}
  \end{align*}
  From \autoref{eq:notbad} we can deduce that $\tau'(S \cap \oB) =
  \frac{\tau(S \cap \oB)}{|\calU| 2^{2\Delta} \eta}$.  Therefore:
  \begin{align*}
    \tau(S \cap \oB) - \tau'(S \cap \oB) & = \tau(S \cap \oB) (1 -
    \tfrac{1}{|\calU| 2^{2\Delta} \eta}).
  \end{align*}
  Since $\eta \leq \frac{1}{|\calU|2^{2\Delta}}$, we have that
  $(1 - \frac{1}{|\calU|2^{2\Delta} \eta}) \leq 0$, which concludes the proof.
\end{enumerate}
\end{proof}

%%%%%%%%%%%%%%%%%%%%%%%%%%%%%%%%%%%%%%%%%%%%%%%%%%%%%%
%%% COMBINE AND HASH
%%%%%%%%%%%%%%%%%%%%%%%%%%%%%%%%%%%%%%%%%%%%%%%%%%%%%%
\subsection{Description and analysis of the zero-communication protocol}
  
Let $\mu$ be any distribution on inputs and $\pi$ be any protocol with information complexity $I = \IC_\mu(\pi)$.
Let $(X, Y, \Pi)$ be the joint random variables where $X, Y$
      are distributed according to $\mu$ and $\Pi$ is the distribution
      of the transcript of the protocol $\pi$ applied to $X, Y$ (by slight abuse of notation we use the letter $\Pi$ for both the transcript and its distribution).  Let
      $\Pi_{x,y}$ be $\Pi$ conditioned on $X = x, Y = y$, $\Pi_x$ be
      $\Pi$ conditioned $X = x$, and $\Pi_y$ likewise.

Let $\calU$ be the space of all possible transcripts.  We assume that each transcript contains the output of the protocol. As shown
      in \cite{Bra11b} and described above, Alice can construct functions $p_{\Al}, q_{\Al}:\calU \dans [0,1]$
      and Bob can construct functions $p_{\Bo}, q_{\Bo}:\calU \dans [0,1]$, such that for all $u \in \calU$,
      $\Pi_{x,y}(u) = p_{\Al}(u) p_{\Bo}(u)$, $\Pi_x(u) = p_{\Al}(u) q_{\Al}(u)$, and
      $\Pi_y(u) = p_{\Bo}(u) q_{\Bo}(u)$.

The zero-communication protocol $\pi'$ is described in \autoref{alg:compressed}.  This protocol is an extension of the one in \cite{BW11}, where here Alice uses public coins to
guess the hash function value instead of calculating and transmitting
it to Bob and both players are allowed to abort when they do not believe they can output the correct value.

  %We wish to calculate $\adv^b_f(\pi'; \mu)$.

  \begin{protocolf}{\label{alg:compressed} Zero-communication
       protocol $\pi'$ derived from $\pi$}
{\small
    Alice's input: $x$.  Bob's input: $y$.  Common inputs: $\delta>0,
    I > 0$.

    Set parameters: $\Delta = \frac{4}{\delta} \cdot (\frac{8 \cdot
      I}{\delta} + 1)$ and $T = |\calU| 2^\Delta \ln (8/\delta)$ and
    $k = \Delta + \log (\frac{64}{\delta} \ln (8/\delta)^2)$.
    \begin{enumerate*}
    \item Alice constructs functions $p_{\Al}, q_{\Al}:\calU \dans [0,1]$
      and Bob constructs functions $p_{\Bo}, q_{\Bo}:\calU \dans [0,1]$,
      such that for all transcripts $u \in \calU$,
      $\Pi_{x,y}(u) = p_{\Al}(u) p_{\Bo}(u)$, $\Pi_x(u) = p_{\Al}(u) q_{\Al}(u)$, and
      $\Pi_y(u) = p_{\Bo}(u) q_{\Bo}(u)$.
    \item (\textbf{Run experiments.}) Using public coins, Alice and Bob run 
      \autoref{alg:experiment} $T$ independent times with inputs $p_{\Al}, q_{\Al}, p_{\Bo}, q_{\Bo}$ and  
      $\Delta$.
    \item Let $\calA = \{ i \in [T] : \text{Alice accepts
        experiment }i\}$ and similarly $\calB$ for Bob.  If either set
      is empty, that party outputs the abort symbol $\bot$.
    \item \label{step:findint} (\textbf{Find intersection.})  Using
      public coins, Alice and Bob choose a random function $h:[T]
      \dans \zo^k$ and a random string $r \in \zo^k$.
      \begin{enumerate*}
      \item Alice finds the smallest $i \in \calA$.  If $h(i) \neq r$
        then Alice outputs $\bot$.  Otherwise, Alice outputs in accordance with the transcript of experiment $i$.
        \item Bob finds the smallest $j \in \calB$ such that $h(j) = r$.
        If no such $j$ exists, he outputs $\bot$.  Otherwise, Bob outputs in accordance with the transcript of experiment $j$.
              \end{enumerate*}
    \end{enumerate*}
}
  \end{protocolf}

In order to analyze our protocol, we first define some events and give bounds on their probabilities.

 \begin{definition}
   We define the following events over the probability space of
   sampling $(X, Y)$ according to~$\mu$ and running $\pi'$ on $(X, Y)$
   to produce a transcript $\Pi$:
    \begin{enumerate*}
    \item \textbf{Large divergence.}  $B_D$ occurs if $(X, Y) = (x,
      y)$ such that $D(\Pi_{x,y} \parallel \Pi_x) >
      \frac{8\IC_\mu(\pi)}{\delta}$ or $D(\Pi_{x,y} \parallel \Pi_y) >
      \frac{8\IC_\mu(\pi)}{\delta}$.  We will also let $B_D$ denote the
      set of such $(x, y)$.
      \item \textbf{Collision.} $B_C$ occurs if there exist distinct $i,
      j \in \calA \cup \calB$ such that $h(i) = h(j) = r$.
   % \item \textbf{$\calA$ empty.} $B_E$ occurs if $\calA =\varnothing$.
    \item \textbf{Protocol outputs something.} $H$ occurs if $\pi'(X,
      Y) \neq \bot$.
   \end{enumerate*}
  \end{definition}

The proof of the main compression lemma (\autoref{lem:compress}) uses
the following claim.
% , whose proof is deferred to \autoref{sec:claims-zero-comm}.
  \begin{claim}
    \label{claim:compresspf}
    The probability of the above events are bounded as follows:
    \begin{enumerate*}
    \item \label{item:largediv} The inputs rarely have large
      divergence: $\Pr_{(X,Y)\sim \mu}[B_D] \leq \delta/4$.
\comment{ 
  \item \label{item:empty} $\calA$ is rarely empty: $\forall (x, y),
      \Pr_{\alean}[B_E] \leq \delta / 8$.
        
}      
    \item \label{item:collision} For all $(x, y)$, the hash function
      rarely has a collision: $\Pr_{\alean}[B_C] \leq \frac{\delta}{16} \cdot 2^{-(k+\Delta)}$. 
%    In   particular this implies that $\Pr[B_C \mid E] \leq \frac{\eps}{8} \cdot 2^{-(k+\Delta)}$  holds for any event $E$ depending only on the input, such as $B_D$ or $f(X, Y) = b$.
    
\item \label{item:gooda} For all $(x, y) \notin B_D$, the probability
  of outputting something is not too small: $\Pr_{\alean}[H]  \geq
  (1-\frac{11\delta}{16})2^{-(k+\Delta)}$. 

\item \label{item:nottoogood} For all $(x, y)$ the probability of
  outputting something is not too large: $\Pr_{\alean}[H] \leq (1 +
  \tfrac{\delta}{16})2^{-(k+\Delta)}$.

\item \label{item:statdist} For all protocols $\pi$, input
  distributions $\mu$ and $\delta > 0$, the protocol $\pi'$ in
  \autoref{alg:compressed} satisfies: For all $(x, y) \notin B_D$, let
  $\Pi'_{x,y,H}$ be the distribution of $\pi'(x, y)$ conditioned on
  $H$ (namely, on $\pi'(x, y) \neq \bot$).  Then $| \Pi_{x,y} -
  \Pi'_{x,y,H} | \leq 3\delta / 4$.
\end{enumerate*}
  \end{claim}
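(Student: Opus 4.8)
The plan is to prove the six items of \autoref{claim:compresspf} in order, since the later items depend on the earlier ones. For item~\ref{item:largediv}, the idea is that $\IC_\mu(\pi) = \Exp_{(x,y)}[D(\Pi_{x,y}\parallel \Pi_y)] + \Exp_{(x,y)}[D(\Pi_{x,y}\parallel \Pi_x)]$ (this is the standard fact that internal information cost equals an average of these divergences), so by Markov's inequality the probability that either divergence exceeds $\frac{8\IC_\mu(\pi)}{\delta}$ is at most $\frac{\delta}{8}$ each, and a union bound gives $\delta/4$. For item~\ref{item:collision}, I would fix $(x,y)$ and observe that $h$ is a uniformly random function into $\zo^k$, so the probability that two fixed distinct indices both hash to $r$ is $2^{-2k}$, and $r$ is uniform; more carefully, I condition on $\calA\cup\calB$ having size at most some bound (which follows from item~\ref{step:alicebob} of \autoref{lem:single}: the expected size of $\calA$ is $T/(|\calU|2^\Delta) = \ln(8/\delta)$), and then a union bound over pairs gives the $2^{-(k+\Delta)}$ scaling after plugging in the chosen value of $k$. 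This is mostly bookkeeping with the parameter choices.

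For items~\ref{item:gooda} and~\ref{item:nottoogood} I would analyze $\Pr_{\alean}[H]$, the non-abort probability. The protocol outputs something precisely when Alice's smallest accepted index $i$ satisfies $h(i)=r$ and Bob also has accepted some $j$ with $h(j)=r$ and his smallest such $j$ equals $i$ (so that they output the same transcript). The clean way is: $H$ happens essentially when $\calA\cap\calB\neq\emptyset$ and the hash isolates a common element correctly, minus the collision error. Using \autoref{lem:single}: each experiment is accepted by Alice with probability $\frac{1}{|\calU|2^\Delta}$, and conditioned on Alice accepting experiment $i$, Bob accepts it with probability at least $1-\gamma$ where $\gamma=\tau(B)\le \frac{2(\text{div}+1)}{\Delta}$ by \autoref{lem:typical}; for $(x,y)\notin B_D$ this divergence is at most $\frac{8I}{\delta}$, so $\gamma \le \frac{2(8I/\delta+1)}{\Delta} = \delta/2$ by the choice of $\Delta$. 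Then the probability that a fixed experiment is jointly accepted is in $[\frac{1-\gamma}{|\calU|2^{2\Delta}}, \frac{1}{|\calU|2^{2\Delta}}]$. The hashing step selects, with probability $\approx 2^{-k}$, whether the smallest Alice-accepted index passes, and one needs the smallest jointly-accepted index to coincide; I would argue the dominant contribution is $2^{-k}\cdot(\text{prob. the minimal element of }\calA\text{ is jointly accepted})$, bound the ``bad'' contributions (Alice's minimum not jointly accepted, collisions) by the small error terms, and collect constants so that $\Pr[H]$ lands in $[(1-\tfrac{11\delta}{16})2^{-(k+\Delta)}, (1+\tfrac{\delta}{16})2^{-(k+\Delta)}]$. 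The lower bound also needs $\calA$ nonempty with high probability: since $|\calA|$ is Binomial with mean $\ln(8/\delta)$ over $T$ trials, $\Pr[\calA=\emptyset] = (1-\frac{1}{|\calU|2^\Delta})^T \le e^{-\ln(8/\delta)} = \delta/8$.

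For item~\ref{item:statdist}, the statistical-distance bound, I would decompose the error into three sources: (a) the single-experiment output distribution $\tau'$ differs from $\tau=\Pi_{x,y}$ by at most $\gamma\le\delta/2$ (item~\ref{step:stat} of \autoref{lem:single}); (b) conditioning on a collision or on Alice's minimum not being jointly accepted perturbs which experiment's transcript gets reported — these events have probability $O(\delta)$ relative to $\Pr[H]$ by items~\ref{item:collision}--\ref{item:nottoogood}, contributing at most $\delta/8$ or so to the statistical distance; (c) the index actually output, conditioned on $H$ and no collision, is distributed so that its transcript is a fresh sample from $\tau'$ (the hash is independent of the experiment outputs). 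Summing, $|\Pi_{x,y}-\Pi'_{x,y,H}|\le \delta/2 + \delta/8 + \delta/8 = 3\delta/4$. I expect the main obstacle to be item~\ref{item:statdist} together with the lower bound in item~\ref{item:gooda}: one has to be careful that conditioning on $H$ does not bias the reported transcript away from $\tau'$, which requires checking that the selection rule (smallest index passing the hash test, agreement of Alice's and Bob's choices) treats all jointly-accepted experiments symmetrically up to the low-probability collision/minimum-mismatch events — essentially a coupling argument between the idealized process (output a uniformly random jointly-accepted experiment's transcript) and the real process. Once that coupling is set up, the rest is plugging in the parameter values $\Delta, T, k$ and verifying the constants.
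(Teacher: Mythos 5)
Your proposal follows the paper's proof in structure and contains all the key ideas: Markov plus a union bound for $B_D$; the event ``the smallest $i\in\calA$ satisfies $h(i)=r$ and $i\in\calB$'', which differs from $H$ only inside $B_C$; the bound $\Pr[i\in\calB\mid i\in\calA]\in[(1-\gamma)2^{-\Delta},2^{-\Delta}]$ with $\gamma\le\delta/2$ via \autoref{lem:typical}; and, for the last item, the observation that conditioned on non-abort and no collision the reported transcript is a fresh sample from $\tau'$, so the only extra cost is the collision probability divided by $\Pr[H]$ (the paper packages this as \autoref{lem:conditional}).

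The one step that would fail as written is your ``more careful'' version of the collision bound. The target $\Pr[B_C]\le\tfrac{\delta}{16}2^{-(k+\Delta)}$ is exponentially small in $\Delta$, so if you first condition on $|\calA\cup\calB|\le M$ you need the tail $\Pr[|\calA\cup\calB|>M]$ to be at most roughly $2^{-2\Delta}$, which forces $M=\Omega(\Delta/\log\Delta)$; the resulting bound $M^2 2^{-2k}$ then exceeds $\tfrac{\delta}{16}2^{-(k+\Delta)}$ by a factor polynomial in $\Delta$, since $2^{k-\Delta}$ is only $\tfrac{64}{\delta}\ln(8/\delta)^2$. The fix is the first thing you wrote, carried out at the level of expectations rather than high-probability sizes: bound the expected number of colliding pairs,
\[
\Pr[B_C]\le\sum_{i\ne j}\Pr[i\in\calA\cup\calB]\,\Pr[j\in\calA\cup\calB]\,\Pr[h(i)=h(j)=r]\le T^2\cdot\frac{4}{(|\calU|2^\Delta)^2}\cdot 2^{-2k}=4\ln(8/\delta)^2\,2^{-2k},
\]
which equals $\tfrac{\delta}{16}2^{-(k+\Delta)}$ by the choice of $k$. (Alternatively one could inflate $k$ by $O(\log\Delta)$ to make your route go through, but then the claim's stated constants, tied to the protocol's fixed $k$, would no longer be what you are proving.) The rest of your plan, including the constant accounting in items 3--5, matches the paper and goes through.
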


%%\begin{claim}
  %%\label{claim:statdist}
  %%For every protocol $\pi$, input
  %%distributions $\mu$ and $\delta > 0$, the algorithm $\pi'$ given in
  %%\autoref{alg:compressed} satisfies:
  %%\begin{enumerate*}
  %%\item \label{item:badbot} For all $(x, y)$, it holds that $\Pr_{\alean}[\pi'(x, y)
    %%\neq \bot] \leq 2^{-(k+\Delta)}$
  %%\item \label{item:goodbot} For all $(x, y) \notin B_D$, it holds that
    %%$\Pr_{\alean}[\pi'(x, y) \neq \bot] \geq (1 - \frac{\delta}{2}) 2^{-(k+\Delta)}$.
  %%\end{enumerate*}
%%\end{claim}

%The proofs are provided in Appendix~\ref{app:claims}.

%%%%%%%%%%%%%%%%%%%%%%%%%%%%%%%%%%%%%%%%%
%%%%% COMPRESSION LEMMA
%%%%%%%%%%%%%%%%%%%%%%%%%%%%%%%%%%%%%%%%%

%%\begin{proof}[Proof of \autoref{claim:statdist}]
%%
  %%\begin{enumerate*}
  %%\item This follows from \autoref{item:nottoogood} of
    %%\autoref{claim:compresspf}, since if $G$ does not occur then at
    %%least one of $\Al$ or $\Bo$ outputs $\bot$.
  %%\item This follows from
    %%\autoref{item:collision} and \autoref{item:gooda} of \autoref{claim:compresspf} since if $G$ occurs and $B_C$ does not,
    %%then both $\Al$ and $\Bo$ output the same transcript.
  %%\end{enumerate*}
%%\end{proof}

% \subsection{Claims about the zero-communication protocol}
% \label{sec:claims-zero-comm}

% Also, if $\Pr[f(X, Y) = b] \leq \beta$ then the compression lemma is trivially
% true.  So in the following let us assume that $\Pr[f(X, Y) = b] >  \beta$.
%\comm{Check if we still need something like this.}

\begin{proof}
% [Proof of \autoref{claim:compresspf}]
 In the following, we will frequently use the fact that for all $p,
 \alpha \in [0, 1]$, it holds that $p(1 - \alpha) \geq p - \alpha$.

  We extend the arguments given in \cite{BW11} to prove the items of the claim in order.
  \begin{enumerate}

 \item By the definition of information complexity and the fact that mutual
    information is equal to the expectation of the divergence, we have that for $(X, Y)$ distributed according to
    $\mu$, 
    \begin{align*}
      \IC_\mu(\pi) & = I(X ; \Pi \mid Y) + I(Y ; \Pi \mid X) 
       = \Exp_{(x, y) \leftarrow (X,Y)} [D(\Pi_{x,y} \parallel \Pi_y) +
      D(\Pi_{x,y} \parallel \Pi_x)].
    \end{align*}
    This implies that $\Exp_{(x, y) \leftarrow (X,Y)} [D(\Pi_{x,y} \parallel
    \Pi_x)] \leq \IC_\mu(\pi)$, and since divergence is non-negative
    we have by Markov's inequality that
    $$\Pr_{(x, y) \leftarrow (X,Y)} [D(\Pi_{x,y} \parallel \Pi_x) > 8
    \IC_\mu(\pi) / \delta] \leq \delta / 8.$$ %
    The same argument holds for $D(\Pi_{x,y} \parallel \Pi_y)$ and  by a
    union bound, we have $\Pr_{(X,Y)\sim \mu}[B_D] \leq \delta/4$.

  \item We may write:
    \begin{align*}
 \lefteqn{\Pr_{\alean}[B_C]}\\
 & = \Pr[\exists i \neq j \in
      [T]\text{ s.t. }i \in (\calA \cup \calB), j \in (\calA \cup \calB), h(i) = h(j) = r] \\
      & \leq \sum_{i \neq j \in [T]} \Pr[i \in (\calA \cup \calB) \wedge j \in (\calA \cup
      \calB) \wedge h(i) = h(j) = r ] \\
        & = \sum_{i \neq j} \Pr[i \in (\calA \cup \calB) ]  \Pr[ j \in (\calA \cup \calB)] 
         \Pr[h(i) = h(j) = r ]
      \\ 
      & \leq T^2 \frac{4}{(|\calU|2^\Delta)^2} \cdot \frac{1}{2^{2k}} \\
      & \leq \tfrac{\delta}{16} \cdot 2^{-(k + \Delta)}.
    \end{align*}
    where we have used the independence between
    the trials and independence of the $h$ from the trials, as well as \autoref{step:alicebob}
    of \autoref{lem:single}.

  \item Let us define $G$ to be the event that the smallest $i \in
    \calA$ satisfies $h(i) = r$, and also $i \in \calB$.  (Notice this
    implies that $\calA, \calB$ are both non-empty.)  We have
\[
   \Pr_{\alean}[G] 
    = \Pr[\calA\neq \varnothing] \cdot \Pr[G
   \mid  \calA\neq \varnothing ].
\]
 Observe that an element $i$ is in $\calA$ if and only if
    experiment $i$ is accepted by Alice.  By \autoref{step:alicebob}
    of \autoref{lem:single}, the probability of Alice aborting each
    experiment $i$ is $1 - \frac{1}{|\calU|2^{\Delta}}$.  Since the
    experiments are independent, the probability of Alice aborting
    all experiments is
    $$  \Pr[\calA = \varnothing ] = \left(1 - \tfrac{1}{|\calU|2^{\Delta}} \right)^T \leq e^{-
      \frac{T}{ |\calU| 2^\Delta}} \leq \delta / 8.$$

We assume now that $\calA$ is non empty and we denote by $i$ its first element.
  \begin{align*}\Pr&[G \mid\calA\neq \varnothing ] \\
 &  = \Pr[h(i) = r \mid \calA\neq \varnothing ] \Pr[i \in \calB \mid  i \in \calA  \wedge h(i)=r \wedge \calA\neq \varnothing].
  \end{align*}
  
 % The condition $\neg B_E$ assures that $\calA$ is non-empty.
    For all $j$, the probability that $h(j) = r$ is exactly $2^{-k}$,
    in particular this holds for the first element of $\calA$.
   
    For any $(x, y) \notin B_D$, we have that $D(\Pi_{x,y} \parallel
    \Pi_x) \leq 8 \IC_\mu(\pi) / \delta$.  Let us say that a
    transcript is ``bad for Alice'' (resp. Bob) if it lies in the set
    $\Bad_\Delta(\Pi_{x,y}, \Pi_x)$ (resp. in the set
    $\Bad_\Delta(\Pi_{x,y}, \Pi_y)$.  Using \autoref{lem:typical},
    this implies that
    \begin{align*}
      \gamma_A & = \Pr[\Pi_{x,y} \text{ bad for Alice}]
      \leq \frac{\frac{8}{\delta} \IC_\mu(\pi) + 1}{\Delta} \leq
      \delta/4         \label{eq:naobnzp} \\
      \gamma_B & = \Pr[\Pi_{x,y} \text{ bad for Bob}]
      \leq \delta/4. %\label{eq:oiasdjf}
    \end{align*}
    It follows that $\gamma = \Pr[\Pi_{x,y} \text{ bad for Alice or
      Bob}] \leq \gamma_A + \gamma_B \leq \delta / 2$.

    By definition, for any $j$, experiment $j$ is accepted if and
    only if $j \in \calA \cap \calB$.  Therefore, $\forall j \in [T]$:
    \begin{align*}
      \Pr[j \in \calB \mid j \in \calA] & =
      \Pr[\text{experiment $j$ is accepted} \mid j \in \calA ] \\
      & = \frac{\Pr[\text{experiment $j$ is accepted} ]}{\Pr[j \in \calA ]} \\ 
      & \geq \frac{1-\gamma}{2^\Delta} %\label{eq:npaoih} 
\\
      & \geq \frac{1-\frac{\delta}{2}}{2^\Delta}.
    \end{align*}
    where we used \autoref{step:alicebob} and
    \autoref{step:nofail} of \autoref{lem:single}, and
    the fact that $\neg B_D$ implies that $\gamma \leq \delta/4$.

    Also, observe that by the definition of the protocol, the choice
    of $h$ and $r$ are completely independent of the experiments.
    Therefore we may add the condition that $h(j) = r$ without
    altering the probability. Since $j \in \calA$ implies $\calA\neq \varnothing$, we can add this condition too.
    We use this with $j=i$, so therefore we may write
\[
 \Pr_{\alean}[G] \geq (1-\delta/8) \frac{1-\frac{\delta}{2}}{2^{k+\Delta}} \geq  \frac{1-\frac{5\delta}{8}}{2^{k+\Delta}}.
\]
  Finally, observe that $H \setminus B_C = G \setminus B_C$, therefore
  we may conclude that:
  $$\Pr[H] \geq \Pr[H \setminus B_C] = \Pr[G \setminus B_C] \geq
  \Pr[G] - \Pr[B_C] \geq (1 - \tfrac{11\delta}{16}) 2^{-k-\Delta}.$$

  \item We will again use the event $G$ as defined in the previous
    point.  We will again use the fact that:
  \begin{align*}
\Pr_{\alean}[G ] & = \Pr[\calA\neq \varnothing] \Pr[h(i) = r\mid\calA\neq \varnothing ] \Pr[i \in \calB \mid i \in \calA  \wedge h(i)=r \wedge \calA\neq \varnothing]\\
&\leq \Pr[h(i) = r\mid\calA\neq \varnothing ] \Pr[i \in \calB \mid i \in \calA  \wedge h(i)=r \wedge \calA\neq \varnothing].
  \end{align*}
As before the first factor is exactly $2^{-k}$ for any $i$.
  We may also write:
    \begin{align*}
      \Pr[i \in \calB \mid i \in \calA ] & =
      \Pr[\text{experiment $i$ is accepted} \mid i \in \calA ] \\
      & = \frac{\Pr[\text{experiment $i$ is accepted}]}{\Pr[i \in \calA]} \\
      & \leq \frac{1}{2^\Delta},
    \end{align*}
    where we used \autoref{step:alicebob} and \autoref{step:nofail} of
    \autoref{lem:single}.  As with the previous point, adding the
    conditions $h(i) = r$ and $\calA\neq\varnothing$ does not affect
    the probabilities.  Therefore, $\Pr[G] \leq 2^{-k-\Delta}$.
    Finally, observe that $H \subseteq G \cup B_C$, and therefore:
    $$\Pr[H] \leq \Pr[G \cup B_C] \leq \Pr[G] + \Pr[B_C] \leq (1 +
    \tfrac{\delta}{16}) 2^{-k-\Delta}.$$
    
  \item The distribution of $\Pi'_{x,y}$ conditioned on not aborting
    \emph{and} on no collision is simply the distribution of the output
    of a single experiment, and we know from the facts about the
    single experiment (\autoref{lem:single}) that this is close to
    $\Pi_{x,y}$.  We wish to conclude that $\Pi'_{x,y}$ conditioned
    \emph{only} on not aborting is also close to $\Pi_{x,y}$.  The
    following lemma allows us to do this by using the fact that the
    probability of collision is small:
  \begin{claim}
    \label{lem:conditional}
    Let $\Pi$ and $\Pi'$ two distributions taking output in a common
    universe. Let $F$ and $H$ be two events in
    the underlying probability space of $\Pi'$. Finally, we let $\Pi'_E$ denote the distribution $\Pi'$ conditioned on $E=H\setminus F$, and assume that $|\Pi'_E - \Pi| \leq c$.
% We let $\Pi'_E$ denote the distribution $\Pi'$
%     conditioned on $E$.  Let $E$, $F$, and $H$ three events in
%     the underlying probability space of $\Pi'$ where $E \subseteq H$
%     and $(H - E) \subseteq F$ and satisfying:
%   \begin{itemize}
%   \item $\Pr_{\Pi'}[F] \leq \delta_F$
%   \item $\Pr_{\Pi'}[H] \geq \delta_H$
%   \item $|\Pi'_E - \Pi| \leq c$
%   \end{itemize}
  Then it holds that $|\Pi'_H - \Pi| \leq c + \frac{\Pr_{\Pi'}[F]}{\Pr_{\Pi'}[H]}$.
  \end{claim}
  \begin{proof}
    For shorthand, for any event $E$ let us write $\Pi'_H(E) =
    \Pr_{\Pi'}[E \mid H]$, and similarly for $\Pi'(E)$ and $\Pi(E)$.
    For a set $S$ in the support of $\Pi'$ and $\Pi$, we let $S$ also
    denote the event that the value of the random variable is in $S$.

    It suffices to prove that, for all subsets $S$ in the union of the
    supports of $\Pi$ and $\Pi'_H$, it holds that $\Pi'_H(S) - \Pi(S)
    \leq c +\frac{\Pr_{\Pi'}[F]}{\Pr_{\Pi'}[H]}$.  To show this, we may write:
    \begin{align*}
      \Pi'_H (S) - \Pi(S) &  = \frac{\Pi'(H \cap S)}{\Pi'(H)} - \Pi(S)
      \\
      & \leq  \frac{\Pi'(E \cap S) + \Pi'((H\setminus E) \cap S)}{\Pi'(H)} - \Pi(S) \\
      & \leq  \frac{\Pi'(E)(c+ \Pi(S)) + \Pi'(F)}{\Pi'(H)} -\Pi(S)
    \end{align*}
   since $(H\setminus E) \subseteq F$ and $|\Pi'_E - \Pi| \leq c$.  Using the fact that $E
    \subset H$, we can conclude that:
    $$ \Pi'_H(S) - \Pi(S) \leq c+ \frac{\Pi'(F)}{\Pi'(H)} = c +
    \frac{\Pr_{\Pi'}[F]}{\Pr_{\Pi'}[H]}.$$  
 \end{proof}
   
   We apply this lemma with $\Pi = \Pi_{x,y}$, $\Pi' = \Pi'_{x,y}$,
   $F = B_C$, and $H$ the event that $\pi'(x,y) \neq
   \bot$. We note that $E=H\setminus B_C=G\setminus B_C$. We calculate $c$, $\Pr_{\alean}[F]$ and  $\Pr_{\alean}[H]$:
  \begin{itemize}
  
   \item Since $h$ and $r$ are completely independent of the actual
     experiments themselves, it holds that the distribution of
     $\Pi'_{x,y}$ conditioned on $G\setminus B_C$ is identical to the output of
     a single experiment (\autoref{alg:experiment}). Since $(x, y)
     \notin B_D$, the measure of the bad set $\Bad_\Delta(\Pi_{x,y},
     \Pi_y) \cup \Bad_\Delta(\Pi_{x,y}, \Pi_x)$ is bounded by
     $\delta/2$.  We apply \autoref{step:stat} of \autoref{lem:single}
     to deduce that $| \Pi_{x,y} - \Pi'_{x, y,E} | \leq \delta/2 = c$.
    
 \item From \autoref{item:collision} of \autoref{claim:compresspf}, we know that $\Pr_{\alean}[F]= \Pr_{\alean}[B_C] \leq \frac{\delta}{16} \cdot 2^{-(k+\Delta)}$.

 \item From \autoref{item:gooda} of \autoref{claim:compresspf}, we
   know $\Pr_{\alean}[H] \geq (1 - \frac{11\delta}{16})
   2^{-(k+\Delta)}$ because $(x, y) \notin B_D$.
 \end{itemize}
 
 Therefore, \autoref{lem:conditional} implies that:
 $$| \Pi_{x,y} - \Pi'_{x, y,\neq \bot} | \leq \delta/2 +
 \frac{\delta}{16(1- \frac{11\delta}{16})} \leq \delta/2 + \delta/5 <
 3\delta/4$$ 
 where we use the assumption that $\delta \leq 1$.
\end{enumerate}
\end{proof}

\subsection{Proof of the Compression Lemma}
 
\begin{proof}[Proof of \autoref{lem:compress}]
  Set $\lambda = 2^{-(k+\Delta)}$.  It holds that $\lambda \geq 2^{-C
    (\IC_\mu(\pi) / \delta^2 + 1/\delta)}$ for $C=64$.
Let $\rel$ be any subset of the support of $(X,Y,\pi(X,Y))$.
Then
  \begin{align*}
    \Pr_{\alea,(X,Y) \sim \mu}\![(X, Y, \pi(X,Y))\in \rel] \leq \!\Pr_{(X,Y)\sim\mu}\![B_D] + \!\Pr_{(X,Y)\sim\mu}\![\neg B_D]
    \cdot \!\Pr_{\alea, (X,Y)\sim\mu}\![(X, Y, \pi(X,Y))\in \rel \mid \neg
    B_D].
%\label{eq:anbohp}
  \end{align*}
  % where we used $\Pr[B_D] \leq \delta/4$ (as proved in
  % \autoref{item:largediv} of \autoref{claim:compresspf}), and also
  % the fact that $\rel$ is just an event so we may apply
  % \autoref{item:statdist} of \autoref{claim:statdist} to bound the
  % second term of \autoref{eq:hgpoihb}.

  Applying \autoref{item:statdist} of \autoref{claim:compresspf} and the
  fact that $\rel$ is simply an event, it follows that for all $(x,y) \notin B_D$ 
  \begin{equation*}
%    \label{eq:nzbpoihag}
    \Pr_{\alea}[(x, y, \pi(x, y)) \in \rel] \leq \Pr_{\alean}[(x, y, \pi'(x, y)) \in \rel
    \mid \pi'(x, y) \neq \bot] + \tfrac{3\delta}{4}.
  \end{equation*}

  Since $\Pr[B_D] \leq \delta/4$ (\autoref{item:largediv} of \autoref{claim:compresspf}),

  \begin{equation*}
%    \label{eq:nzbpoihag}
    \Pr_{\alea, (X,Y)\sim \mu}[(X, Y, \pi(X, Y)) \in \rel] \leq \Pr_{\alean, (X,Y)\sim \mu}[(X, Y, \pi'(X, Y)) \in \rel
    \mid \pi'(x, y) \neq \bot] + \delta.
  \end{equation*}
%%[EXPAND HERE -- CLAIM GONE]
  %%Applying \autoref{item:goodbot} of \autoref{claim:statdist}, we have
  %%that for every $(x, y) \notin B_D$, it holds that
  %%\begin{equation}
    %%\Pr[(x, y, \pi'(x,y)) \in \rel \mid \pi'(x,y) \neq \bot] 
    %%\leq \Pr[(x, y, \pi'(x,y)) \in \rel] 2^{k+\Delta} / (1 -
    %%\delta/2) 
  %%\end{equation}
  %%Since the LHS above is at most $1$, we may deduce that:
  %%\begin{equation}
    %%\label{eq:righteq}
    %%\Pr[(x, y, \pi'(x,y))\in \rel \mid \pi'(x, y) \neq \bot] \leq
    %%\Pr[(x, y, \pi'(x, y))\in \rel] 2^{k+\Delta} + \delta / 2 
  %%\end{equation}
  %%Combining  \autoref{eq:nzbpoihag} and \autoref{eq:righteq}, we have
  %%that for all $(x, y) \notin B_D$:
  %%\begin{equation}
    %%\Pr[(x, y, \pi(x, y)) \in \rel] \leq \Pr[(x, y, \pi'(x, y)) \in
    %%\rel] 2^{k+\Delta} + \tfrac{3 \delta}{4}. 
  %%\end{equation} %%Applying this to \autoref{eq:anbohp} along with the fact that
  %%$\Pr[B_D] \leq \delta/4$ (\autoref{item:largediv} of
  %%\autoref{claim:compresspf}), we may conclude that:
  %%\begin{align}
    %%\Pr[(X, Y, \pi(X,Y))\in \rel] & \leq \Pr[B_D] + \Pr[\neg B_D] \cdot
    %%\left( \Pr[(X, Y,\pi'(X, Y))\in \rel \mid \neg B_D] 2^{k+\Delta} +
      %%\tfrac{3\delta}{4}  \right) \\
    %%& < \Pr[(X, Y, \pi'(X, Y)) \in \rel \wedge \neg B_D] 2^{k+\Delta} +
    %%\delta \\
    %%& \leq \Pr[(X, Y, \pi'(X, Y)) \in \rel ] \tfrac{1}{\eta}  +
    %%\delta \label{eq:apoghisabp} 
  %%\end{align}

  This proves one direction of \autoref{eq:compressdist}.  
  For the other direction, we have that  
  \begin{align}
&    \Pr[(X, Y, \pi'(X,Y))\in \rel \mid \pi'(X, Y) \neq \bot]\\ 
&\qquad \leq
    \Pr[B_D]+ \Pr[\neg B_D] \cdot \Pr[(X, Y,\pi'(X,Y))) \in \rel \mid
    \neg B_D, \pi'(X,Y) \neq \bot ] \nonumber\\
&\qquad     \leq \tfrac{\delta}{4} + \Pr[ \neg B_D] \cdot  \left( \Pr[(X, Y,
      \pi(X,Y)) \in \rel \mid \neg B_D] + \tfrac{3\delta}{4}
    \right) \label{eq:ghpoi} \\
&\qquad     \leq \Pr[(X, Y, \pi(X, Y)) \in \rel \wedge \neg B_D] +
    \delta \nonumber\\
&\qquad      \leq \Pr[(X, Y, \pi(X, Y)) \in \rel] + \delta
 %   \label{eq:lefteq}
\nonumber 
 \end{align} %
 where in \autoref{eq:ghpoi} we applied \autoref{item:statdist} of
 \autoref{claim:compresspf}. 
%%Finally, applying \autoref{item:badbot} of \autoref{claim:statdist},
%%[EXPAND HERE -- CLAIM GONE]  we have that: 
  %%\begin{align}
    %%\Pr[(X, Y, \pi'(X, Y)) \in \rel] \tfrac{1}{\eta} & = \Pr[(X, Y,
    %%\pi'(X, Y)) \in \rel] 2^{k+\Delta} \\
    %%& \leq \Pr[(X, Y, \pi'(X, Y)) \in \rel \mid \pi'(X, Y) \neq \bot] \\
    %%& \leq \Pr[(X, Y, \pi(X, Y)) \in \rel] + \delta
  %%\end{align}
  %%Combined with \autoref{eq:apoghisabp}, t
This proves \autoref{eq:compressdist} of \autoref{lem:compress}. 

  % {\bf We need to change some constant to make it back to
  % $\delta$. The problem was that we don't have $(x,y)\not\in B_D$ so
  % we need to add the probability of $\not\in B_D$.}

\autoref{eq:compressbadbot} follows immediately
from \autoref{item:nottoogood} of \autoref{claim:compresspf}.
% , we write for all $(x, y)$ that
% $$\Pr_{\alean}[\pi'(x, y) \neq \bot] \leq \Pr_{\alean}[G] + \Pr_{\alean}[B_C]
% \leq (1 + \tfrac{\delta}{16}) \lambda < (1 + \delta) \lambda$$
% where we used~\autoref{item:nottoogood} and \autoref{item:collision}
% of~\autoref{claim:compresspf}.
%

Finally, for \autoref{eq:compressavgbot}, we may write:
\begin{eqnarray*}
  \Pr_{\alean,(X,Y)\sim\mu}[\pi'(X, Y) \neq \bot] &\geq& \Pr_{(X,Y)\sim\mu}[\neg B_D]
  \Pr_{\alean,(X,Y)\sim\mu}[\pi'(X, Y) \neq \bot|\neg B_D]\\ 
  &\geq & (1-\tfrac{\delta}{4})(\Pr_{\alean,(X,Y)\sim\mu}[H \mid \neg B_D])\\
  &\geq & (1-\tfrac{\delta}{4})(1 - \tfrac{11\delta}{16})\lambda \quad
  > \quad (1-\delta)\lambda 
\end{eqnarray*}
where we used \autoref{item:gooda} of \autoref{claim:compresspf}.
\end{proof}

\section{Applications}

We can prove lower bounds on the information complexity of specific problems, by checking that their communication lower bounds were obtained by one of the
methods subsumed by the relaxed partition bound, including the
factorization norm, smooth rectangle, rectangle, or discrepancy.
However, a bit of care is required to ascertain this.  For example,
while a paper may say it uses the ``rectangle bound'', we must still
verify that the value of the linear program for $\bprt$ (or one of the
subsumed programs such as $\srec$ or $\rect$) is at least the claimed
bound, since different authors may use the term ``rectangle bound'' to
mean different things. In particular what they call ``rectangle
bound'' may not satisfy the constraints of the rectangle/smooth
rectangle linear programs given by Jain and Klauck \cite{JK10}.  After
we have verified that $\bprt$ is appropriately bounded, then we can
apply our main theorem (\autoref{thm:mainthm}).  We do this for the
problems below.

\subsection{Exponential separation of quantum communication and classical information complexity}

We prove that the quantum communication complexity of the Vector in 
Subspace Problem is exponentially smaller than its classical information 
complexity~(\autoref{VSP}).
In the Vector in Subspace Problem $\VSP_{0,n}$, Alice is given an $n/2$ dimensional subspace of
an $n$ dimensional space over $\R$, and Bob is given a vector. 
This is a partial function, and the promise is that either Bob's vector
lies in the subspace, in which case the function evaluates to $1$, or it lies in the orthogonal subspace, in which case the function evaluates to $0$. Note that the input set of $\VSP_{0,n}$ is continuous, but it can be discretized by rounding, which leads to the problem $\tVSP_{\theta,n}$ (see~\cite{KR11} for details).

Klartag and Regev~\cite{KR11} show that the Vector in Subspace Problem
can be solved with an $O(\log n)$ quantum protocol, but the 
randomized communication complexity of this problem is 
$\Omega(n^{1/3})$. Their lower bound uses a modified version of the rectangle bound, which can be shown to be still weaker than our relaxed partition bound.

%Since the input set of $\VSP_{0,n}$ is continuous, we extend the definition of the relaxed partition bound to the case of a continuous input set, but this follows naturally by replacing summation symbols by integrals.
\begin{lemma}
\label{lem:VSP}
There exist universal constants $C$ and $\gamma$ such that for any $\eps$,
\[\bprt_\eps(\tVSP_{\theta,n})\geq\frac{1}{C}(0.8-2.8\eps)\exp (\gamma n^{1/3}).\]
\end{lemma}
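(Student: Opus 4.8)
The plan is to extract from the Klartag--Regev proof of the randomized communication lower bound for $\tVSP_{\theta,n}$ a feasible solution to the linear program defining $\bprt_\eps^\mu(\tVSP_{\theta,n})$ for a suitable input distribution $\mu$, and to verify that its objective value is at least $\tfrac{1}{C}(0.8-2.8\eps)\exp(\gamma n^{1/3})$. Concretely, Klartag and Regev show that for the hard distribution $\mu$ on inputs of $\tVSP_{\theta,n}$, any rectangle $R$ that is ``large'' under $\mu$ (say $\mu(R) \geq 2^{-\gamma n^{1/3}}$ for an appropriate constant $\gamma$) must be roughly balanced, i.e. it cannot be $(1-\eps')$-monochromatic with respect to the function value for $\eps'$ bounded away from $0$; equivalently, on every such rectangle the fraction of inputs on which the label $z$ disagrees with $f$ is at least some constant. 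I would first recall this rectangle bound in the precise quantitative form we need (a bound of the form: for every rectangle $R$ and every label $z\in\{0,1\}$, $\mu(\{(x,y)\in R : f(x,y)=z \text{ or } (x,y)\notin\setin\}) \leq \alpha\,\mu(R) + 2^{-\gamma n^{1/3}}$ for some constant $\alpha < 1$), being careful that the discretization to $\tVSP_{\theta,n}$ and the treatment of promise-violating inputs are handled as in \cite{KR11}.

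Next I would run the standard LP-duality / averaging argument. Given any feasible solution $(\eta, \{p_{R,z}\})$ to the program for $\bprt_{\eps}^\mu$, I multiply constraint~\autoref{eq:partfirst} against the rectangle bound: the correctness mass $\sum_{(x,y)}\mu_{x,y}(\cdots) \geq (1-\eps)\eta$ on the one hand, while on the other hand, distributing this sum over the rectangles $R$ and using the rectangle bound, it is at most $\alpha \sum_{R,z} p_{R,z}\mu(R) + 2^{-\gamma n^{1/3}}\sum_{R,z} p_{R,z}$. By the completeness-type constraint \autoref{eq:partsecond} together with the normalization \autoref{eq:prt-operational}, $\sum_{R,z} p_{R,z}\mu(R) = \sum_{(x,y)}\mu_{x,y}\sum_{z,R:(x,y)\in R} p_{R,z} \leq \eta$, and $\sum_{R,z} p_{R,z}=1$. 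Combining, $(1-\eps)\eta \leq \alpha\eta + 2^{-\gamma n^{1/3}}$, hence $\eta \leq \dfrac{2^{-\gamma n^{1/3}}}{1-\eps-\alpha}$, so $\bprt_\eps^\mu = 1/\eta \geq (1-\eps-\alpha)\exp(\gamma n^{1/3})$. Plugging in the constant from \cite{KR11} (which gives $\alpha$ and an overall multiplicative slack, producing the stated $0.8 - 2.8\eps$ and the factor $1/C$) yields the lemma. Finally, $\bprt_\eps \geq \bprt_\eps^\mu$ gives the claimed bound.

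The main obstacle I anticipate is not the LP manipulation, which is routine, but faithfully translating the Klartag--Regev argument into exactly the rectangle-bound inequality above with explicit constants: their lower bound is phrased in terms of distributional communication complexity via a corruption/rectangle argument, and one must check (i) that their ``large rectangle implies nearly balanced'' statement holds with an additive error of the right exponential order $2^{-\Theta(n^{1/3})}$ and with a multiplicative balance constant strictly less than $1$, (ii) that the discretization from $\VSP_{0,n}$ to $\tVSP_{\theta,n}$ does not degrade these constants, and (iii) that inputs outside $\setin$ (where any label counts as correct) are absorbed harmlessly --- this is exactly why the relaxed partition bound in \autoref{def:relaxed-prt} makes the promise explicit in \autoref{eq:partfirst}. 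Tracking these constants carefully is what produces the specific numbers $0.8$ and $2.8$ in the statement.
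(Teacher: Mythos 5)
Your high-level route (weak LP duality against the Klartag--Regev rectangle lemma) is the same as the paper's, but the quantitative form in which you import that lemma is not what Klartag and Regev prove, and the argument does not close with the correct form. Their lemma (\autoref{lem:klartag-regev}) states $\sigma_b(R\cap f^{-1}(b))\geq 0.8\,\sigma(R)-C\exp(-\gamma n^{1/3})$, where $\sigma$ is uniform over the \emph{entire} input space $\calX\times\calY$ and $\sigma_b$ is uniform over $f^{-1}(b)$. Translated to a promise-supported hard distribution $\mu=(\sigma_0+\sigma_1)/2$, this gives $\mu(R\cap f^{-1}(z))\leq \mu(R)-0.4\,\sigma(R)+\tfrac{C}{2}\exp(-\gamma n^{1/3})$: the ``balance defect'' is measured against $\sigma(R)$, not against $\mu(R)$. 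Since $f^{-1}$ can have tiny $\sigma$-measure, a rectangle can have large $\mu$-mass yet negligible $\sigma$-mass, and for such rectangles the lemma says nothing about monochromaticity; so your assumed inequality $\mu(\{(x,y)\in R: f(x,y)=z\text{ or }(x,y)\notin\setin\})\leq\alpha\,\mu(R)+2^{-\gamma n^{1/3}}$ with a universal constant $\alpha<1$ is not available from \cite{KR11}. Running your averaging argument with the true inequality yields $(1-\eps)\eta\leq\eta-0.4\sum_{R,z}p_{R,z}\sigma(R)+\tfrac{C}{2}e^{-\gamma n^{1/3}}$, i.e.\ only an upper bound on the protocol's efficiency under $\sigma$, and no upper bound on $\eta$: the argument stalls exactly at the point you claimed would be ``absorbed harmlessly.''

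The paper's fix is to work with the dual of $\bprt_\eps(f)=\max_\mu\bprt^\mu_\eps(f)$ --- equivalently, with a hard distribution that places mass proportional to $0.8\sigma$ on the promise-\emph{violating} inputs, together with the slack variables $\beta(x,y)=2t\mu(x,y)-0.8t\sigma(x,y)$ on $f^{-1}$. Because \autoref{eq:partsecond} holds pointwise for every $(x,y)\in\calX\times\calY$, it can be integrated against $\sigma$ as well as against $\mu$, and this is what converts the $\sigma(R)$ defect into a bound on $\eta$; carrying out the computation gives $\eta\,(0.8-2.8\eps)\leq C e^{-\gamma n^{1/3}}$, hence the lemma. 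You would also need the final reduction $\bprt_\eps(\tVSP_{\theta,n})\geq\bprt_\eps(\VSP_{0,n})$ (round the continuous inputs and run the discretized protocol), which your sketch gestures at but does not carry out.
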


Klartag and Regev's lower bound is based on the following lemma:
\begin{lemma}[\cite{KR11}]\label{lem:klartag-regev}
Let $f=\VSP_{0,n}$, $\calX$ and $\calY$ denote Alice and Bob's input sets for $f$, $\sigma$ be the uniform distribution over $\calX\times\calY$ and $\sigma_b$ be the uniform distribution over $f^{-1}(b)$. There exist universal constants $C$ and $\gamma$ such that, for any rectangle $R$ and any $b\in\zo$, we have:
\begin{align*}
 \sigma_b(R\cap f^{-1}(b))\geq 0.8\sigma(R)-C\exp(-\gamma n^{1/3}).
\end{align*}
\end{lemma}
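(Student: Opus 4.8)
The plan is to reconstruct the geometric argument of Klartag and Regev~\cite{KR11}. First reduce to the case $b=1$: the map $V \mapsto V^\perp$ is a measure-preserving involution of the Grassmannian $G_{n,n/2}$ of $n/2$-dimensional subspaces, it carries the set of $0$-inputs bijectively onto the set of $1$-inputs, and it sends a rectangle $A \times B$ to the rectangle $A^\perp \times B$ with $A^\perp = \{V : V^\perp \in A\}$, which has the same $\sigma$-measure as $A$. So a bound valid for every rectangle in the case $b=1$ immediately yields the case $b=0$.

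For $b=1$, write $R = A \times B$ with $A$ a measurable set of subspaces and $B \subseteq S^{n-1}$, and put $\sigma_A = \Pr_{V \sim \mathrm{Haar}}[V \in A]$ and $\sigma_B = \Pr_{v \sim \mathrm{unif}}[v \in B]$, so that $\sigma(R) = \sigma_A \sigma_B$ since the ambient uniform measure is a product. The uniform measure on $f^{-1}(1) = \{(V,v): v \in V\}$ is obtained by sampling $V \sim \mathrm{Haar}$ and then $v$ uniformly from the unit sphere $S(V)$ of $V$, so writing $\Phi_B(V) = \Pr_{v \sim \mathrm{unif}\,S(V)}[v \in B]$ for the fraction of $S(V)$ that lies in $B$, we get
\[
\sigma_1\bigl(R \cap f^{-1}(1)\bigr) = \Exp_{V \sim \mathrm{Haar}}\bigl[\mathbf{1}[V\in A]\cdot \Phi_B(V)\bigr].
\]
A rotational-symmetry computation gives $\Exp_{V \sim \mathrm{Haar}}[\Phi_B(V)] = \sigma_B$ (a uniform point of a uniformly random half-dimensional subspace is a uniform point of $S^{n-1}$). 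Hence it suffices to show that for all measurable $A,B$,
\[
\Exp_{V}\bigl[\mathbf{1}[V\in A]\,\Phi_B(V)\bigr] \;\geq\; 0.8\,\sigma_A\sigma_B \;-\; C\exp(-\gamma n^{1/3}),
\]
i.e.\ that $\Phi_B$ does not fall much below its mean $\sigma_B$ except on a set of subspaces of exponentially small Haar measure; multiplying a constant-factor lower bound on $\Phi_B$ by $\mathbf{1}[V\in A]$ and subtracting the exceptional mass then gives the claim.

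The heart of the matter, and the source of the exponent $n^{1/3}$, is this concentration estimate for $\Phi_B$. When $B = \{v : \langle v,e\rangle \geq t\}$ is a spherical cap around a direction $e$, it meets $S(V)$ in a cap of $S(V)\cong S^{n/2-1}$ with threshold $t/\|P_V e\|$; since $\|P_V e\|^2$ concentrates around $\tfrac12$ with exponentially small tails, the halving of the dimension is exactly compensated by the rescaling of the threshold by $\|P_V e\| \approx 1/\sqrt2$, and $\Phi_B(V) \approx \sigma_B$. For an arbitrary measurable $B$, however, $V \mapsto \Phi_B(V)$ need not be Lipschitz in $V$ (nearby subspaces may intersect $\partial B$ very differently), so Lévy's lemma does not apply directly. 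The remedy is to smooth on the sphere side: replace $\mathbf{1}_B$ by its image $T_\rho\mathbf{1}_B$ under the spherical noise operator at correlation $\rho = 1-\Theta(n^{-1/3})$. This smoothed test function has the same mean, is Lipschitz with a constant controlled by $\rho$ so that $\widetilde\Phi_B(V) = \Exp_{v\sim S(V)}[T_\rho\mathbf{1}_B(v)]$ does concentrate around $\sigma_B$ with deviation probability $\exp(-\gamma n^{1/3})$, and stays close enough to $\Phi_B$ in the averaged sense needed that one loses only the constant factor $0.8$ together with an additive $\exp(-\gamma n^{1/3})$. Balancing the noise level $\rho$ against both the Lipschitz constant and the approximation loss is precisely what forces $\rho = 1 - \Theta(n^{-1/3})$ and hence the bound $\exp(-\gamma n^{1/3})$. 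The main obstacle is this last point: controlling, for a worst-case $B$, how far $\Phi_B$ can drop below $\widetilde\Phi_B$ after smoothing, and arranging the cap and noise estimates so that the total loss over \emph{all} rectangles (including tiny ones, where the inequality is vacuous) is only a constant factor plus an $\exp(-\gamma n^{1/3})$ error.
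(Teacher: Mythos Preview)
The paper does not prove this lemma at all: it is quoted verbatim from Klartag and Regev~\cite{KR11} and used as a black box in the proof of \autoref{lem:VSP}. There is therefore no ``paper's own proof'' to compare your attempt against.

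As for your sketch on its own merits: the reduction to $b=1$ via $V\mapsto V^\perp$ is correct, and the identity $\sigma_1(R\cap f^{-1}(1))=\Exp_V[\mathbf{1}[V\in A]\,\Phi_B(V)]$ together with $\Exp_V[\Phi_B(V)]=\sigma_B$ is the right starting point. But what you have written is an outline, not a proof. The decisive step---showing that for \emph{every} measurable $B$ the smoothed function $\widetilde\Phi_B$ concentrates with the claimed tail \emph{and} that the passage from $\Phi_B$ to $\widetilde\Phi_B$ loses at most a constant factor plus $\exp(-\gamma n^{1/3})$---is asserted but not carried out, and this is exactly where all the difficulty lies. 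In particular, bounding $\Phi_B$ below by a constant times $\widetilde\Phi_B$ for arbitrary $B$ is not automatic from the noise-operator machinery you invoke; the actual argument in~\cite{KR11} goes through an analysis of the Radon transform from functions on the sphere to functions on the Grassmannian and a corresponding operator-norm estimate, rather than via $T_\rho$ directly. Your parameter heuristic for why $n^{1/3}$ appears is morally right, but the sketch as written does not close the gap.
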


We show that this implies the required lower bound on the relaxed partition bound of $\tVSP_{\theta,n}$.

\begin{proof}[Proof of \autoref{lem:VSP}]
Let us first consider $f=\VSP_{0,n}$ and show that $\log\bprt_\eps(\VSP_{0,n})=\Omega(n^{1/3})$. Note that since the input set of $\VSP_{0,n}$ is continuous, we need to extend the definition of the relaxed partition bound to the case of a continuous input set, but this follows naturally by replacing summation symbols by integrals.

Let $\mu=\frac{\sigma_0+\sigma_1}{2}$, which satisfies $\mu(f^{-1})=1$, \ie $\mu$ only has support on valid inputs. Then, Lemma~\ref{lem:klartag-regev} implies that for any rectangle $R$ and any $b\in\zo$, we have
\begin{align*}
 2\mu(R\cap f^{-1}(b \oplus 1))\geq 0.8\sigma(R)-C\exp(-\gamma n^{1/3}).
\end{align*}
Since $\mu$ only has support on $f^{-1}$ and $f$ is Boolean, we have $\mu(R)=\mu(R\cap f^{-1}(b))+\mu(R\cap f^{-1}(b\oplus 1))$, so that the inequality can be rewritten as
\begin{align*}
 2\mu(R)-2\mu(R\cap f^{-1}(b))\geq 0.8\sigma(R)-C\exp(-\gamma n^{1/3}).
\end{align*}
Setting $t=\frac{\exp(\gamma n^{1/3})}{C}$, we have that:
\begin{align*}
 t\cdot \left( 2\mu(R\cap f^{-1}(b))+0.8\sigma(R)-2\mu(R)\right) \leq 1.
\end{align*}

We now construct a feasible point for the dual formulation of $\bprt_\eps(f)$ given in \autoref{claim:bprt-lp} by setting $\alpha(x,y)=2t\mu(x,y)$ for $(x,y)\in f^{-1}$ and $\alpha(x,y)=0.8t\sigma(x,y)$ otherwise; and $\beta(x,y)=2t\mu(x,y)-0.8t\sigma(x,y)$ for $(x,y)\in f^{-1}$, and $\beta(x,y)=0$ otherwise. Note that for $(x,y)\in f^{-1}$, we have $\mu(x,y)=\sigma(x,y)/\sigma(f^{-1})\geq\sigma(x,y)$, hence all these values are positive. By the previous inequality, we also have
\begin{align*}
 \int_{R \cap f^{-1}(b)} \alpha(x,y) dxdy+ \int_{R \setminus f^{-1}}
\alpha(x,y) dxdy- \int_{R}\beta(x,y) dxdy \leq 1
\end{align*}
for any $R,b$, therefore this is a valid feasible point. Moreover, the corresponding objective value is
\begin{align*}
&  \int_{\calX\times\calY} (1 - \eps) \alpha(x,y) dxdy-
  \int_{\calX\times\calY} \beta(x,y) dxdy\\
&\qquad\qquad =t\left( (1-\eps)2\mu(f^{-1})+(1-\eps)0.8(1-\sigma(f^{-1}))-2\mu(f^{-1})+0.8\sigma(f^{-1})\right)\\
&\qquad\qquad=t(0.8-2.8\eps+0.8\eps\sigma(f^{-1}))\\
&\qquad\qquad\geq t(0.8-2.8\eps).
\end{align*}
Finally, let us note that we can construct a zero-communication protocol for $\VSP_{0,n}$ by first rounding off the inputs, and then applying a zero-communication protocol for $\tVSP_{\theta,n}$. This means that we can turn a feasible point for 
the primal form of $\bprt_\eps(\tVSP_{\theta,n})$ (given in \autoref{claim:bprt-lp}) into a feasible point for $\bprt_\eps(\VSP_{0,n})$ with the same objective value, so that $\bprt_\eps(\tVSP_{\theta,n})\geq\bprt_\eps(\VSP_{0,n})\geq t(0.8-2.8\eps)$. 
\end{proof}

Finally, \autoref{lem:VSP} together with \autoref{thm:mainthm} implies that $\IC(\tVSP_{\theta,n},\eps)=\Omega(n^{1/3})$ and also \autoref{VSP}.

This allows us to conclude that
the information complexity of this function is at least $\Omega(n^{1/3})$.
This solves Braverman's Open Problem~3 
(Are there problems for which $Q(f, \eps) = O(\mathrm{polylog}(\IC(f, \eps)))$?)
and Open Problem~7 (is it true that $\IC(\tVSP_{\theta,n}, 1/3) = n^{\Omega(1)}$?)

Moreover, our result implies an exponential separation between classical and quantum information complexity. We refrain from defining quantum information cost and complexity in this paper (see \cite{JN10} for a definition), but since the quantum information complexity is always smaller than the quantum communication complexity, the separation follows trivially from \autoref{VSP}. 

\subsection{Information complexity of the Gap Hamming Distance Problem}
\label{sec:ghd}
We prove that the information complexity of Gap Hamming Distance
is $\Omega(n)$ (\autoref{thm:GHD}; Open Problem~6 in~\cite{Bra11b}).
In the Gap Hamming Distance Problem ($\GHD_n$), Alice and Bob each receive 
a string  of length $n$ and they need to determine whether their Hamming distance is at least
$n/2 +\sqrt{n}$ or less than $n/2-\sqrt{n}$. 
We prove that the information complexity of Gap Hamming Distance
is $\Omega(n)$ (\autoref{thm:GHD}; Open Problem~6 in~\cite{Bra11b}).
The communication complexity of Gap Hamming Distance was shown to be
$\Omega(n)$ by Chakrabarti and Regev~\cite{CR10}.  The proof was subsequently
simplified by Vidick~\cite{Vid11} and Sherstov~\cite{She11}.  The first
two proofs use the smooth rectangle bound, while Sherstov uses the rectangle/corruption bound.

The corruption bound used by Sherstov is a slight refinement of the
rectangle bound as defined by Jain and Klauck~\cite{JK10},
since it can handle distributions that put small weight on
the set of inputs that map to some function value $z$.  
It can be shown that this bound is weaker than our relaxed partition bound, which implies  \autoref{thm:GHD}.

\begin{lemma}
\label{lem:GHD}
There exist universal constants $C$ and $\delta$ such that for any small
enough $\eps$,
$\bprt_\eps(\GHD)\geq C2^{\delta n}$.
\end{lemma}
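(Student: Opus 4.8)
The plan is to mimic the structure of the proof of \autoref{lem:VSP}: extract from the existing corruption/rectangle-bound lower bound for $\GHD$ a single ``rectangle inequality'' of the same shape as \autoref{lem:klartag-regev}, and then feed it into the dual linear program for $\bprt_\eps$ (the one referenced as \autoref{claim:bprt-lp}) to build an explicit feasible dual point with objective value $\exp(\Omega(n))$. Concretely, I would first go to Sherstov's proof of the $\Omega(n)$ lower bound on $\GHD$ \cite{She11} and isolate the combinatorial core: there is a hard distribution $\mu$ on inputs, a distinguished output value $z$, and constants $c_1>0$, $c_2>1$, $\delta>0$ such that for every rectangle $R$ one has a corruption-type bound of the form $\mu(R\cap f^{-1}(z\oplus 1)) \geq c_1\,\mu(R) - 2^{-\delta n}$ (possibly after passing to the symmetrized version of $\GHD$, or restricting to the two ``legal'' halves of the promise), which is precisely what it means for the corruption bound to be $2^{\Omega(n)}$. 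Since $\GHD$ is Boolean and $\mu$ can be taken supported on valid inputs, $\mu(R)=\mu(R\cap f^{-1}(z))+\mu(R\cap f^{-1}(z\oplus 1))$, so this rearranges to $t\big(\mu(R\cap f^{-1}(z)) + c_1'\mu(R) - \mu(R)\big)\leq 1$ with $t=2^{\delta n}$, exactly mirroring the manipulation done for $\tVSP$.

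Given that inequality, I would then construct a feasible point of the dual program for $\bprt_\eps^\mu(\GHD)$ in the same way as in the proof of \autoref{lem:VSP}: set $\alpha(x,y)$ and $\beta(x,y)$ as appropriate positive multiples of $t\mu(x,y)$ on $f^{-1}$ (and as multiples of $t\sigma(x,y)$ off $f^{-1}$, though here one can take $\mu$ fully supported on valid inputs so this case is vacuous), chosen so that the per-rectangle constraint of the dual is exactly the rearranged corruption inequality above. Then I would compute the objective value, which will come out to $t\cdot(\text{const} - O(\eps)) = \Omega(2^{\delta n})$ for $\eps$ small enough, and invoke weak duality (or the LP duality statement in \autoref{claim:bprt-lp}) to conclude $\bprt_\eps(\GHD)\geq\bprt_\eps^\mu(\GHD)\geq C 2^{\delta n}$. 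If one prefers to route through the smooth rectangle bound instead, one would use the $\srec^z_\eps(\GHD)=2^{\Omega(n)}$ bounds of \cite{CR10,Vid11} together with \autoref{claim:srec-bprt-prt}, which already gives $\srec^z_\eps\leq\bprt_\eps$; this is cleaner but I would still want the corruption-bound route available because Sherstov's argument handles the error regime and the partial-promise issue most transparently.

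The main obstacle is not any deep new idea but the bookkeeping around \emph{which} inequality the cited papers actually prove: as the paper itself warns at the start of the Applications section, different authors mean different things by ``rectangle bound,'' and the corruption bound of \cite{She11} is a refinement that tolerates distributions putting small mass on a given function value. So the real work is to verify that Sherstov's lower bound genuinely certifies a large value of the LP $\bprt_\eps$ (equivalently, of $\srec^z_\eps$ or $\rect_\eps$ in the Jain--Klauck normalization), rather than of some variant with incompatible constraints — in particular checking that the error parameter enters linearly in the right place and that the additive $2^{-\delta n}$ slack is small enough relative to the main terms. Once that identification is pinned down, the LP-feasibility computation is routine and essentially identical to the $\tVSP$ case.
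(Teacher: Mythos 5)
Your overall strategy is sound, but your primary route has a concrete gap: Sherstov does \emph{not} prove a corruption inequality for $\GHD$ itself. His rectangle lemma (the one with the $2^{-\delta n}$ threshold) is proved for the Gap Orthogonality Problem $\ORT$ under the uniform distribution, and the bound is transported to $\GHD$ only via a reduction that maps one instance of $\ORT$ to \emph{two} instances of $\GHD$. The paper's proof therefore has two ingredients that your plan omits. First, it converts Sherstov's corruption hypothesis into a feasible point of the dual of the Jain--Klauck rectangle LP $\rect_\eps^{z}$ for $\ORT$ (setting $\alpha_{x,y}=\mu(x,y)/\beta$ on $f^{-1}(-1)$ and $\alpha_{x,y}=\mu(x,y)/(\delta\beta)$ on $f^{-1}(1)$), giving $\bprt_\eps(\ORT)\geq\rect_\eps^{z}(\ORT)\geq C2^{\delta n}$. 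Second --- and this is the step you would have to supply --- it shows that the reduction yields $[\bprt_{\eps/2}(\GHD)]^2\geq\bprt_\eps(\ORT)$, by observing that two independent runs of a zero-communication protocol for $\GHD$ with efficiency $\eta$ give a zero-communication protocol for $\ORT$ with efficiency $\eta^2$ and error at most $\eps$ by a union bound. Without establishing that $\bprt$ behaves submultiplicatively under this two-copy reduction, ``isolating the combinatorial core for $\GHD$'' does not follow from \cite{She11}, because no hard distribution with the required per-rectangle property is exhibited for $\GHD$ directly.

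That said, your fallback route --- invoking the smooth rectangle bounds of \cite{CR10,Vid11} for $\GHD$ together with \autoref{claim:srec-bprt-prt} --- is legitimate and would close the lemma, modulo the verification (which you correctly flag, and which the paper also insists on) that those papers certify the Jain--Klauck $\srec^z_\eps$ LP in its exact normalization. So the proposal is salvageable, but as written the main plan conflates the $\ORT$ corruption lemma with a nonexistent direct $\GHD$ corruption lemma and is missing the squaring/reduction step.
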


We let the output set be $\setout = \{-1, 1\}$ rather
than bits, to follow the notation of \cite{She11}.  Let us recall the
corruption bound used by Sherstov.
\begin{theorem}
For any function $f$ with output set $\setout = \{-1, 1\}$, 
and $\epsilon, \delta, \beta>0$
if a distribution on the inputs $\mu$ is such that
$$\mu(R) > \beta \implies \mu(R\cap f^{-1}(1))>\delta \mu(R\cap f^{-1}(-1))$$then
$$2^{R_\epsilon(f)} \geq \frac{1}{\beta}(\mu(f^{-1}(-1)) - \frac{\epsilon}{\delta}).$$
\end{theorem}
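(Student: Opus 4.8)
This is the classical corruption (one-sided rectangle) bound, and the plan is to prove it by the standard ``partition into rectangles and count'' argument. First I would reduce to a deterministic protocol: take a public-coin protocol witnessing $R_\epsilon(f)$ and, averaging over the public randomness, fix the coins to obtain a deterministic protocol $P$ of communication cost $c:=R_\epsilon(f)$ whose distributional error under $\mu$ is at most $\epsilon$ (for the one-sided notion of $R_\epsilon$ one may moreover take $P$ to never output $1$ on an input of $f^{-1}(-1)$). Such a $P$ partitions $\calX\times\calY$ into a collection $\mathcal{P}$ of at most $2^c$ combinatorial rectangles, each labeled by the value in $\{-1,1\}$ that $P$ outputs on it; write $\mathcal{R}_{-1}\subseteq\mathcal{P}$ for the sub-collection of rectangles labeled $-1$.

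The heart of the argument is a two-way split of $\mathcal{R}_{-1}$. Since every input of $f^{-1}(1)$ lying in a rectangle of $\mathcal{R}_{-1}$ is an error of $P$, we have $\sum_{R\in\mathcal{R}_{-1}}\mu(R\cap f^{-1}(1))\le\epsilon$. Now call $R$ \emph{heavy} if $\mu(R)>\beta$ and \emph{light} otherwise. For each heavy $R\in\mathcal{R}_{-1}$ the hypothesis of the theorem gives $\mu(R\cap f^{-1}(-1))<\tfrac1\delta\,\mu(R\cap f^{-1}(1))$, so summing over the (pairwise disjoint) heavy rectangles of $\mathcal{R}_{-1}$ bounds their total $f^{-1}(-1)$-mass by $\tfrac1\delta\sum_{R\in\mathcal{R}_{-1}}\mu(R\cap f^{-1}(1))\le\tfrac\epsilon\delta$. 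For the light rectangles of $\mathcal{R}_{-1}$, their total $f^{-1}(-1)$-mass is at most their total $\mu$-mass, which is at most $|\mathcal{R}_{-1}|\cdot\beta\le 2^{c}\beta$. Adding the two contributions gives $\sum_{R\in\mathcal{R}_{-1}}\mu(R\cap f^{-1}(-1))<\tfrac\epsilon\delta+2^{c}\beta$. Finally, the rectangles of $\mathcal{R}_{-1}$ carry essentially all of the $f^{-1}(-1)$-mass: with the one-sided protocol above, $\sum_{R\in\mathcal{R}_{-1}}\mu(R\cap f^{-1}(-1))=\mu(f^{-1}(-1))$ exactly (in the general two-sided case it is at least $\mu(f^{-1}(-1))-\epsilon$, which costs only an extra additive $\epsilon$). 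Combining, $\mu(f^{-1}(-1))<\tfrac\epsilon\delta+2^{c}\beta$, which rearranges to $2^{R_\epsilon(f)}\ge\tfrac1\beta\bigl(\mu(f^{-1}(-1))-\tfrac\epsilon\delta\bigr)$.

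The one step that needs care, and the only place where a choice of convention matters, is the error accounting: one must charge the relevant part of the $\epsilon$-error budget to the $f^{-1}(1)$-mass sitting inside $(-1)$-labeled rectangles, since that is exactly the mass that the corruption hypothesis lets us amplify by the factor $1/\delta$; the averaging step, the partition into $2^c$ rectangles, and the estimate for the light rectangles are all routine. An alternative proof fitting the framework of this paper would instead build a feasible point for the LP dual of $\bprt^\mu_\epsilon(f)$ directly from $\mu$ --- in the spirit of the proof of \autoref{lem:VSP}, with the heavy/light dichotomy above dictating the dual variables --- and then invoke $R_\epsilon(f)\ge D^\mu_\epsilon(f)\ge\log\bprt^\mu_\epsilon(f)$.
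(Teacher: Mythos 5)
Your argument is correct, and it is the classical direct proof of the corruption bound: derandomize by averaging over the public coins, partition $\calX\times\calY$ into at most $2^{c}$ labeled rectangles, and split the $(-1)$-labeled ones into heavy and light. The paper takes a genuinely different route: it never argues about protocols, but instead exhibits a feasible point for the dual of the Jain--Klauck rectangle-bound linear program $\rect^{z}_{\eps}(f)$ with $z=-1$, namely $\alpha_{x,y}=\mu(x,y)/\beta$ on $f^{-1}(-1)$ and $\alpha_{x,y}=\mu(x,y)/(\delta\beta)$ on $f^{-1}(1)$; the corruption hypothesis is precisely what makes the rectangle constraints hold (for heavy rectangles the left-hand side becomes negative, for light ones it is at most $1$), and the dual objective evaluates to the claimed quantity. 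The two proofs are essentially dual to each other --- your heavy/light dichotomy is the combinatorial shadow of those dual variables --- but the LP formulation buys something your direct argument does not: it certifies that the corruption hypothesis lower-bounds $\rect^{z}_\eps(f)$, hence $\srec^{z}_\eps(f)$ and $\bprt_\eps(f)$ by \autoref{claim:srec-bprt-prt}, which is what must be fed into \autoref{thm:mainthm} to obtain the information-complexity bound for $\GHD$; a lower bound on $R_\eps(f)$ alone would not suffice for the paper's purposes, as you correctly anticipate in your final sentence. One refinement to your error accounting: you need neither a one-sided protocol nor an extra additive $\eps$. Writing $\eps_1$ for the $f^{-1}(1)$-mass inside $(-1)$-rectangles and $\eps_2$ for the $f^{-1}(-1)$-mass inside $(+1)$-rectangles, you have $\eps_1+\eps_2\leq\eps$, and your two estimates combine to $\mu(f^{-1}(-1))-\eps_2-\eps_1/\delta\leq 2^{c}\beta$, whose left-hand side is at least $\mu(f^{-1}(-1))-\eps/\delta$ whenever $\delta\leq 1$, the only regime in which the theorem is applied.
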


We can derive the general corruption bound as used by Sherstov by
giving a feasible solution to the dual of the linear program by Jain
and Klauck.  In the dual form, $\rect_\eps^{z}(f)$ is defined for $z
\in \setout$ as
\begin{align}
\rect_\eps^{z}(f)=\max_{\alpha_{x,y}\geq 0} & 
	(1 - \eps)\sum_{(x,y)\in f^{-1}(z)} \alpha_{x,y} 
	- \eps\sum_{(x,y)\in f^{-1}\setminus f^{-1}(z)} \alpha_{x,y} \\
\forall R, & \quad 
	\sum_{(x,y)\in R\cap f^{-1}(z)} \alpha_{x,y} 
	- \sum_{(x,y)\in R\cap(f^{-1}\setminus f^{-1}(z))} \alpha_{x,y} \leq 1.
\end{align}

For the remainder of this section fix $z = -1$.  Consider the
following assignment for $\rect_\eps^{z}$, where 
$\mu$ is the distribution over the inputs in the theorem.
Letting 
 $ \alpha_{x,y} =  \frac{1}{\beta}\mu(x,y)$ if $f(x,y)=-1$,
and 
$ \alpha_{x,y} =  \frac{1}{\delta}\frac{1}{\beta}\mu(x,y)$ if $f(x,y)=1$,
we can verify the constraints and the objective value is
greater than the corruption bound.

To conclude the bound on $\GHD$, Sherstov gives a reduction to the Gap Orthogonality Problem ($\ORT$)
and proves the following lemma.
\begin{lemma}[~\cite{She11}]
Let $f$ denote the Gap Orthogonality Problem.
For a small enough constant $\delta<1$ and 
for the uniform distribution $\mu$ on inputs, and any rectangle
$R$ such that $\mu(R) > 2^{-\delta n}$,
$\mu(R\cap f^{-1}(1))>\delta \mu(R)$.
Furthermore, $\mu(f^{-1}(-1))=\Theta(1)$.
\end{lemma}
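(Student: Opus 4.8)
The plan is to follow Sherstov~\cite{She11}, reducing both assertions to statements about the bilinear form $\langle x,y\rangle=\sum_i x_iy_i = n-2\,d_H(x,y)$ under the uniform distribution, where $\ORT$ is the promise problem that outputs $-1$ when $|\langle x,y\rangle|$ is small compared to $\sqrt n$ (``orthogonal'') and $+1$ when it is large compared to $\sqrt n$ (``far''), with the thresholds supplied by the $\GHD$-to-$\ORT$ reduction mentioned above. The ``Furthermore'' clause is the easy part: conditioning on $x\in\{-1,1\}^n$, the quantity $\langle x,y\rangle$ is a sum of $n$ independent $\pm1$ variables (mean $0$, variance $n$), so Berry--Esseen gives $\Pr_\mu[\ORT=-1]=\Pr[\,|N(0,1)|\le t\,]\pm O(1/\sqrt n)$ for the relevant threshold constant $t$, which is $\Theta(1)$, and the same argument gives $\Pr_\mu[\ORT=1]=\Theta(1)$.

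The corruption clause is the technical heart. Fix a rectangle $R=A\times B$ with $\mu(R)=\alpha\beta\ge 2^{-\delta n}$, where $\alpha=|A|/2^n$ and $\beta=|B|/2^n$. Letting $X$ be uniform on $A$ and $Y$ uniform on $B$ independently, and $Z=\langle X,Y\rangle$, we must show $\Pr[\,|Z|\ge C\sqrt n\,]>\delta$; equivalently, that the Hamming distance between a random element of $A$ and a random element of $B$ deviates from $n/2$ by $\Omega(\sqrt n)$ with constant probability. I would attack this by a structure-versus-randomness dichotomy on $Z$. In the ``random'' case one shows that, after deleting a small exceptional sub-rectangle, the second moment of $Z$ is $\Omega(n)$ while its fourth moment is $O(n^2)$, so that Paley--Zygmund immediately yields anti-concentration at scale $\sqrt n$. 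In the ``structured'' case the low-degree Fourier mass of $A$ and $B$ (the coefficients $\widehat{A}(\{i,j\})=\mathbb{E}_{X\in A}[X_iX_j]$, etc.) is large and negatively aligned, forcing $\langle X,Y\rangle$ to be near-deterministically close to $0$; here one passes to a set of $\Omega(n)$ coordinates along which $A$ and $B$ are still ``spread out'', recurses on the resulting smaller $\ORT$ instance, and uses the free coordinates to supply an anti-concentrated term. Hypercontractive / level-$k$ inequalities are the tool for bounding how much the densities $\alpha,\beta$ degrade under each restriction, so that the recursion terminates while the hypothesis $\mu(R)\ge 2^{-\delta n}$ is still active; tracking these losses is what pins down the final constant $\delta$.

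The main obstacle, and the reason this is the nontrivial core of the $\GHD$ lower bound rather than a routine estimate, is exactly that $R$ may have exponentially small density $2^{-\delta n}$: transferring any anti-concentration statement from the uniform measure to the restricted measure on $A\times B$ costs a factor $2^{\Theta(\delta n)}$ and is therefore vacuous, so naive Fourier or moment estimates do not suffice. Overcoming this forces one to genuinely exploit the rigidity of large sets --- a subset of the cube of density $\ge 2^{-\delta n}$ with $\delta$ small cannot be concentrated in more than a $\delta$-fraction of ``directions'' --- which is precisely what the dimension-reduction step above encodes, and making that step quantitatively strong enough to keep $\delta$ a positive absolute constant is the delicate part where essentially all of the work lies.
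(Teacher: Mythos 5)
First, a point of reference: the paper does not prove this lemma at all --- it is imported verbatim from Sherstov \cite{She11} --- so your proposal has to be measured against Sherstov's proof, not against anything in this paper. Your treatment of the ``Furthermore'' clause is correct and complete in spirit: under the uniform distribution $\langle x,y\rangle$ is a sum of $n$ independent signs, so the central limit theorem (or Berry--Esseen) gives $\mu(f^{-1}(-1))=\Theta(1)$ and $\mu(f^{-1}(1))=\Theta(1)$ for any constant-times-$\sqrt{n}$ threshold.

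The corruption clause, however, is a plan rather than a proof, and the plan has a genuine gap exactly where you locate it. The Paley--Zygmund step in your ``random'' branch needs $\mathbb{E}[Z^2]=n+\sum_{i\neq j}\mathbb{E}[X_iX_j]\,\mathbb{E}[Y_iY_j]=\Omega(n)$, and the cross terms can in principle cancel the diagonal; your ``structured'' branch is supposed to handle that, but there you only \emph{assert} that large, negatively aligned degree-two Fourier mass forces $Z$ to concentrate near $0$, and that one can then ``recurse on a smaller $\ORT$ instance'' using hypercontractivity to control the density loss. None of this is justified: it is not specified what quantity the recursion decreases, why the restricted sets stay dense enough \emph{relative to the shrinking ambient dimension} for the hypothesis $\mu(R)>2^{-\delta n}$ to remain usable, or why the process terminates while $\delta$ is still a positive absolute constant. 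You concede that ``essentially all of the work lies'' in that step; that step \emph{is} the lemma, so what you have written does not establish it. For the record, Sherstov's actual route is different and avoids the dichotomy altogether: since $|A|\geq 2^{(1-\delta)n}$, a counting argument extracts from $A$ a family of $k=\Omega(n)$ pairwise nearly orthogonal vectors, and an anti-concentration inequality (proved via a concentration bound for the norm of the projection of a uniform Boolean $y$ onto their span) shows that $|\langle x_i,y\rangle|<c\sqrt{n}$ for \emph{all} $k$ of them with probability at most $2^{-\Omega(k)}=2^{-\Omega(n)}$ over uniform $y$, which is negligible compared with the density $2^{-\delta n}$ of $B$; averaging over $y\in B$ then yields $\mu(R\cap f^{-1}(1))>\delta\mu(R)$. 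A moment-based, structure-versus-randomness argument closer to your sketch underlies the earlier proofs of Chakrabarti--Regev \cite{CR10} and Vidick \cite{Vid11}, but even there the ``structured'' branch is a substantial self-contained argument, not a remark.
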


Putting all this together, we have that
$\rect_\eps^{z}(\ORT) \geq C  2^{\delta n}$ for appropriate choices of
$\eps$.

Finally, there is a simple reduction from one instance of $\ORT$ to
two instances of $\GHD$ (see~\cite{She11} for details).  This means
that $[\bprt_{\eps/2}(\GHD)]^2 \geq \bprt_{\eps}(\ORT)$, since given
any zero-communication protocol for $\GHD$ with error $\eps / 2$ and
efficiency $\eta$, one can give a zero-communication protocol for
$\ORT$ in the obvious way, by simply running the protocol for $\GHD$
twice and the reduction to determine the output for $\ORT$.  By a
union bound this incurs error $\eps$, and it has efficiency $\eta^2$ (the
probability that the two independent calls to the $\GHD$ protocol both
do not abort).  Therefore we may conclude that
$[\bprt_{\eps/2}(\GHD)]^2 \geq \bprt_{\eps}(\ORT) \geq
\rect_\eps^{z}(\ORT) \geq C 2^{\delta n}$.

\section{Conclusions and open problems}

We have shown that the information complexity is lower bounded by a relaxed version of the partition bound.  This subsumes all known
algebraic and rectangle based methods, except the partition bound. It
remains to be seen if the partition bound also provides a lower bound on the information complexity. 
Alternatively, if we would like to separate the communication and information complexities, then possible candidates could be functions whose partition bound is strictly larger than their relaxed partition bound.

Moreover, we have seen how the relaxed partition bound naturally relates to zero-communication protocols with abort. Actually, we can relate all other lower bound methods to different variants of zero-communication protocols~\cite{DKLR11, LLR12}. This provides new insight on the inherent differences between these bounds and may lead to new lower bound methods, coming from different versions of zero-communication protocols. Moreover, since these protocols have been extensively studied in the field of quantum information, it is intriguing to see what other powerful tools can be transferred to the model of classical communication complexity.
%Similarly, can the quantum partition bound~\cite{LLR12} be compared to quantum information cost?

%OOOOOOOOOOOOOOOOOOOOOOOOOOOOOOOOOOOOOOOO
%OO PROOF OF COMPRESSION CLAIM
%OOOOOOOOOOOOOOOOOOOOOOOOOOOOOOOOOOOOOOOO

\section{Acknowledgements}

We would like to thank Amit Chakrabarti and Oded Regev for helpful
comments. J.R. acknowledges support from the action \emph{Mandats de
  Retour 2010} of the \emph{Politique Scientifique F\'ed\'erale
  Belge.} This research was funded in part by the EU grant QCS, ANR
Jeune Chercheur CRYQ, ANR Blanc QRAC (ANR-08-EMER-012), and EU ANR
Chist-ERA DIQIP.
\bibliographystyle{alphaurl}
\bibliography{ic-rec}

\newcommand{\etalchar}[1]{$^{#1}$}
\begin{thebibliography}{BYCKO93}

\bibitem[ACC{\etalchar{+}}12]{ACC+12}
A.~Ada, A.~Chattopadhyay, S.~Cook, L.~Fontes, M.~Koucky, and T.~Pitassi.
\newblock The hardness of being private.
\newblock In {\em Proc. 27th CCC}, pages 192--202, 2012.
\newblock \href {http://dx.doi.org/10.1109/CCC.2012.24}
  {\path{doi:10.1109/CCC.2012.24}}.

\bibitem[BBCR10]{BBCR10}
B.~Barak, M.~Braverman, X.~Chen, and A.~Rao.
\newblock How to compress interactive communication.
\newblock In {\em Proc. 42nd STOC}, pages 67--76, 2010.
\newblock \href {http://dx.doi.org/10.1145/1806689.1806701}
  {\path{doi:10.1145/1806689.1806701}}.

\bibitem[BHMR03]{BHMR03}
H.~Buhrman, P.~H{{\o}}yer, S.~Massar, and H.~R{{\"o}}hrig.
\newblock Combinatorics and quantum nonlocality.
\newblock {\em Phys. Rev. Lett.}, 91, 2003.
\newblock \href {http://arxiv.org/abs/quant-ph/0209052}
  {\path{arXiv:quant-ph/0209052}}, \href
  {http://dx.doi.org/10.1103/PhysRevLett.91.047903}
  {\path{doi:10.1103/PhysRevLett.91.047903}}.

\bibitem[BHMR06]{BHMR06}
H.~Buhrman, P.~H{{\o}}yer, S.~Massar, and H.~R{{\"o}}hrig.
\newblock Multipartite nonlocal quantum correlations resistant to
  imperfections.
\newblock {\em Phys. Rev. A}, 73, 2006.
\newblock \href {http://dx.doi.org/10.1103/PhysRevA.73.012321}
  {\path{doi:10.1103/PhysRevA.73.012321}}.

\bibitem[BR11]{BR11}
M.~Braverman and A.~Rao.
\newblock Information equals amortized communication.
\newblock In {\em Proc. 52nd FOCS}, pages 748--757, 2011.
\newblock \href {http://arxiv.org/abs/1106.3595} {\path{arXiv:1106.3595}},
  \href {http://dx.doi.org/10.1109/FOCS.2011.86}
  {\path{doi:10.1109/FOCS.2011.86}}.

\bibitem[Bra12]{Bra11b}
M.~Braverman.
\newblock Interactive information complexity.
\newblock In {\em Proc. 44th STOC}, pages 505--524, 2012.
\newblock URL: \url{http://eccc.hpi-web.de/report/2011/123/}, \href
  {http://dx.doi.org/10.1145/2213977.2214025}
  {\path{doi:10.1145/2213977.2214025}}.

\bibitem[BW12]{BW11}
M.~Braverman and O.~Weinstein.
\newblock A discrepancy lower bound for information complexity.
\newblock In {\em Proc. 16th RANDOM}, pages 459--470, 2012.
\newblock URL: \url{http://eccc.hpi-web.de/report/2011/164/}, \href
  {http://arxiv.org/abs/1112.2000} {\path{arXiv:1112.2000}}, \href
  {http://dx.doi.org/10.1007/978-3-642-32512-0_39}
  {\path{doi:10.1007/978-3-642-32512-0_39}}.

\bibitem[BYCKO93]{BCKO93}
R.~Bar-Yehuda, B.~Chor, E.~Kushilevitz, and A.~Orlitsky.
\newblock Privacy, additional information and communication.
\newblock {\em IEEE Transactions on Information Theory}, 39(6):1930--1943,
  1993.
\newblock \href {http://dx.doi.org/10.1109/18.265501}
  {\path{doi:10.1109/18.265501}}.

\bibitem[BYJKS04]{BYJKS04}
Z.~Bar-Yossef, T.~S. Jayram, R.~Kumar, and D.~Sivakumar.
\newblock An information statistics approach to data stream and communication
  complexity.
\newblock {\em Journal of Computer and System Sciences}, 68(4):702--732, 2004.
\newblock \href {http://dx.doi.org/10.1016/j.jcss.2003.11.006}
  {\path{doi:10.1016/j.jcss.2003.11.006}}.

\bibitem[CKW12]{CKW12}
A.~Chakrabarti, R.~Kondapally, and Z.~Wang.
\newblock {Information Complexity versus Corruption and Applications to
  Orthogonality and Gap-Hamming}.
\newblock In {\em Proc. 16th RANDOM}, pages 483--494, 2012.
\newblock \href {http://arxiv.org/abs/1205.0968} {\path{arXiv:1205.0968}},
  \href {http://dx.doi.org/10.1007/978-3-642-32512-0_41}
  {\path{doi:10.1007/978-3-642-32512-0_41}}.

\bibitem[CR11]{CR10}
A.~Chakrabarti and O.~Regev.
\newblock An optimal lower bound on the communication complexity of
  gap-hamming-distance.
\newblock In {\em Proc. 43rd STOC}, pages 51--60, 2011.
\newblock \href {http://dx.doi.org/10.1145/1993636.1993644}
  {\path{doi:10.1145/1993636.1993644}}.

\bibitem[CSWY01]{CSWY01}
A.~Chakrabarti, Y.~Shi, A.~Wirth, and A.~Yao.
\newblock Informational complexity and the direct sum problem for simultaneous
  message complexity.
\newblock In {\em Proc. 42nd FOCS}, pages 270--278, 2001.
\newblock \href {http://dx.doi.org/10.1109/SFCS.2001.959901}
  {\path{doi:10.1109/SFCS.2001.959901}}.

\bibitem[DKLR11]{DKLR11}
J.~Degorre, M.~Kaplan, S.~Laplante, and J.~Roland.
\newblock The communication complexity of non-signaling distributions.
\newblock {\em Quantum Information and Computation}, 11(7--8):649--676, 2011.
\newblock \href {http://arxiv.org/abs/0804.4859} {\path{arXiv:0804.4859}}.

\bibitem[FJS10]{FJS10}
J.~Feigenbaum, A.~D. Jaggard, and M.~Schapira.
\newblock Approximate privacy: foundations and quantification (extended
  abstract).
\newblock In {\em Proc. 11th ACM EC'10}, pages 167--178, 2010.
\newblock \href {http://dx.doi.org/10.1145/1807342.1807369}
  {\path{doi:10.1145/1807342.1807369}}.

\bibitem[GG99]{GG99}
B.~Gisin and N.~Gisin.
\newblock A local hidden variable model of quantum correlation exploiting the
  detection loophole.
\newblock {\em Phys. Lett. A}, 260:323--327, 1999.
\newblock \href {http://arxiv.org/abs/quant-ph/9905018}
  {\path{arXiv:quant-ph/9905018}}, \href
  {http://dx.doi.org/10.1016/S0375-9601(99)00519-8}
  {\path{doi:10.1016/S0375-9601(99)00519-8}}.

\bibitem[HJMR07]{HJMR07}
P.~Harsha, R.~Jain, D.~McAllester, and J.~Radhakrishnan.
\newblock The communication complexity of correlation.
\newblock In {\em Proc. 22nd CCC}, pages 10--23, 2007.
\newblock \href {http://dx.doi.org/10.1109/CCC.2007.32}
  {\path{doi:10.1109/CCC.2007.32}}.

\bibitem[JK10]{JK10}
R.~Jain and H.~Klauck.
\newblock The partition bound for classical complexity and query complexity.
\newblock In {\em Proc. 25th CCC}, pages 247--258, 2010.
\newblock \href {http://arxiv.org/abs/0910.4266} {\path{arXiv:0910.4266}},
  \href {http://dx.doi.org/10.1109/CCC.2010.31}
  {\path{doi:10.1109/CCC.2010.31}}.

\bibitem[JKS03]{JKS03}
T.~S. Jayram, Ravi Kumar, and D.~Sivakumar.
\newblock Two applications of information complexity.
\newblock In {\em Proc. 35th STOC}, pages 673--682, 2003.
\newblock \href {http://dx.doi.org/10.1145/780542.780640}
  {\path{doi:10.1145/780542.780640}}.

\bibitem[JN10]{JN10}
R.~Jain and A.~Nayak.
\newblock The space complexity of recognizing well-parenthesized expressions in
  the streaming model: the {I}ndex function revisited.
\newblock Technical Report TR10-071, ECCC, 2010.
\newblock URL: \url{http://eccc.hpi-web.de/report/2010/071/}, \href
  {http://arxiv.org/abs/1004.3165} {\path{arXiv:1004.3165}}.

\bibitem[JRS03]{JRS03}
R.~Jain, J.~Radhakrishnan, and P.~Sen.
\newblock A direct sum theorem in communication complexity via message
  compression.
\newblock In {\em Proc. 30th ICALP}, pages 300--315, 2003.
\newblock \href {http://dx.doi.org/10.1007/3-540-45061-0_26}
  {\path{doi:10.1007/3-540-45061-0_26}}.

\bibitem[JRS05]{JRS05}
R.~Jain, J.~Radhakrishnan, and P.~Sen.
\newblock Prior entanglement, message compression and privacy in quantum
  communication.
\newblock In {\em Proc. 20th CCC}, pages 285--296, 2005.
\newblock \href {http://dx.doi.org/10.1109/CCC.2005.24}
  {\path{doi:10.1109/CCC.2005.24}}.

\bibitem[Kla02]{Kla04}
H.~Klauck.
\newblock On quantum and approximate privacy.
\newblock In {\em Proc. 19th STACS}, volume 2285, pages 735--735, 2002.
\newblock \href {http://dx.doi.org/10.1007/3-540-45841-7_27}
  {\path{doi:10.1007/3-540-45841-7_27}}.

\bibitem[KR11]{KR11}
B.~Klartag and O.~Regev.
\newblock Quantum one-way communication can be exponentially stronger than
  classical communication.
\newblock In {\em Proc. 43rd STOC}, pages 31--40, 2011.
\newblock \href {http://arxiv.org/abs/1009.3640} {\path{arXiv:1009.3640}},
  \href {http://dx.doi.org/10.1145/1993636.1993642}
  {\path{doi:10.1145/1993636.1993642}}.

\bibitem[LLR12]{LLR12}
S.~Laplante, V.~Lerays, and J.~Roland.
\newblock Classical and quantum partition bound and detector inefficiency.
\newblock In {\em Proc. 39th ICALP}, pages 617--628, 2012.
\newblock \href {http://arxiv.org/abs/1203.4155} {\path{arXiv:1203.4155}},
  \href {http://dx.doi.org/10.1007/978-3-642-31594-7_52}
  {\path{doi:10.1007/978-3-642-31594-7_52}}.

\bibitem[LS09a]{LS09b}
T.~Lee and A.~Shraibman.
\newblock Lower bounds in communication complexity.
\newblock {\em Foundations and Trends in Theoretical Computer Science},
  3(4):263--399, 2009.
\newblock \href {http://dx.doi.org/10.1561/0400000040}
  {\path{doi:10.1561/0400000040}}.

\bibitem[LS09b]{LS09}
N.~Linial and A.~Shraibman.
\newblock Lower bounds in communication complexity based on factorization
  norms.
\newblock {\em Random Structures and Algorithms}, 34(3):368--394, 2009.
\newblock \href {http://dx.doi.org/10.1002/rsa.20232}
  {\path{doi:10.1002/rsa.20232}}.

\bibitem[Mas02]{M01}
S.~Massar.
\newblock Non locality, closing the detection loophole and communication
  complexity.
\newblock {\em Phys. Rev. A}, 65, 2002.
\newblock \href {http://arxiv.org/abs/quant-ph/0109008}
  {\path{arXiv:quant-ph/0109008}}, \href
  {http://dx.doi.org/10.1103/PhysRevA.65.032121}
  {\path{doi:10.1103/PhysRevA.65.032121}}.

\bibitem[She12]{She11}
A.~Sherstov.
\newblock The communication complexity of {G}ap {H}amming {D}istance.
\newblock {\em Theory of Computing}, 8(8):197--208, 2012.
\newblock URL: \url{http://theoryofcomputing.org/articles/v008a008/}.

\bibitem[Vid12]{Vid11}
T.~Vidick.
\newblock A concentration inequality for the overlap of a vector on a large set
  with application to the communication complexity of the
  {G}ap-{H}amming-{D}istance problem.
\newblock {\em Chicago Journal of Theoretical Computer Science}, 2012:1--12,
  2012.
\newblock URL: \url{http://eccc.hpi-web.de/report/2011/051/}, \href
  {http://dx.doi.org/10.4086/cjtcs.2012.001}
  {\path{doi:10.4086/cjtcs.2012.001}}.

\end{thebibliography}

\appendix

%%%%%%%%%%%%%%%%%%%%%%%%%%%%%%%%%%%%%%%%%%%%%%%%%%%%%%%
%% APPENDIX
%%%%%%%%%%%%%%%%%%%%%%%%%%%%%%%%%%%%%%%%%%%%%%%%%%%%%%%

\section{Relaxed partition bound and other bounds}\label{app:bprt}
Recall the definition of the partition bound $\prt_\eps(f)$ in~\cite{JK10}:
\begin{definition}[\cite{JK10}]
The partition bound $\prt_\eps(f)$ is defined as the value of the following linear program:
%  \begin{enumerate}
%  \item Primal form:
\begin{align*}
\prt_\eps(f)=\min_{w_{R,z}\geq 0}  \sum_{R,z} w_{R,z} \quad \text{ subject to:}\quad
 &\forall (x,y)\in \setin,\quad \sum_{R:(x,y)\in R} w_{R,f(x,y)} \geq 1-\eps\quad \nonumber\\
 &\forall (x,y)\in\calX\times\calY,\quad \sum_{z,R:(x,y)\in R} w_{R,z} =1.
\end{align*}
%  \item Dual form:
% \begin{align}
% \prt_\eps(f)=\max_{v_{x,y}\geq 0, w_{x,y}} & \sum_{(x, y)\in \setin} (1 - \eps) v_{x, y} -
%   \sum_{(x, y)} w_{x,y} \text{ subject
%     to:} \nonumber\\
% \forall R,b, & \quad \sum_{(x, y) \in R \cap f^{-1}(b)}
% v_{x,y} - \sum_{(x, y) \in R}w_{x,y} \leq 1
% \end{align}
% \end{enumerate}
\end{definition}

(To put this into a form similar to \autoref{def:relaxed-prt}, simply
substitute $\frac{1}{\eta} = \sum_{R,z} w_{R,z}$ and $p_{R,z} =
w_{R,z} \eta$.)

We first show that $\bprt_\eps(f)$ is indeed a relaxation of the partition bound by providing linear programming formulations for $\bprt_\eps^\mu(f)$ and $\bprt_\eps(f)$.
\begin{claim}\label{claim:bprt-lp}
$\bprt_\eps^\mu(f)$ can be expressed as the following linear programs:
\begin{enumerate}
 \item Primal form:
\begin{align*}
\bprt_\eps^\mu(f)=\min_{w_{R,z}\geq 0} & \sum_{R,z} w_{R,z} \quad \text{ subject to:}
\nonumber\\
 & \sum_{(x,y)\in \setin}\mu_{x,y}\sum_{R:(x,y)\in R} w_{R,f(x,y)}+\sum_{(x,y)\notin \setin}\mu_{x,y}\sum_{z,R:(x,y)\in R} w_{R,z} \geq 1-\eps \nonumber\\
 &\forall (x,y)\in\calX\times\calY,\quad \sum_{z,R:(x,y)\in R} w_{R,z} \leq 1,
\end{align*}
 \item Dual form:
\begin{align*}
\bprt_\eps^\mu(f)=\max_{\alpha\geq 0, \beta_{x,y}\geq 0} & (1 - \eps)\alpha -
  \sum_{(x, y)\in\calX\times\calY} \beta_{x,y} \text{ subject
    to:} \nonumber\\
\forall R,z, & \quad \sum_{(x, y) \in R \cap f^{-1}(z)}
\alpha\mu_{x,y} + \sum_{(x, y) \in R \setminus \setin}
\alpha\mu_{x,y} - \sum_{(x, y) \in R}\beta_{x,y} \leq 1.
\end{align*}
\end{enumerate}
Similarly, $\bprt_\eps(f)$ is given by the value of the following linear programs:
\begin{enumerate}
 \item Primal form:
\begin{align*}
\bprt_\eps(f)=\min_{w_{R,z}\geq 0}  \sum_{R,z} w_{R,z} \quad \text{ subject to:}\quad
 &\forall (x,y)\in \setin,\quad \sum_{R:(x,y)\in R} w_{R,f(x,y)} \geq 1-\eps \nonumber\\
 &\forall (x,y)\notin \setin,\quad \sum_{z,R:(x,y)\in R} w_{R,z} \geq 1-\eps \nonumber\\
 &\forall (x,y)\in\calX\times\calY,\quad \sum_{z,R:(x,y)\in R} w_{R,z} \leq 1,
\end{align*}
 \item Dual form:
\begin{align*}
\bprt_\eps(f)=\max_{\alpha_{x,y}\geq 0, \beta_{x,y}\geq 0} & \sum_{(x, y)\in\calX\times\calY} (1 - \eps) \alpha_{x,y} -
  \sum_{(x, y)\in\calX\times\calY} \beta_{x,y} \text{ subject
    to:} \nonumber\\
\forall R,z, & \quad \sum_{(x, y) \in R \cap f^{-1}(z)} \alpha_{x,y} + \sum_{(x, y) \in R \setminus \setin}
\alpha_{x,y} - \sum_{(x, y) \in R}\beta_{x,y} \leq 1.
\end{align*}
\end{enumerate}
\end{claim}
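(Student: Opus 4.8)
The plan is to prove Claim~\ref{claim:bprt-lp} in two stages: first the primal and dual linear programs for the $\mu$-dependent bound $\bprt_\eps^\mu(f)$, and then those for $\bprt_\eps(f)=\max_\mu\bprt_\eps^\mu(f)$, obtained by reparametrizing the dual of the former and dualizing once more. For the primal form of $\bprt_\eps^\mu(f)$ I would start from the program in Definition~\ref{def:relaxed-prt} and apply the substitution indicated there, $w_{R,z}=p_{R,z}/\eta$: under it the objective $1/\eta$ becomes $\sum_{R,z}w_{R,z}$, constraint~\eqref{eq:partsecond} becomes $\sum_{z,R:(x,y)\in R}w_{R,z}\leq 1$, and constraint~\eqref{eq:partfirst}, divided by $\eta$, becomes the first constraint of the claimed primal form. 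Conversely, from any feasible $(w_{R,z})$ — for which $\sum_{R,z}w_{R,z}>0$, since the first constraint is $\geq 1-\eps>0$ — one recovers a feasible point of Definition~\ref{def:relaxed-prt} by $\eta=1/\sum_{R,z}w_{R,z}$ and $p_{R,z}=\eta w_{R,z}$, which automatically satisfies the normalization~\eqref{eq:prt-operational}. Hence the two programs have equal value.

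The dual form of $\bprt_\eps^\mu(f)$ is then simply the LP dual of this primal: constraint~\eqref{eq:partfirst}, written as a lower bound, contributes one nonnegative multiplier $\alpha$ with objective coefficient $1-\eps$; each constraint~\eqref{eq:partsecond}, being an upper bound, contributes a nonnegative multiplier $\beta_{x,y}$ entering the objective with a minus sign; and reading off the column of each variable $w_{R,z}$ — it appears with weight $\mu_{x,y}$ in the first constraint exactly when $(x,y)\in R$ and either $(x,y)\in f^{-1}(z)$ or $(x,y)\notin\setin$, and with weight $1$ in the $(x,y)$-th constraint exactly when $(x,y)\in R$ — produces precisely the stated dual constraint. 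Strong duality applies because this is a finite, feasible LP with nonnegative objective; a concrete feasible point is the singleton-rectangle solution $p_{\{x\}\times\{y\},f(x,y)}=1/N$ for $(x,y)\in\setin$ and $p_{\{x\}\times\{y\},z_0}=1/N$ for $(x,y)\notin\setin$, with $\eta=1/N$, where $N=|\calX|\cdot|\calY|$ and $z_0$ is any fixed output.

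To obtain $\bprt_\eps(f)$ I would work entirely from the dual form just derived, writing $\bprt_\eps(f)=\sup_\mu\,\max_{\alpha\geq 0,\,\beta\geq 0}\bigl[(1-\eps)\alpha-\sum_{x,y}\beta_{x,y}\bigr]$ subject to the dual constraints of $\bprt_\eps^\mu(f)$, and then substituting $\alpha_{x,y}:=\alpha\,\mu_{x,y}$. As $(\mu,\alpha)$ ranges over all pairs with $\mu$ a probability distribution and $\alpha\geq 0$, the vector $(\alpha_{x,y})$ ranges over all of $\mathbb{R}_{\geq 0}^{\calX\times\calY}$ (given a nonnegative vector $a$, take $\alpha=\sum_{x,y}a_{x,y}$ and $\mu_{x,y}=a_{x,y}/\alpha$, or $\alpha=0$ if $a=0$), and $(1-\eps)\alpha=(1-\eps)\sum_{x,y}\alpha_{x,y}$, so the nested maximizations collapse into the single LP over $(\alpha_{x,y},\beta_{x,y})\geq 0$ that is exactly the claimed dual form of $\bprt_\eps(f)$; no minimax exchange is needed, since both levels are maximizations. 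Dualizing this LP once more — its constraints are indexed by pairs $(R,z)$ with multipliers $w_{R,z}\geq 0$; the $\alpha_{x,y}$ column forces $\sum_{(R,z):(x,y)\in R,\ ((x,y)\in f^{-1}(z)\,\vee\,(x,y)\notin\setin)}w_{R,z}\geq 1-\eps$, which splits into the first two constraints of the claim according to whether $(x,y)\in\setin$; the $\beta_{x,y}$ column forces $\sum_{z,R:(x,y)\in R}w_{R,z}\leq 1$ — and this yields exactly the claimed primal form of $\bprt_\eps(f)$.

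Comparing this primal form with the definition of $\prt_\eps(f)$ exhibits $\bprt_\eps$ as a relaxation: any solution feasible for the partition-bound LP satisfies $\sum_{z,R:(x,y)\in R}w_{R,z}=1$ for every $(x,y)$, hence both $\leq 1$ and $\geq 1-\eps$, and it already meets the correctness constraint on valid inputs, so it is feasible for the $\bprt_\eps(f)$ program with the same objective; thus $\bprt_\eps(f)\leq\prt_\eps(f)$, the right-hand inequality of Lemma~\ref{claim:srec-bprt-prt}. I expect the only delicate points to be the two rounds of LP duality — keeping straight which program is the minimization, and getting the signs of the $\beta$-columns and of the multipliers right — together with verifying that the map $(\mu,\alpha)\mapsto(\alpha\mu_{x,y})_{(x,y)}$ is onto $\mathbb{R}_{\geq 0}^{\calX\times\calY}$; everything else is routine bookkeeping.
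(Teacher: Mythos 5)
Your proposal is correct and follows exactly the paper's route: the substitution $w_{R,z}=p_{R,z}/\eta$ for the primal of $\bprt_\eps^\mu(f)$, standard LP duality for its dual, the reparametrization $\alpha_{x,y}=\alpha\mu_{x,y}$ (absorbing the outer maximization over $\mu$) for the dual of $\bprt_\eps(f)$, and one more dualization for its primal. The paper states these steps without elaboration, so your additional checks (surjectivity of $(\mu,\alpha)\mapsto(\alpha\mu_{x,y})$, feasibility for strong duality) are just welcome bookkeeping, not a different argument; the closing paragraph on $\prt_\eps(f)$ belongs to \autoref{claim:srec-bprt-prt} rather than to this claim.
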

\begin{proof}
The primal form of $\bprt_\eps^\mu(f)$ can be obtained from \autoref{def:relaxed-prt} by using the change of variables $w_{R,z}=p_{R,z}/\eta$. The dual form then follows from standard linear programming duality.

As for $\bprt_\eps(f)$, by \autoref{def:relaxed-prt} we have $\bprt_\eps(f)=\max_\mu\bprt_\eps^\mu(f)$. The dual form of $\bprt_\eps(f)$ then follows from the dual form of $\bprt_\eps^\mu$ via the change of variables $\alpha_{x,y}=\alpha\mu_{x,y}$. Finally, the primal form can be obtained via standard linear programming duality.
\end{proof}
Let us compare the primal form of $\bprt_\eps(f)$ with the definition of the partition bound $\prt_\eps(f)$. We see that the first two constraints in $\bprt_\eps(f)$ imply that $\sum_{z,R:(x,y)\in R} w_{R,z}$ lies between $1-\eps$ and $1$, while for $\prt_\eps(f)$, this should be exactly equal to $1$.  In terms of zero-communication protocols, this difference can be interpreted as follows: for $\prt_\eps(f)$, the protocol should output anything but $\bot$ with constant probability $\eta$ for any input $(x,y)$, while for $\bprt_\eps(f)$, the probability of not outputting $\bot$ is allowed to fluctuate between $(1-\eps)\eta$ and $\eta$.

Just as the partition bound, the relaxed partition bound is stronger than the smooth rectangle bound, defined in~\cite{JK10} as follows:
\begin{definition}[\cite{JK10}] The smooth rectangle bound $\srec_\eps^{z_0}(f)$ is the value of the following linear program:
\begin{align*}
\srec_\eps^{z_0}(f)=\min_{w'_{R}\geq 0}  \sum_{R} w'_{R} \quad \text{ subject to:}\quad
 &\forall (x,y)\in f^{-1}(z_0),\quad \sum_{R:(x,y)\in R} w'_{R} \geq 1-\eps \nonumber\\
 &\forall (x,y)\in f^{-1}(z_0),\quad \sum_{R:(x,y)\in R} w'_{R} \leq 1 \nonumber\\
 &\forall (x,y)\in \setin\setminus f^{-1}(z_0),\quad \sum_{R:(x,y)\in R} w'_{R} \leq \eps.
\end{align*}
\end{definition}

Let us now prove \autoref{claim:srec-bprt-prt}, that is,  $\srec^{z_0}_\eps(f)\leq \bprt_\eps(f)\leq\prt_\eps(f)$.

\begin{proof}[Proof of \autoref{claim:srec-bprt-prt}]
 The second inequality is immediate since the only difference between the linear programs is that the constraint $\sum_{z,R:(x,y)\in R} w_{R,z} =1$ in the primal form of $\prt_\eps(f)$ has been relaxed to $1-\eps\leq\sum_{z,R:(x,y)\in R} w_{R,z} \leq 1$.

As for the first inequality, let $w_{R,z}$ be an optimal solution for the primal formulation of $\bprt_\eps(f)$ in \autoref{claim:bprt-lp}. Then, it is straightforward to check that setting $w'_R=w_{R,z_0}$ leads to a feasible point for $\srec_\eps^{z_0}(f)$, with objective value $\sum_{R} w'_{R}=\sum_R w_{R,z_0}\leq \sum_{R,z} w_{R,z}=\bprt_\eps(f)$.
\end{proof}

\end{document}